\documentclass[a4paper,aps,prd,twocolumn,tightenlines,preprintnumbers,nofootinbib,showkeys,superscriptaddress,longbibliography,notitlepage]{revtex4-1}
\pdfoutput=1

\usepackage[dvipsnames]{xcolor}
\usepackage[export]{adjustbox}
\usepackage{tikz,contour} 
\usetikzlibrary{shapes,arrows,positioning,automata,backgrounds,calc,er,patterns}
\usepackage{fullpage}
\usepackage{relsize}
\usepackage[left=1.6cm,right=1.6cm,top=1.6cm,bottom=1.65cm]{geometry}
\usepackage{multirow}
\usepackage{booktabs}
\usepackage{XCharter}
\usepackage{mathptmx}
\usepackage[T1]{fontenc}
\usepackage{pifont}
\usepackage{fix-cm}
\usepackage{amsfonts,amsmath,amsthm,amssymb,amsbsy,bm,mathtools,latexsym,esvect}
\usepackage{cancel}
\usepackage{enumitem}
\usepackage{hyperref}
\hypersetup{colorlinks=true,citecolor=red,linkcolor=NavyBlue,urlcolor=NavyBlue}
\usepackage[caption=false,labelformat=simple]{subfig}
 
\usepackage{natbib}
\usepackage{physics} 
\usepackage{graphicx}

\newtheorem{theorem}{Theorem}

\begin{document}
\title{PINNing the pion: conformal deep learning for $F_\pi(s)$ and the $(g-2)_\mu$ hadronic contribution}
%\title{PINNing the pion: A first-principles approach to the pion form factor and the $(g-2)_\mu$ tension}

\author{Mayank Goel}
\email{mayank.goel447@gmail.com}
\affiliation{Center for Computational Natural Sciences and Bioinformatics, International Institute of Information Technology, Hyderabad 500 032, India}

\author{Subhadip Mitra}
\email{subhadip.mitra@iiit.ac.in}
\affiliation{Center for Computational Natural Sciences and Bioinformatics, International Institute of Information Technology, Hyderabad 500 032, India}
\affiliation{Center for Quantum Science and Technology, International Institute of Information Technology, Hyderabad 500 032, India}

\author{Monalisa Patra}
\email{monalisa.p@ihub-data.iiit.ac.in}
\affiliation{iHub-Data, International Institute of Information Technology, Hyderabad 500 032, India}

%%%%%%%%%%%%%%%%%%%%%%%%%%%%%%%%%%%%%%%%%%%%%%%%%%%%%%%
\begin{abstract}
\noindent
Extracting the pion electromagnetic form factor $F_{\pi}(s)$ through phenomenological curve-fitting models introduces model dependence, unphysical artefacts, and kinematic inconsistencies. We introduce a Physics-Informed Neural Network (PINN) embedded in a conformal $z$-plane that constructs $F_{\pi}(s)$ directly from first principles across spacelike and timelike domains: charge normalisation and Schwarz reflection are enforced by construction, while Cauchy-Riemann analyticity, dispersion relations, Watson's theorem, and perturbative QCD asymptotics enter through the loss functional. Thus, the fundamental S-matrix principles dictate the form factor's behaviour while data act as constraints. Mapping the cut complex plane onto the unit disk bounds the Hessian norm and prevents Neural Tangent Kernel spectral starvation, two known failure modes of deep-learning optimisation. Besides $e^+e^-$ scattering data, we also incorporate $\tau$-decay data through a switch that isolates the pure isovector form factor natively, bypassing model-dependent isospin-breaking pre-corrections. The network organically yields an interior zero-free form factor, while the framework tests experimental tensions around the $\rho(770)$ peak against analyticity and dispersion constraints. We obtain model-independent estimates of the pion charge radius, $\langle r_{\pi}^2 \rangle = 0.435 \pm 0.008_{\text{stat}} \pm 0.007_{\text{cali}}$ fm$^2$, the second-sheet pole parameters, $m_{\rho}^{\text{pole}} = 761.72\pm 1.04$ MeV and $\Gamma_{\rho}^{\text{pole}} = 135.99 \pm 1.20$ MeV, and the two-pion contribution to the muon anomalous magnetic moment, $a_{\mu}^{\pi\pi} = (506.48 \pm 2.02_{\text{stat}} \pm 1.70_{\text{cali}}) \times 10^{-10}$.
\end{abstract}
%%%%%%%%%%%%%%%%%%%%%%%%%%%%%%%%%%%%%%%%%%%%%%%%%%%%%%%

\maketitle
%%%%%%%%%%%%%%%%%%%%%%%%%%%%%%%%%%%%%%%%%%%%%%%%%%%%%%%
\section{Introduction}
\noindent
The pion electromagnetic form factor $F_\pi(q^2)$ is a fundamental quantity in hadron physics. It maps the spatial distribution of valence quarks within the lightest hadron. As the pion is the pseudo-Goldstone boson of broken chiral symmetry, its structure provides a direct window into the transition between perturbative QCD at high energies and the QCD vacuum. It also sits at the heart of the muon $g-2$ anomaly because the dominant two-pion hadronic vacuum polarisation (HVP) contribution to $a_{\mu}=(g-2)_\mu/2$ is dictated by an integral over $|F_{\pi}(q^{2})|^{2}$. The historical $3-4\sigma$ tension with the Standard Model (SM) is fundamentally tied to uncertainties in this form factor~\cite{Muong-2:2006rrc, Muong-2:2021ojo}. However, recent high-precision lattice QCD calculations~\cite{Borsanyi:2020mff} and the CMD-3 cross-section measurement~\cite{CMD-3:2023rfe} have upended this landscape by suggesting a higher HVP value that largely defuses the $g-2$ discrepancy. But in the process, they have triggered a new crisis: a contradiction between theory and the traditional data-driven consensus. Resolving this internal HVP puzzle now hinges entirely on how reliably $F_{\pi}(q^{2})$ can be continuously mapped across both the spacelike and timelike domains.

Several theoretical approaches to study $F_\pi(q^2)$ exist in the literature, but these usually struggle to bridge the domains while satisfying all constraints simultaneously. Chiral perturbation theory (ChPT) is precise near $q^{2} = 0$~\cite{Gasser:1983yg, *Gasser:1984gg} but breaks down approaching the $\rho(770)$ resonance region because it lacks the vector meson degrees of freedom. Vector meson dominance (VMD) models~\cite{Gounaris:1968mw} capture this resonance peak but violate the $1/q^{2}$ power-law scaling at high energies required by perturbative quantum chromodynamics (pQCD) without any connection to underlying quark-gluon dynamics. Dispersive frameworks based on the Omn\`es representation~\cite{Omnes:1958hv} rely on experimental inputs for $\pi\pi$ scattering phase shifts obtained via the Roy equations~\cite{Roy:1971tc, Colangelo:2001df}. These methods accumulate systematic uncertainties, as Leplumey and Stoffer recently identified, depending on how they treat complex zeros in the form factor~\cite{Leplumey:2025kvv}. Imposing a zero-free condition reduces these systematic errors but rigidifies the model, exposing severe contradictions between the CMD-3~\cite{CMD-3:2023rfe} and BaBar~\cite{BaBar:2009wpw, BaBar:2012bdw} datasets. Lattice QCD offers a first-principles alternative but remains confined to the Euclidean region, where continuing these numerical results to Minkowski space remains limited by the incompleteness problem~\cite{RuizArriola:2026qiw}.

Traditional fitting procedures assume some physics-motivated functional form of the form factor and thus suffer from modelling bias. Modern machine learning and data-driven interpolations~\cite{BuarqueFranzosi:2021wrv} can bypass explicit model parameterisation, but they do not guarantee global S-matrix consistency and are susceptible to unphysical numerical artefacts. Moreover, as we demonstrate here, standard neural network fits face a structural obstruction: training directly with the physical momentum-transfer variable $s$ ($\equiv q^{2}$) causes gradient descent to fail beyond a region. The unbounded kinematic domain forces the minimum eigenvalue of the neural tangent kernel (NTK), the functional matrix governing the optimisation dynamics in the network's parameter space, to collapse to zero. This structural pathology causes spectral starvation, meaning the network fails to learn the form factor in the high-$q^{2}$ region, rendering the asymptotic high-energy constraints unlearnable regardless of the network architecture or training time.

The resolution of this inherent obstruction reveals a correspondence between S-matrix theory and deep learning optimisation. The conformal mapping $z(s)$ used in dispersive physics (see, e.g., ~\cite{Caprini:1999ws}) to compactify the cut complex plane and enforce analyticity matches the transformation required to stabilise gradient descent and preserve the NTK expressivity. Compactifying the physical domain into the unit disk simultaneously satisfies the topological demands of the S-matrix and eliminates the coordinate-induced gradient starvation in the optimisation space. The same geometry addresses the physics and the optimisation.

We use this geometric transformation to build a physics-informed neural network (PINN) that operates natively in the conformal $\mathcal{Z}$ plane and constructs the form factor from the data by relying on the \emph{S-matrix principles} rather than simple data fitting. Our architecture structurally enforces normalisation and the Schwarz reflection principle, while loss functions enforce analyticity, the dispersion relation, Watson's theorem, and perturbative QCD asymptotics. We don't fix any specific functional form for the imaginary part $\text{Im}\,F_{\pi}$. When trained on spacelike and timelike data simultaneously, the reconstructed form factor emerges free of complex zeros in the interior. Because this zero-free structure is organic rather than imposed, the framework retains the flexibility to accommodate tensions in the data. The result is a stable, model-independent reconstruction of $F_{\pi}(q^{2})$ across all kinematic regions, providing independent evidence for a zero-free form factor. From the numerical form factor, we estimate the pion charge radius, unphysical second-sheet resonance pole parameters, and the two-pion HVP contribution to $a_\mu$. 

The rest of this paper is organised as follows. Section~\ref{sec:theory} presents the theoretical background and the fundamental principles governing the form factor. Section~\ref{sec:nn} establishes the conformal $z$ mapping, outlines the PINN architecture, including the details of the composite loss functional that enforces physics constraints, and explains how conformal preconditioning resolves deep-learning optimisation pathologies. Section~\ref{sec:expt-data} describes the experimental datasets covering spacelike and timelike channels used to train and constrain the model. In Section~\ref{sec:results}, we present our primary results, including the extracted pion form factor, the network's role as an analytical arbiter for dataset tensions around the $\rho(770)$ peak, and precise extractions of the pion charge radius, resonance pole parameters, and $a_{\mu}^{\pi\pi}$. Section~\ref{sec:pade} provides a compact conformal Pad\'e approximant that compresses the continuous network prediction into a usable closed-form expression for phenomenological applications. Finally, Section~\ref{sec:discussion} summarises our findings, discusses the framework's main limitations, and outlines potential future directions.

%%%%%%%%%%%%%%%%%%%%%%%%%%%%%%%%%%%%%%%%%%%%%%%%%%%%%%%
\section{The theoretical background} \label{sec:theory}
%%%%%%%%%%%%%%%%%%%%%%%%%%%%%%%%%%%%%%%%%%%%%%%%%%%%%%%
\subsection{The low-energy expansion}
\noindent
The pion form factor can be formally defined via the matrix element of the electromagnetic current operator $J^\mu_{\text{em}}$ evaluated between charged pion states:
\begin{equation}
\langle \pi^+(p') | J^\mu_{\text{em}}(0) | \pi^+(p) \rangle = (p + p')^\mu F_\pi(s),
\end{equation}
where $s \equiv q^2 = (p' - p)^2$ is the squared momentum transfer.  The conservation of the vector current associated with the unbroken $SU(2)_V$ isospin subgroup of the global chiral symmetry demands the charge normalisation $F_{\pi}(0) = 1$. The low-energy expansion of the form factor serves as a fundamental probe of the QCD vacuum:
\begin{align}
    F_{\pi}(s) = 1 + \frac{1}{3!}\langle r_{\pi}^2 \rangle s + \frac{1}{5!}\langle r_{\pi}^4 \rangle s^2 + \frac{1}{7!}\langle r_{\pi}^6 \rangle s^3 + \mathcal{O}(s^4),\label{eq:charge_radius}    
\end{align}
where $\langle r_{\pi}^2 \rangle$ is the charge radius of the pion, and the rest captures the higher-order curvature. Any physically admissible parameterisation must reproduce both the unit charge normalisation and the observed slope at the origin.

%%%%%%%%%%%%%%%%%%%%%%%%%%%%%%%%%%%%%%%%%%%%%%%%%%%%%%%
\subsection{Analyticity and the S matrix}
\noindent
Causality and the analytic structure of the S matrix mandate that $F_\pi(s)$ is a holomorphic function throughout the complex $s$ plane except for the branch cuts and branch points. Because physical signals cannot propagate outside the forward light cone, the commutators of local currents vanish for spacelike separations. In momentum space, this microcausality demands that the singularities of the form factor are strictly confined to regions corresponding to physical intermediate hadronic states. The lightest such state in the isovector vector channel is the two-pion state, producing a branch cut along the positive real axis beginning at $s_{\mathrm{th}} = 4m_\pi^2$. 

The Schwarz reflection principle, $F_\pi(s^*) = F_\pi^*(s)$ follows from analyticity. It implies that \(F_\pi(s)\) is real on the real axis wherever the function is analytic, in particular in the spacelike region and the unphysical region $0<s<4m_\pi^2$. On the physical cut, the upper and lower rim values are related by
\begin{equation}
    F_\pi(s-i0)=F_\pi^*(s+i0),
    \quad s>4m_\pi^2,
\end{equation}
and the discontinuity is therefore
\begin{equation}
    \operatorname{Disc}F_\pi(s) = F_\pi(s+i0)-F_\pi(s-i0) = 2i\,\operatorname{Im}F_\pi(s+i0).
\end{equation}
Therefore, the absorptive part of the form factor completely determines the discontinuity across the physical branch cut. 

%%%%%%%%%%%%%%%%%%%%%%%%%%%%%%%%%%%%%%%%%%%%%%%%%%%%%%%
\subsection{Isospin breaking, the inelastic threshold, and Watson's theorem}
\noindent
Below the inelastic four-pion threshold ($s_{\rm th}^{\rm inelastic} = 16m_\pi^2\approx 0.31$ GeV$^2$), the final state in $e^+e^-$ scattering is dominated by the elastic $\pi\pi$ channel. Since the two-pion state is an isovector state, unitarity implies that Watson's final-state interaction theorem is applicable in this region. The theorem fixes the phase of the pion form factor to the isospin-$1$, $P$-wave $\pi\pi$ scattering phase shift,
\begin{equation}
    \arg \left[F^{I=1}_\pi(s+i0)\right] = \delta_1^1(s), \quad 4m_\pi^2 < s < s_{\rm th}^{\mathrm{inelastic}}.\label{eq:watson_phase}
\end{equation}
Equivalently, for isovector dominance,
\begin{equation}
    \operatorname{Im}F_\pi(s) = \operatorname{Re}F_\pi(s)\tan\delta_1^1(s).\label{eq:watson_relation}
\end{equation}
This condition strongly constrains the phase of $F_\pi(s)$. 

In reality, isovector dominance continues much beyond $16m_\pi^2$. In particular, the rapid increase of $\delta_1^1(s)$ through approximately $\pi/2$ in the $\rho(770)$ region forces the form factor to exhibit the characteristic resonant enhancement observed in the timelike cross section. Beyond the $\rho$ resonance, electromagnetic interactions induce small isospin-breaking mixing between the $I=1$ ($\rho$) and $I=0$ (mostly $\omega$, but also the heavier $\phi$) states that manifests as minor (compared to the isovector resonance) dip-bump interference patterns in the timelike cross-section data. In the $e^+e^- \to \pi^+\pi^-$ process, this isospin breaking is quantified by $\rho$--$\omega$ mixing. The physical electromagnetic form factor is parameterised as
\begin{equation}
    F_\pi^{e^+e^-}(s) = F_\pi^{I=1}(s)\left(1 + \frac{\alpha_{\rho\text{-}\omega}\, e^{i\phi_\omega} m_\omega^2}{m_\omega^2 - s - im_\omega\Gamma_\omega}\right),\label{eq:mixing}
\end{equation}
where $\alpha_{\rho\text{-}\omega}$ controls the mixing strength and $\phi_\omega \approx 1.74$ rad is the Orsay phase, producing the characteristic interference near $s \approx m_\omega^2$. In contrast, $\tau$-decay data measure $F_\pi^{I=1}(s)$ directly, free of this contamination up to calculable isospin-breaking corrections. 

The first physically dominant inelastic channel breaking the elastic phase-locking is $\omega\pi^0$ production. This shifts the effective phenomenological inelastic threshold to $s_{\rm eff.}^{\rm inelastic} = (m_{\omega} + m_{\pi})^2 \approx 0.84\text{ GeV}^2$~\cite{Ananthanarayan:2018nyx}. However, since even this effect is small, the effective inelastic threshold is generally set at the $K\overline K$ threshold at $4m_K^2\approx 1$ GeV$^2$~\cite{Ananthanarayan:2000ht}. We therefore consider $s_{\rm eff.}^{\rm inelastic} \approx 1$ GeV$^2$ as the upper boundary of the elastic Watson constraint in Eq.~\eqref{eq:watson_relation} in the loss functional of our neural network.

%%%%%%%%%%%%%%%%%%%%%%%%%%%%%%%%%%%%%%%%%%%%%%%%%%%%%%%
\subsection{Asymptotic pQCD scaling}
\noindent
At large spacelike momentum transfers, perturbative QCD (pQCD) imposes an absolute asymptotic boundary. In the deep Euclidean region ($s \to -\infty$), the virtual photon resolves the pion's partonic structure. Here, the Brodsky-Farrar quark-counting rules~\cite{Brodsky:1973kr,*Brodsky:1974vy} dictate that the form factor must decouple, scaling as $F_\pi(s) \sim 1/s$ up to logarithmic corrections from the running strong coupling and the pion distribution amplitude~\cite{Melic:1998qr,*Melic:1999mx}:
\begin{equation}
    F_\pi(-Q^2) \sim \frac{\mathcal A}{Q^2}, \quad Q^2=-s \to \infty, \label{eq:pqcd_ff}
\end{equation}
where
\begin{align*}
    \mathcal A \approx 8\pi f_{\pi}^2 \alpha_S(\mu_R^2) \Bigg[ 1 + \frac{\alpha_S(\mu_R^2)}{\pi} \Big( \frac{\beta_0}{4} \ln\left(\frac{\mu_R^2}{Q^2}\right) 
 + 6.41 \Big) \Bigg],
\end{align*}
at the next-to-leading order (NLO) in pQCD with $\mu_R^2\approx Q^2/21$ being the renormalisation scale (see Appendix~\ref{sec:nlo}), and  $f_\pi=0.131$ GeV is the pion decay constant. This condition forbids any parametrisation that grows unboundedly at high energies, ensuring consistency with the short-distance limits of QCD.\bigskip

\noindent
Traditionally, accommodating all of unit normalisation, phase tracking, analytic continuity, and asymptotic scaling within a single phenomenological model has proven remarkably difficult. However, our framework does not need any form factor model to enforce these S-matrix and QCD mandates; instead, these enter via a loss functional that governs the neural network's optimisation.

%%%%%%%%%%%%%%%%%%%%%%%%%%%%%%%%%%%%%%%%%%%%%%%%%%%%%%%
\begin{figure*}
    \centering
    \includegraphics[width=\textwidth]{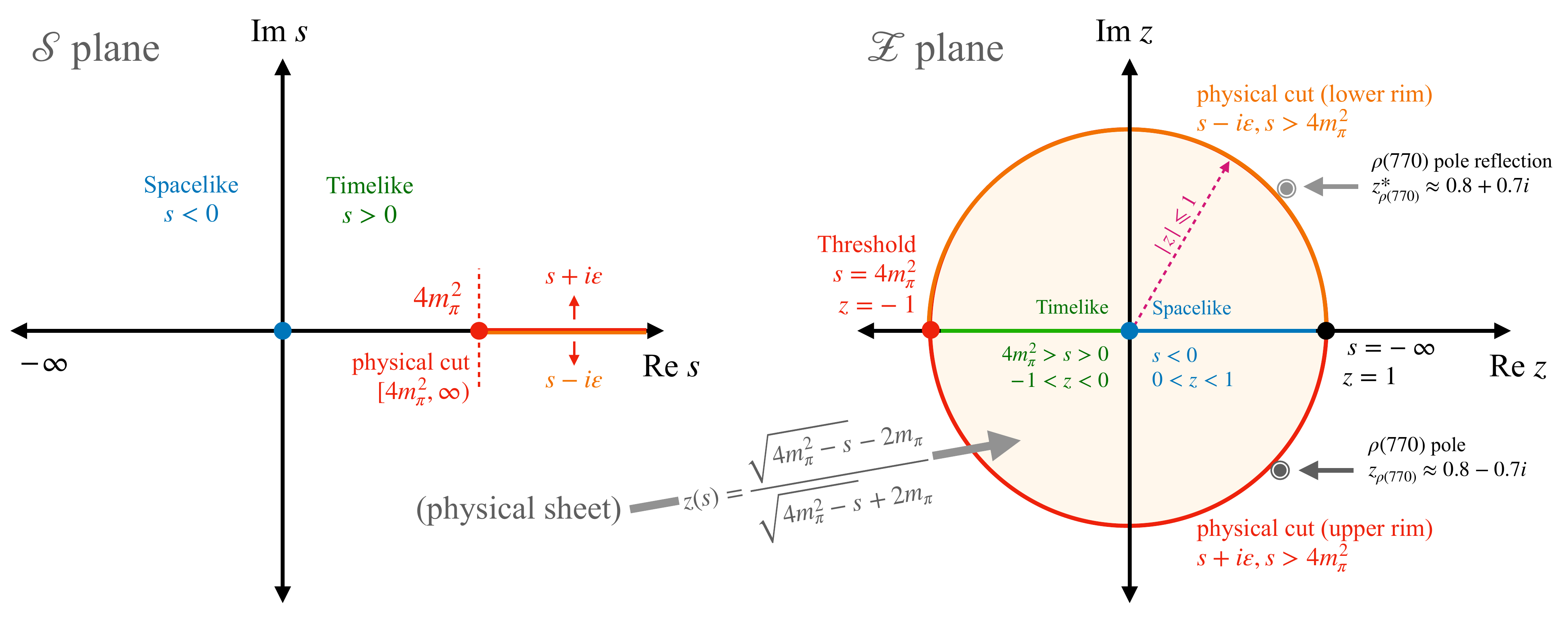}
    \caption{The conformal map $z(s)$ compactifying the physical $\mathcal{S}$ plane onto the unit disk. The physical branch cut $[4m_\pi^2, \infty)$ maps to the unit circle boundary. The $\rho(770)$ resonance pole lies on the second Riemann sheet, outside the unit disk at $z_{\rho(770)} \approx 0.8 - 0.7i$, ensuring $F_\pi$ is holomorphic throughout the interior $|z| < 1$. \label{fig:CMap}}
\end{figure*}
%%%%%%%%%%%%%%%%%%%%%%%%%%%%%%%%%%%%%%%%%%%%%%%%%%%%%%%

%%%%%%%%%%%%%%%%%%%%%%%%%%%%%%%%%%%%%%%%%%%%%%%%%%%%%%%
\section{Neural network architecture: Stability and expressivity}\label{sec:nn}
%%%%%%%%%%%%%%%%%%%%%%%%%%%%%%%%%%%%%%%%%%%%%%%%%%%%%%%
\subsection{Using conformal mapping to bypass geometric instability}
\noindent
For a neural network, the primary numerical challenge comes from unbounded physical kinematic variables, which may contain several nontrivial boundary components associated with different thresholds. Hadronic form factors possess several threshold singularities associated with distinct physical channels. These thresholds generate branch points, and the corresponding branch cuts determine the boundary structure of the physical sheet. Since the physical momentum of newly created massive particles scales as $\sqrt{s - s_{\text{th}}}$, each independent threshold locally splits the complex $s$-plane into a two-sheeted Riemann surface. Furthermore, if massless exchanges are present, they introduce logarithmic singularities that splinter the domain into infinitely many sheets. To address this, we first replace this complicated physical geometry with a bounded canonical domain:
\begin{equation}
\mathcal{S} = \hat{\mathbb{C}} \setminus \bigcup_{i=1}^{N_\Gamma}\Gamma_i,
\end{equation}
where $\hat{\mathbb{C}} = \mathbb{C}\cup\{\infty\}$ is the Riemann sphere and $\Gamma_i$'s are disjoint branch cuts on the physical sheet. As the cuts define non-degenerate boundary components, $\mathcal{S}$ is a finitely connected planar domain. However, as we discuss below, even the canonical physical domain is not ideal for training our neural network.

For optimisers such as Adam or L-BFGS, optimisation is most stable when the local loss landscape is close to isotropic, i.e., approximately bowl-shaped with comparable curvature in all directions. This is hard to achieve everywhere in the physical space. However, a transformation to a conformal coordinate can act as a geometric preconditioner and remove the coordinate-induced source of large curvature associated with the unbounded physical momentum-transfer variable. To see that, let us first define the loss function. Let $f(x;\theta) = \sum_{k=1}^{K} v_k \sigma(a_k x+b_k)$ be a single-hidden-layer neural network with $K$ neurons, $P=3K$ parameters: $\theta = \{v_k,a_k,b_k\}_{k=1}^{K}$ (where $x$ denotes the input coordinate), and $\sigma$ a smooth and bounded activation function (i.e., $|\sigma(u)| \leqslant C_\sigma$ and  $|\sigma'(u)| \leqslant C_{\sigma'}, \forall u\in\mathbb{R}$). The empirical mean-squared-error loss over $N$ collocation points is given as
\begin{equation}
\mathcal{L}(\theta) = \frac{1}{2N} \sum_{i=1}^{N}
    \left(f(x_i;\theta)-y_i\right)^2 = \frac{1}{2N} \sum_{i=1}^{N} r_i^2(\theta),
\end{equation}
where $r_i(\theta) = f(x_i;\theta)-y_i$ denotes the $i$th residual. The Hessian matrix of the loss, $H\in \mathbb R^{P\times P}$, is the collection of all second derivatives, given as
\begin{align}
H_{jk} =&\ \partial_{\theta_j }\partial_{\theta_k } \mathcal{L}\nonumber\\
=&\ \frac{1}{N} \sum_{i=1}^{N} \left[ \partial_{\theta_j} f(x_i; \theta) \partial_{\theta_k} f(x_i; \theta)  + r_i(\theta)\, \partial_{\theta_j }\partial_{\theta_k }f(x_i; \theta) \right] \nonumber\\   
=&\ \frac{1}{N} \left[ (J^T J)_{jk} + \sum_{i=1}^{N}  r_i(\theta)\, \partial_{\theta_j }\partial_{\theta_k }f(x_i; \theta) \right],\label{eq:hessian}
\end{align}
where $J\in\mathbb{R}^{N\times P}$ is the Jacobian, $J_{ij}=\partial_{\theta_j} f(x_i;\theta)$.

Since near a small-residual solution ($r_i(\theta)\approx 0$) the Hessian is approximated by the Gauss-Newton matrix, $H_{\mathrm{GN}} = (J^TJ)/N$, we can extract a limit on the largest eigenvalue: $ \lambda_{\max} (H_{\mathrm{GN}}) \leqslant \operatorname{Tr}(H_{\mathrm{GN}}) =  \|J\|_F^2/N$. This upper bound on the eigenvalue dictates the maximum curvature of our optimisation landscape. It also points to the crucial dependence on the choice of coordinate system: if the network is trained directly in physical momentum space, this curvature can become unbounded. However, if we map the problem into a conformal space, it need not be. Let's analyse the behaviour of the network gradient in these two spaces.

In the physical domain, the input coordinate, $x = s \in(-\infty,s_{\mathrm{th}}=4m_\pi^2] $. The deep spacelike region extends to negative infinity ($s \to -\infty$). If we calculate the derivative of the network's prediction with respect to an internal frequency weight $a_k$ (assuming its corresponding outer weight $v_k \neq 0$), the chain rule yields: $\partial_{a_k} f(s;\theta) = v_k \sigma'(a_k s + b_k) s$. As we evaluate the loss using data from the deep spacelike region, the magnitude of $s$ grows unboundedly. Provided the activation function's slope does not vanish at these extremes (which is generically true for the periodic sine activations, $|\sigma'| \geqslant c_{\sigma} > 0$), this gradient can diverge uncontrollably, forcing the Frobenius norm $\|J\|_F^2$ and thus $\lambda_{\max}(H_{\mathrm{GN}})$ to explode.  The gradient descent becomes inherently unstable due to steep ravines in the loss landscape. This unbounded curvature reveals a critical vulnerability in standard neural network architectures. If we attempt to avoid this explosion by using traditional activation functions like $\tanh$, their exponentially decaying derivatives ($\sigma'(u) = \operatorname{sech}^2(u)$) lead to vanishing gradients and spectral starvation, which means the network does not learn the form factor in the high $q^2$ region. Conversely, highly expressive periodic activations, like the adaptive sine~\cite{sitzmann2020siren}, where $|\sigma'| \geqslant c_{\sigma} > 0$, preserve gradient signals but trigger the uncontrolled explosion in the unbounded $\mathcal{S}$ space. 

Now, instead of operating in the physical $\mathcal{S}$-space, let us consider a conformal bijection into the complex $\mathcal{Z}$-plane, where the mapping $z(s)$ compactifies the entire unbounded physical domain ($\mathcal{S}$-space) into a finite-sized one, like, e.g., the unit disk, $|z| \leqslant 1$. For hadronic form factors in general, we can define this bounded domain generally as 
\begin{equation}
\mathcal{Z} = \mathbb{D}\setminus\bigcup_{j=1}^{N_{\Gamma}-1}\overline{D_j(c_j,r_j)},
\end{equation}
where the closed disks are mutually disjoint and contained in $\mathbb{D}= \{z\in\mathbb{C}: |z|<1\}$. However, for the pion form factor, it simplifies beautifully. Since there is only one primary branch cut starting at $4m_\pi^2$, the conformal bijection~\cite{Caprini:1999ws}
\begin{align}
z(s) = \frac{\sqrt{4m_\pi^2 - s} - 2m_\pi}{\sqrt{4m_\pi^2 - s} + 2m_\pi}\label{eq:confmap}    
\end{align} 
smoothly compactifies the entire first Riemann sheet directly into the simple unit disk $|z| \leqslant 1$ (see Fig.~\ref{fig:CMap}).

Let our network in this new conformal domain be parametrised as: $f(z;\theta) = \sum_{k=1}^{K} v_k \sigma(a_k z + b_k)$. Because $z$ is strictly bounded, the catastrophic gradient growth we observed earlier is mathematically eliminated, as the activation derivative $|\sigma'|$ is already capped. As long as the network's weights remain finite during optimisation, the magnitude of the gradient is guaranteed to be bounded by a finite constant. We can extend this logic to the second derivatives. Every term within the exact Hessian matrix relies solely on products of inherently bounded quantities such as the bounded input coordinate $z$, the finite network weights, bounded experimental data targets ($|y_i| \leqslant Y$), and the bounded activation derivatives ($\sigma'$ and $\sigma''$). Because no single element can diverge, there must exist a finite constant $C_H$ such that the Hessian of the loss is bounded (see Appendix~\ref{sec:thoremproofs} for the proof): 
\begin{align}
\|H\|_2 \leqslant C_H < \infty.   
\end{align}

The bound depends on the network size and weight bounds, but not on the original unbounded physical coordinate $s$. By acting as a geometric preconditioner, the conformal mapping ensures the loss landscape possesses a well-defined maximum curvature. This implies that the loss gradient is globally Lipschitz continuous with a Lipschitz constant (maximum rate of change) $L = C_H$. Effectively, this establishes a curvature speed limit for the network's optimisation dynamics. According to the standard descent lemma~\cite{Nesterov2014IntroductoryLO}, taking a step along the gradient yields the following bound on the new loss:
\begin{equation}
\mathcal{L}(\theta-\eta \partial_\theta\mathcal{L}(\theta)) \leqslant \mathcal{L}(\theta) - \eta \left(1-\frac{\eta L}{2}\right)
    \|\partial_\theta\mathcal{L}(\theta)\|_2^2 .
\end{equation}
The bracketed term on the right remains positive as long as the learning rate satisfies $0 < \eta < \frac{2}{L}$. Therefore, the loss decreases at every step unless a stationary point ($\partial_\theta\mathcal{L}(\theta)=0$) is reached. In addition to bypassing the instabilities that plague the physical momentum space, this approach ensures robust convergence for gradient-descent-based optimisation.

For the neural network, the crucial point is simply that its inputs remain bounded. Unlike traditional dispersive approaches, where the choice of conformal map encodes physical information about the branch cut structure, our framework is agnostic to this choice: any bijection that compactifies the unbounded physical domain into a finite region -- a disk, an ellipse, or any smoothly bounded set -- would eliminate the catastrophic gradient explosions and guarantee Hessian boundedness.\footnote{This is perhaps a curious feature of our physics-informed approach. While the PINN enforces the physical constraints of the S-matrix, like analyticity, unitarity, etc., through its loss functional, any conformal bijection to a smoothly bounded set would work; the physics enters through the constraints, not this choice of coordinate. } We adopt the map in Eq.~\eqref{eq:confmap}, well-known in the literature~\cite{Boyd:1994tt}, as a convenient and physically motivated choice.

%%%%%%%%%%%%%%%%%%%%%%%%%%%%%%%%%%%%%%%%%%%%%%%%%%%%%%%
\subsection{Preserving the Neural Tangent Kernel}
\noindent
In addition to successful convergence, we also require that the network maintain the expressivity needed to fit the data. We examine the network's training dynamics in the parameter space through the empirical NTK~\cite{Jacot:2018ivh}, defined as:
\begin{equation}
\Theta = \frac{1}{P} JJ^T \ \in \mathbb{R}^{N \times N}.
\end{equation}
The diagonal elements of this kernel represent the local gradient magnitude for a given data point: $\Theta_{ii} =  \|\partial_\theta f(x_i;\theta)\|_2^2/P$. For a continuous-time gradient descent with the mean-squared error (MSE) loss, the residual dynamics are governed by (see the appendix for proof):
\begin{equation}
    \frac{1}{\mathcal{L}}\frac{d\mathcal{L}}{dt} \leqslant -\frac{2 P}{N} \lambda_{\min}(\Theta).
\end{equation}
Thus, the NTK must remain positive definite ($\lambda_{\min}(\Theta) > 0$) for $\mathcal L$ to converge to a global minimum successfully. If the minimum eigenvalue approaches zero, the network undergoes \emph{spectral starvation} and loses the ability to learn the corresponding eigen-directions of the dataset.

This exposes another limitation of using the physical coordinates as direct inputs. The physical form factor must vanish in the deep Euclidean limit ($F_\pi(s) \to 0$ as $s \to -\infty$). If the network is trained in the unbounded $\mathcal{S}$-space using a periodic activation such as the adaptive sine, $\sigma(a_ks+b_k)$ has no limit as $s\to-\infty$ for any fixed $a_k\neq0$, as it oscillates indefinitely rather than settling to a fixed value. The only way for the network to approach a finite, tunable value in this limit is for the optimiser to shrink the internal weights ($a_k \to 0$). Hence, the parameter gradients evaluated at these deep spacelike points vanish:
\begin{equation}
    \lim_{s_i \to -\infty} \|\partial_\theta f(s_i;\theta)\|_2^2 = 0 \implies \Theta_{ii} \to 0.
\end{equation}
Because the minimum eigenvalue of a positive semi-definite matrix is upper-bounded by its minimum diagonal element, evaluating the network in the unbounded domain forces $\lim_{s_i \to -\infty} \lambda_{\min}(\Theta_s) = 0$. Thus, the diminishing form factor induces spectral starvation, flattening the loss landscape and stalling the optimisation.

Again, mapping the domain to the conformal $\mathcal{Z}$-space eradicates this null space. Because the conformal coordinate maps the infinite spacelike tail to a finite boundary point ($z \to 1$), the network parameters are no longer forced to collapse to zero to satisfy the asymptotic behaviour. For generic, non-polynomial activations like the adaptive sine function, the parameter Jacobian $J_z$ naturally maintains full row rank. We can state it formally as the following theorem.
\begin{theorem}[Full rank of the empirical NTK]
Let $z_1,\dots,z_N$ be distinct points with $|z_i|\leqslant 1$, and let $f(z;\theta)$ be a feedforward neural network with a real analytic, non-polynomial activation function $\sigma$ and a total of $\mathcal N$ hidden neurons across all layers. If $\mathcal N\geqslant N$, then for a generic choice of parameters $\theta$, the parameter Jacobian $J_z\in\mathbb R^{N\times P}$ has row rank $N$, and the empirical NTK $\Theta_z=J_zJ_z^T/P$ is positive definite: $\lambda_{\min}(\Theta_z)>0$, where $P$ is the total number of parameters, satisfying $P\geqslant\mathcal N$ since every neuron carries at least one parameter.
\end{theorem}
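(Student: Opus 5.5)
The plan is to reduce the statement to exhibiting a \emph{single} parameter point at which $J_z$ has row rank $N$, and then use analyticity to upgrade this to a generic statement. Since $\sigma$ is real analytic, every entry $\partial_{\theta_j} f(z_i;\theta)$ is a real-analytic function of $\theta\in\mathbb R^P$. Hence so is
\begin{equation*}
g(\theta)=\det\bigl(J_z(\theta)J_z(\theta)^T\bigr),
\end{equation*}
which equals the sum of squares of the $N\times N$ minors of $J_z$ by Cauchy--Binet. The zero set of a real-analytic function on $\mathbb R^P$ is either all of $\mathbb R^P$ or a closed set of Lebesgue measure zero. So it suffices to find one $\theta_\ast$ with $g(\theta_\ast)\neq0$. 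At every $\theta$ with $g(\theta)\neq0$, the matrix $\Theta_z=J_zJ_z^T/P$ is a Gram matrix of $N$ linearly independent rows, so it is positive definite and $\lambda_{\min}(\Theta_z)>0$.

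Treat the complex input $z_i$ as the real vector $x_i=(\operatorname{Re}z_i,\operatorname{Im}z_i)\in\mathbb R^2$; this reduces to the purely real case when all $z_i$ are real. The first step reduces to a scalar problem. Since the $x_i$ are distinct, a generic direction $w$ gives distinct projections $t_i=w\cdot x_i$; only finitely many hyperplane conditions must be avoided.

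The second step is the single-layer core lemma. Suppose $\sigma$ is analytic and non-polynomial, and $t_1,\dots,t_N$ are distinct. Then there exist $a_1,\dots,a_N$ and $b$ such that the matrix $M_{ik}=\sigma(a_kt_i+b)$ is nonsingular. The argument runs as follows.
\begin{itemize}
\item Because $\sigma$ is not a polynomial, no derivative $\sigma^{(j)}$ vanishes identically.
\item By analyticity, each $\sigma^{(j)}$ vanishes only on a discrete set. Hence there is a $b$ with $\sigma^{(j)}(b)\neq0$ for all $j=0,\dots,N-1$.
\item Consider the curve $a\mapsto u(a)=(\sigma(at_i+b))_{i=1}^N\in\mathbb R^N$. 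Its $j$th derivative at $a=0$ is $\sigma^{(j)}(b)\,(t_i^j)_i$.
\item For $j=0,\dots,N-1$ these vectors are nonzero rescalings of the columns of a Vandermonde matrix on distinct nodes, so they span $\mathbb R^N$.
\item Therefore the curve $u(a)$ is not contained in any proper subspace, and we can choose $N$ values $a_k$ with $u(a_1),\dots,u(a_N)$ linearly independent.
\end{itemize}
If the last hidden layer has at least $N$ neurons, this finishes the proof. Set the earlier layers so that the last layer receives distinct scalar inputs. Set the relevant last-layer pre-activations to $a_kt+b$, with unused neurons' outgoing weights free. The Jacobian columns with respect to the outer weights $v_k$ are then exactly the columns of $M$.

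The main obstacle is the general case, where the hypothesis only gives a total of $\mathcal N\geqslant N$ neurons spread over several layers, so that no single layer need have $N$ neurons. My plan is to use, for each neuron $k$ in layer $\ell$, the column of $J_z$ associated with one of its outgoing weights. That column has the form $\delta^{(\ell+1)}_m(x_i)\,h^{(\ell)}_k(x_i)$, where $\delta^{(\ell+1)}_m$ is the backpropagated sensitivity of the output to the pre-activation of a downstream neuron $m$. I will choose the downstream weights small and biases at points where $\sigma'\neq0$. Then the downstream network is, to leading order, an affine map of its input, so each $\delta$ is approximately a nonzero constant $c_\ell$. With this choice:
\begin{itemize}
\item each column is approximately $c_\ell\,h^{(\ell)}_k(x_i)$;
\item choosing the layer-wise weights as in the second step, with distinct offsets $b$ per layer, should make the collection of all $\mathcal N\geqslant N$ such vectors span $\mathbb R^N$ in the limit;
\item nonsingularity of an $N\times N$ minor is an open condition, so it survives the small perturbation away from the limit.
\end{itemize}
Here I would still need to verify that neurons in deeper layers supply genuinely new directions rather than duplicates of earlier-layer functions. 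My fallback is to route the deeper layers' inputs through near-identity affine pieces, so that each layer again sees distinct scalar inputs $t_i$ and the Vandermonde argument of the second step applies layer by layer with fresh exponents. Once such a $\theta_\ast$ exists, the analyticity argument of the first paragraph gives full row rank for generic $\theta$, and hence $\lambda_{\min}(\Theta_z)>0$.
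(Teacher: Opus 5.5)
\paragraph{What is correct.} Your reduction to a single good parameter point (Cauchy--Binet plus the measure-zero zero set of a nontrivial real-analytic function) is exactly the paper's first step. Your Vandermonde-curve lemma is a correct, self-contained proof of the interpolation fact the paper only cites. Together with the generic projection of $(\operatorname{Re}z,\operatorname{Im}z)$, it settles every architecture whose last hidden layer has at least $N$ neurons.

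\paragraph{The gap.} It is where you suspect, and the mechanism you propose there does not work. The weights downstream of layer $\ell-1$ include the incoming weights of layer $\ell$. Making them small to freeze the sensitivities $\delta$ pushes every pre-activation in layers $\ell\geqslant2$ onto its bias, and sends $\bigl(h^{(\ell)}_k(x_i)\bigr)_i\to\sigma(b^{(\ell)}_k)\,\mathbf 1$ with $\mathbf 1=(1,\dots,1)^T$. You cannot at the same time give layer $\ell$ the order-one slopes $a_k$ that your lemma needs. Up to nonzero scalar factors, your $\mathcal N$ candidate columns therefore converge into $\operatorname{span}\{\mathbf 1,h^{(1)}_1,\dots,h^{(1)}_{n_1}\}$, which has dimension at most $n_1+1$. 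So the limiting minor is singular whenever $n_1+1<N$, and openness has nothing to perturb.

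The sharpest instance is a chain of width-one layers with $L\geqslant N\geqslant3$, which the theorem covers. There are no spare neurons to reroute, and your limit gives rank at most $2$. Whatever the deeper layers contribute is invisible at leading order in the small parameter; compare the paper's single-neuron example, where the determinant first appears at order $a^3$.

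\paragraph{How to repair it.} You must either track the leading non-vanishing Taylor coefficient of a chosen minor in the scaling parameter, rather than its limit, or run a genuine layer-by-layer augmentation. The paper takes the second route. It inducts over layers, places layer $l+1$ near a configuration reproducing $h^{(l)}$ so that the rank already achieved survives by lower semicontinuity, and then applies the one-layer lemma to layer $l+1$'s own weights with inputs $h^{(l)}(z_i)$. Your fallback is this argument, but you only name it. You would still have to reconcile the fact that the same neurons cannot be both near-identity carriers and generic Vandermonde neurons.

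One device that makes the augmentation step rigorous is to keep a single active neuron per layer and give every other neuron zero outgoing weights. Then:
\begin{itemize}
\item the added neuron's outgoing-weight column $\delta^{(\ell+1)}_m(x_i)\,h^{(\ell)}_k(x_i)$ does not depend on the value of that weight;
\item the zero weight leaves every other column and every $\delta$ exactly unchanged;
\item $\operatorname{diag}\bigl(\delta^{(\ell+1)}_m(x_i)\bigr)$ is invertible when $\sigma'\neq0$ along the active path, so your Vandermonde curve still spans $\mathbb R^N$;
\item hence each such neuron raises the rank by one until it reaches $N$.
\end{itemize}
What is then left is precisely the width-one case: the active chain of $L$ single neurons must by itself reach rank $\min(L,N)$. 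No rerouting avoids this, and it needs an actual computation, such as the leading-order expansion just mentioned.
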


Check Appendix~\ref{sec:thoremproofs} for the proof of this theorem. Physically, the NTK governs the network's expressive bandwidth, i.e., its ability to simultaneously resolve sharp features like the $\rho(770)$ resonance and the smooth, asymptotic spacelike tail without suffering from spectral starvation. Because the conformal mapping prevents the parameter gradients from vanishing in the deep Euclidean limit, the empirical NTK in the conformal space remains strictly positive definite, $\lambda_{\min}(\Theta_z) \geqslant \lambda_0 > 0$, throughout the entire training interval. This ensures that no region of the kinematic phase space becomes a \emph{dead zone} for learning. Under the linearised NTK gradient-flow dynamics, the residual error decays at a rate proportionally bounded by this minimum eigenvalue:
\begin{equation}
\frac{d\mathcal{L}}{dt} \leqslant -\frac{2P\lambda_0}{N}\mathcal{L}.
\end{equation}
This guarantees an exponential decay of the loss: $\mathcal{L}(t) \leqslant \mathcal{L}(0) \exp\left(-\frac{2P\lambda_0}{N} t\right)$. Because no diagonal entry of $\Theta_z$ collapses asymptotically, the trace of the NTK scales with the size of the dataset $N$:
\begin{equation}
    \operatorname{Tr}(\Theta_z) = \sum_{i=1}^N \Theta_{ii} \geqslant N \min_{1 \leqslant i \leqslant N} (\Theta_{ii}) > 0.
\end{equation}

Finally, we note that the neural network can still arbitrarily approximate the true physical form factor $F(s)$ in this new domain.  Because the composition of the form factor and the inverse conformal map, $F(s(z))=F(-4s_{\mathrm{th}}z/(1-z)^2)$, remains complex analytic (holomorphic). The physical target is guaranteed to remain perfectly smooth and free of coordinate-induced singularities. When the infinite physical space is mapped into a closed unit disk, this subset shields the network from the divergent parameter errors from fitting asymptotic tails to infinity. With a smooth target on a finite canvas, the universal approximation theorem guarantees that our neural network can approximate the true pion form factor with arbitrary precision. Crucially, this theorem requires non-polynomial activations; a polynomial network would algebraically collapse and fail to map complex, sharp resonance peaks. The adaptive sine function ($\sigma(x) = \sin(a x)$, where $a$ is a learnable vector) safely bypasses this restriction. In short, the conformal mapping delivers the best of both worlds: the geometric stability required for gradient descent to find a minimum, and the expressivity needed to ensure that minimum accurately reflects the true underlying QCD dynamics.

%%%%%%%%%%%%%%%%%%%%%%%%%%%%%%%%%%%%%%%%%%%%%%%%%%%%%%%
\begin{figure*}
    \centering
    \includegraphics[width=\textwidth]{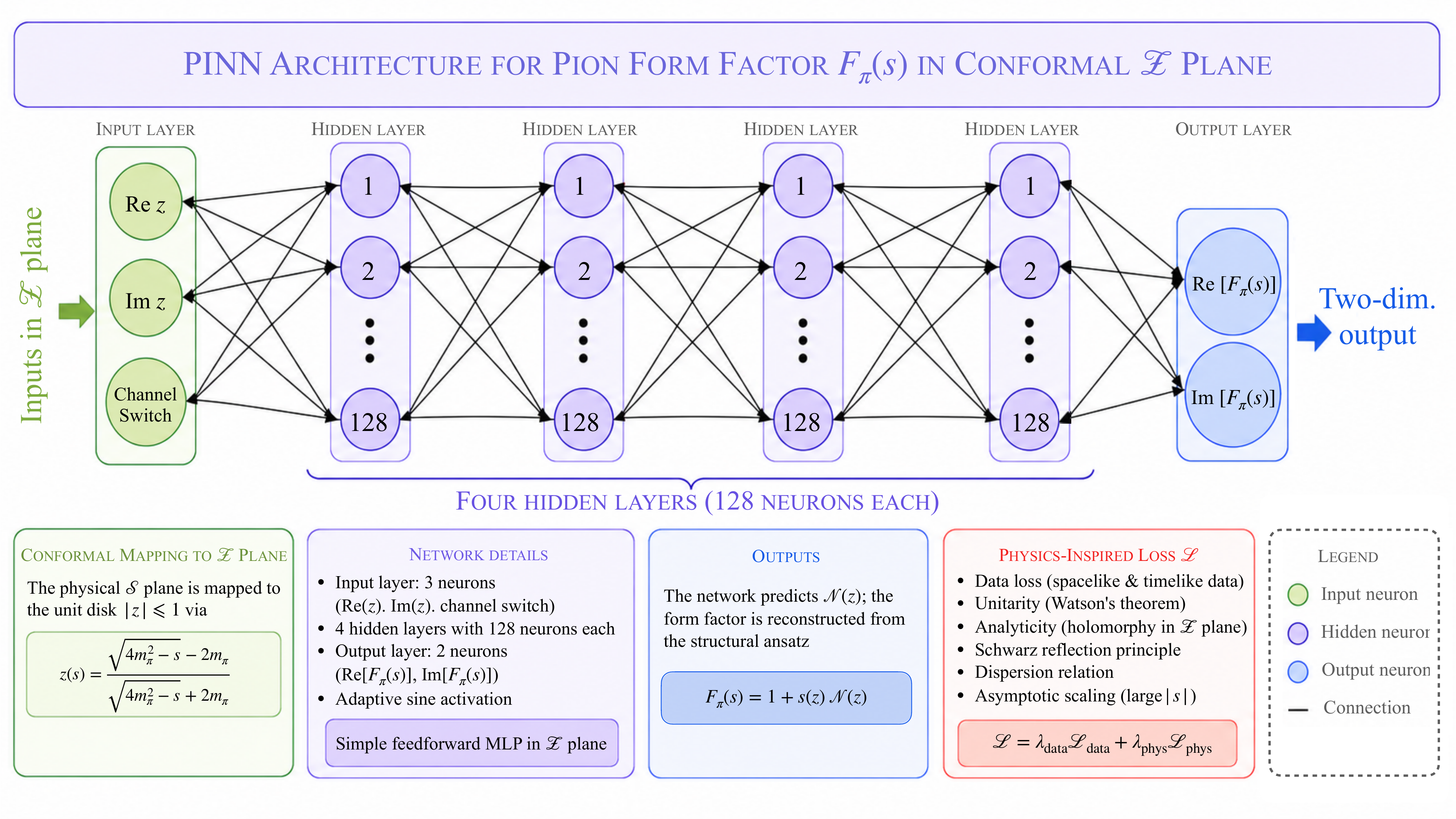}
    \caption{{\bf Network schematic:} PINN architecture for the pion form factor $F_\pi(s)$ in the conformal $\mathcal{Z}$ plane. The physical variable $s$ is first mapped to the unit disk via the conformal bijection $z(s)$. The network takes three inputs -- $\text{Re}\,z$, $\text{Im}\,z$, and a channel switch distinguishing the $e^+e^-$ and $\tau$-decay datasets -- and passes them through four hidden layers of $128$ neurons each with adaptive sine activations. The output $\mathcal{N}(z)$ is combined with the structural ansatz $F_\pi(z) = 1 + s(z)\mathcal{N}(z)$, which hardwires the charge normalisation $F_\pi(0) = 1$. Training minimises a composite loss $\mathcal{L} = \lambda_{\rm data}\mathcal{L}_{\rm data} + \lambda_{\rm phys}\mathcal{L}_{\rm phys}$, where the physics loss enforces unitarity (Watson's theorem), analyticity (Cauchy-Riemann conditions via stochastic collocation), the Schwarz reflection principle, the twice-subtracted dispersion relation, and perturbative QCD asymptotics simultaneously. \label{fig:schematic}}
\end{figure*}
%%%%%%%%%%%%%%%%%%%%%%%%%%%%%%%%%%%%%%%%%%%%%%%%%%%%%%%

%%%%%%%%%%%%%%%%%%%%%%%%%%%%%%%%%%%%%%%%%%%%%%%%%%%%%%%
\subsection{Enforcement of physical constraints}
\noindent
We show the schematic of the simple four-layered, feed-forward network we use in Fig.~\ref{fig:schematic}. We impose the physical and analytical constraints through the loss function. However, we also structurally encode two fundamental constraints directly into the architecture. First, we make a structural ansatz:
\begin{equation}
F_\pi(z)=1+s(z)\,\mathcal{N}(z)
\end{equation}
where $\mathcal{N}(z)$ is the complex-valued output of the neural network and $s(z)$ is the inverse of the conformal map. This formulation guarantees that the charge normalisation, $F_\pi(0)=1$, is satisfied identically, independent of any trainable parameters. Second, the pion form factor must be strictly real on the real axis below the threshold, i.e., $\text{Im}\,F_\pi(s) = 0$ for $s < 4m_\pi^2$, by the Schwarz reflection principle. Because our conformal mapping preserves this complex conjugation, the reality condition holds identically for $z$ on the real axis. We impose this as an exact, architectural hard constraint. For any complex kinematic input $z$, the network evaluates the amplitude using only the absolute value of the imaginary part, $|\text{Im}(z)|$, effectively folding the lower half-plane onto the upper half-plane. To recover the analytic structure, the predicted imaginary part of the form factor is multiplied by $\text{sgn}(\text{Im}(z))$. This forward-pass routing inherently guarantees exact Schwarz reflection globally. 

We construct a loss functional that acts as a regularising boundary in the loss landscape, preventing the network from settling into mathematically degenerate solutions while fitting the experimental data. It is a weighted superposition of experimental data losses and physics-motivated losses:
\begin{equation}
\mathcal{L}_{\mathrm{total}}=\sum_{d\,\in\,{data}}\lambda_d\mathcal L_d+\sum_{p\,\in\,{phys}} \lambda_p \mathcal{L}_{p},
\end{equation}
with $data \equiv$ \{spacelike, timelike\} and $phys \equiv$ \{analyticity, dispersion, moment, positivity, Watson, monotonicity, pQCD, t-asymptotic\}. The weights $\lambda_i$'s are empirically determined and fixed. We define the individual physics constraints below. For brevity, we show each loss term as a squared residual at a single evaluation point; the full loss is its mean over all relevant points for that constraint.

\begin{enumerate}

    \item {\bf Analyticity ($\mathcal L_{\rm analyticity}$):} We enforce the form factor to be smooth and differentiable in the $\mathcal Z$ plane by minimising the violation of the Cauchy-Riemann (CR) conditions:
    \begin{align}
        r\frac{\partial u}{\partial r} = \frac{\partial v}{\partial \theta}, \quad r\frac{\partial v}{\partial r} = -\frac{\partial u}{\partial \theta},   
    \end{align}
    where $z=re^{i\theta}$ with $r\leqslant R_{\max}$ and $\theta\in [-\pi,\pi]$, and $F_\pi(z)=u(z)+iv(z)$, i.e., 
    \begin{align}
        \mathcal{L}_{\text{analyticity}} = \left[ \left( r \frac{\partial u}{\partial r} - \frac{\partial v}{\partial \theta} \right)^2 + \left( r \frac{\partial v}{\partial r} + \frac{\partial u}{\partial \theta} \right)^2 \right].
    \end{align}
    We evaluate all spatial gradients exactly via automatic differentiation, thereby eliminating finite-difference truncation errors. Instead of evaluating the loss on a static, high-density lattice -- which scales poorly ($\mathcal{O}(N^2)$) and allows the neural network to interpolate unphysically or memorise analyticity solely at the grid nodes -- we implement a continuous stochastic collocation method.  At each training epoch, a new coordinate cloud $\{z_j\}_{j=1}^{N_{\rm col}}$ is sampled uniformly from the continuous interior domain $|z| \leqslant R_{\text{max}} = 0.999$, avoiding boundary singularities from the physical branch cut. By dynamically shifting these evaluation points at each step, this Monte Carlo sampling approximates the continuous volume integral over the disk for a large number of epochs ($E\to\infty$). This strategy requires only a small $\mathcal{O}(N_{\rm col})$ point set per iteration (e.g., $10^3$ points over $10^4$ epochs), effectively scanning $\sim 10^7$ unique points. This guarantees sufficient spatial resolution across the domain and prevents localised violations of analyticity.

    \item {\bf Dispersion relation ($\mathcal L_{\rm dispersion}$):} This relation essentially functions as an analytic holographic principle for the pion. It dictates that the entire complex interior is completely determined by its absorptive imaginary part (the physical cross-section) on the boundary branch cut. However, directly applying an unsubtracted dispersion integral exposes the network to severe ultraviolet (UV) instabilities from noisy high-energy extrapolations. To insulate the model, we perform two subtractions at the origin. Because we map the infinite physical branch cut onto the lower conformal boundary arc ($e^{i\theta}$ with $\theta \in [-\pi, 0]$), we can evaluate this causality constraint entirely within the conformal plane. Enforcing charge normalisation ($u(0)=1$), the dispersive reconstruction for a spacelike point $s$ is given by:
    \begin{align*}
        u_{\rm disp}(s) =&\ 1 + \frac{1}{6}\langle r_\pi^2 \rangle s + \frac{s^2}{\pi} \int_{4 m_\pi^2}^\infty \frac{v(s')}{(s')^2 (s' - s)} ds',\label{eq:dispersion}
    \end{align*}
    which, in the $\mathcal Z$ plane, can be written as,
     \begin{align}
     u_{\rm disp}(z)=&\ 1 + s(z)\left.\frac{dF_\pi(s)}{ds}\right|_{s=0} + \Delta_{\rm disp}(z).   
    \end{align}
    Rather than fixing the slope with an empirical parameter, we extract $F_\pi'(0)$ dynamically from the network via automatic differentiation to guarantee self-consistency and evaluate $\Delta_{\text{disp}}(z)$ as,
    \begin{align*}
        \Delta_{\text{disp}}(z) = \frac{s^2(z)}{2\pi s_{\text{th}}} \int_{-\pi}^0 \frac{v(e^{i\theta'}) |\sin\theta'|(1-\cos\theta')}{2s_{\text{th}} - s(z)(1-\cos\theta')} d\theta'.   
    \end{align*}
    We thus minimise the following squared loss:
    \begin{align}
        \mathcal{L}_{\text{dispersion}} = \left[ u(z) - u_{\rm disp}(z) \right]^2.
    \end{align}
    The suppression in the integration kernel heavily damps high-energy contributions, stabilising the loss landscape and focusing the network's expressive power on the resonant physics. 

    \item 
    {\bf Higher-moment sum-rules ($\mathcal L_{\rm moment}$):} 
    Neural networks can suffer from spectral bias, a tendency for gradient descent to preferentially learn smooth, low-frequency functions while severely underestimating sharp, high-frequency features. In our context, this bias can cause the network to artificially flatten the local curvature at the origin, introducing unphysical parameter degeneracies (such as allowing the model to compensate incorrectly for the $\rho-\omega$ mixing phase). To break this degeneracy, we explicitly constrain the network's dynamically generated local derivatives to match the global dispersive sum rules. 
    
    Along the physical real axis, the analyticity of the pion form factor lets us evaluate the generalised $n$-th spectral moment $I_n$ ($n\geqslant0$) as:
    \begin{align}
        I_n = \left.\frac{1}{n!}\frac{d^n}{ds^n}F_\pi(s)\right|_{s=0}=\frac{1}{\pi} \int_{s_{\rm th}}^{\infty} \frac{v(s')}{(s')^{n+1}} ds',
    \end{align}
    with $I_0 = F_\pi(0) = 1$. This provides tight topological constraints to control how rapidly the spectral density flattens out as it transitions into the perturbative QCD continuum. As with the dispersion relation, integrating to infinity numerically is unstable. We therefore define $\mathcal L_{\rm moment}$ by compactifying the integral after projecting it onto the conformal boundary $z' = e^{i\theta'}$, i.e.,
    \begin{align}
        \mathcal{L}_{\rm moment} = \sum_{n=1}^4 & \Bigg[ \frac{1}{\pi} \int_{-\pi}^{0} \frac{v(e^{i\theta'})s(\theta') |\sin\theta'|}{s^{n+1}(\theta')(1 - \cos\theta')}  \, d\theta'- \frac{F_\pi^{(n)}(0)}{n!}\Bigg]^2,
     \end{align}
     where $F^{(n)}_\pi$ denotes the $n$th derivative of $F_\pi$ with respect to $s(z)$. These integral bounds restrict the network from introducing unphysical, wide spectral artefacts in deep inelastic regions where experimental data is sparse.

     \item {\bf Spectral positivity ($\mathcal L_{\rm positivity}$):} Unitarity dictates that since the absorptive part of the form factor on the branch cut corresponds to physical cross-sections, it must be strictly non-negative:
    \begin{align}
        v(s) \geqslant 0, \quad \text{for } s > 4m_\pi^2.
    \end{align}
    which translates to the lower unit-semicircle boundary in the $\mathcal Z$ plane. Since an inequality boundary is difficult to enforce dynamically, we implement it as a soft topological penalty in the loss function using a rectified linear unit (ReLU) functional (one-sided hinge loss):
    \begin{align}
        \mathcal{L}_{\text{positivity}} = \left[\max\left(0, -v(z)\right) \right]^2,
    \end{align}
    The penalty is zero when the physics is respected, but grows quadratically when the network strays into unphysical values.

    \item 
    {\bf Watson’s theorem ($\mathcal L_{\rm Watson}$):} In the region where elastic scattering dominates ($4m_\pi^2 < s \lesssim 1 \text{ GeV}^2$), Watson's theorem sets $\arg F_\pi(z) = \delta_1^1(s(z))$ [see Eq.~\eqref{eq:watson_relation}]. However, computing $\tan^{-1}(v(z)/u(z))$ can introduce coordinate singularities for small amplitudes and artificial discontinuous jumps across branch cuts. To bypass these, we reformulate the constraint as a geometric projection in the complex plane, where we minimise the perpendicular projection of the form factor relative to the target phase angle as,
    \begin{align}
        \mathcal{R}_{\phi} = u(z_k)\sin\delta_1^1(s_k) - v(z_k)\cos\delta_1^1(s_k)=0.   
    \end{align}
    This scales smoothly as $|F_\pi|\sin(\delta_1^1 - \phi)$,
    eliminating any singular denominators. Because $\mathcal{R}_{\phi}(s) = 0$ is satisfied for both $\delta_1^1$ and $\delta_1^1 + \pi$, we enforce a directional constraint to avoid the wrong branch. We require the parallel projection to be strictly positive:
    \begin{align}
        \mathcal P_\phi= u(z)\cos\delta_1^1(s(z)) + v(z)\sin\delta_1^1(s(z)), 
    \end{align}
    We apply the following loss functional:
    \begin{align}\label{eq:watson-loss}
        \mathcal L_{\rm Watson} = \mathcal R^2_{\phi}+\left[\max(0,-\mathcal P_\phi)\right]^2.
    \end{align}
    By optimising these linear combinations of the form factor components ($u, v$), the PINN stably tracks the rapid phase variations across the $\rho(770)$ resonance.

    \item {\bf Phase monotonicity ($\mathcal{L}_{\rm monotonicity}$):} To prevent unphysical phase oscillations or numerical back-tracking in regions where experimental phase-shift data is sparse or plagued by inelastic channel openings, we implement a phase monotonicity loss along the lower unit-semicircle boundary in the conformal space as,
    \begin{align}
        \mathcal{L}_{\rm monotonicity} =&\ \left[ \max\left(0, -\frac{d}{d\theta}\Big(\arg F_\pi(e^{i\theta})\Big)\right) \right]^2\nonumber\\
        =&\ \left[\max\left(0,v\frac{du}{d\theta}-u\frac{dv}{d\theta}\right)\right]^2,
    \end{align}
    for $\theta\in[-\pi,0]$. It forces the angular derivative of the form factor's phase to be non-negative.

    \item {\bf Asymptotic pQCD ($\mathcal{L}_{\rm PQCD}$):} 
    We also consider the asymptotic pQCD scaling behaviour in the deep spacelike region. From Eq.~\eqref{eq:pqcd_ff} (and Appendix~\ref{sec:nlo}), we observe that, 
    \begin{align*}
        \frac{d}{dQ^2} \left[ \frac{Q^2\,u(Q^2)}{\alpha_s\left(Q^2/21\right)\left[1+0.18\,\alpha_s\left(Q^2/21\right)\right]} \right]\approx 0, 
    \end{align*}
    in the domain of pQCD. Thus, we minimise the following loss functional in the conformal space for $z\geqslant0.82$ (or, $Q^2\gtrsim 8$ GeV$^2$), 
    \begin{align}
        \mathcal{L}_{\rm pQCD} =&\ \Bigg(\frac{(1-z)^3}{(1+z)} \frac{d}{dz} \left[ \frac{z}{(1-z)^2}\frac{u(z)}{\alpha_s\left(Q^2(z)/21\right)}\right.\nonumber\\
        &\ \left.\times\frac{1}{\left[1+0.18\,\alpha_s\left(Q^2(z)/21\right)\right]} \right] \Bigg)^2, 
    \end{align}
    This way, we don't bias the network by the pQCD normalisation, only the scaling.

    \item {\bf Timelike asymptotic ($\mathcal{L}_{\rm{t-asymptotic}}$):}  Power counting rules of pQCD imply that the real part of the form factor drops as $1/s$ as $s\to \infty$, i.e., $u\sim 1/s$ deep along the edge of the branch cut. With $ s(\theta')=8m_\pi^2/(1-\cos\theta')$ on the $z'=e^{i\theta'}$ circle in the conformal plane, we have  $\theta'^2\approx16m_\pi^2/s$ for small $\theta'$. Hence, we demand that, $u$ grows as $\theta'^2$ near $\theta'=0$ (i.e, $s\to\infty)$. In other words, we require $u(1)=u'(1)=0$ so that we have 
    \begin{align}
        u(z') \approx -\frac12u''(1)(z'-1)^2 \approx -\frac12u''(1)\theta'^2 
    \end{align}
    near $z'=1$. Thus, we use the following loss function
    \begin{align}
        \mathcal{L}_{\rm{t-asymptotic}} = \left[u^2+\left(\frac{du}{dz'}\right)^2\right]_{z'=1}.
    \end{align}
\end{enumerate}
%%%%%%%%%%%%%%%%%%%%%%%%%%%%%%%%%%%%%%%%%%%%%%%%%%%%%%%
\begin{table*}[t]
\centering
\caption{\label{tab:spacelike-data} Experimental datasets used for training the neural network.}
{\renewcommand{\arraystretch}{1.5}
\begin{tabular*}{\textwidth}{@{\extracolsep{\fill}}l  lrr}
\hline\hline
Scattering channel&Experiment& Data points& Energy range (in GeV$^2$)\\\hline
{\bf Spacelike data}&&&$Q^2=-s$\\
\hline
\multirow{2}{*}{$e\pi\to e\pi$}
    & NA7~\cite{NA7:1986vav} & $45$ & $0.015-0.253$\\
    & Fermilab~\cite{Dally:1981ur} &20 &$0.03-0.07$\\ 
\hline
\multirow{4}{*}{$e^-p \to e^- \pi^+n$}
    &CEA~\cite{Brown:1973wr} &5 &$0.176-1.188$ \\
    & Cornell '71 data~\cite{Bebek:1974iz}&6 & $0.62-2.015$ \\
    &JLab~\cite{JeffersonLab:2008gyl}  & 8& $0.60-2.45$ \\ 
    & WSL at Cornell University~\cite{Bebek:1974ww} &6 & $1.2-4.0$\\
\hline
{\bf Timelike data}&&&$q^2=s$\\
\hline
\multirow{2}{*}{$\tau^- \to \pi^-\pi^0\nu_\tau$}
    &Belle Collaboration~\cite{Belle:2008xpe} &$62$ &$0.088-3.125$\\
    &CLEO Collaboration~\cite{CLEO:1999dln}&$41$&$0.11-2.81$\\
\hline
\multirow{7}{*}{$e^+e^- \to \pi^+\pi^-$}
    &CMD-2 Collaboration~\cite{CMD-2:2006gxt} &$29$ &$0.36-0.94$\\
    &CMD-3 Collaboration~\cite{CMD-3:2023rfe}&$209$ &$0.11-1.44$\\
    &DM2 detector~\cite{DM2:1988xqd}&$17$ &$1.82 - 4.52$\\
    &CLEO-c~\cite{Seth:2012nn}&$2$&$14.2,17.4$\\ 
    &ADONE storage ring at Frascati~\cite{Bollini:1975pn}&$12 $&$1.44-9$\\
    &VEPP-2M, OLYA- 85~\cite{Barkov:1985ac}&$79$ &$0.16-1.95$ \\ 
    &VEPP-2M, OLYA- 78~\cite{Bukin:1978sq}&$29$ &$0.61-1.77$ \\
    &SND detector~\cite{SND:2020nwa} &$36$ &$0.28-0.78$ \\
\hline
\multirow{2}{*}{$e^+ e^- \to \pi^+\pi^-\gamma$}
    &KLOE detector~\cite{KLOE:2010qei}&$75$ &$0.1 - 0.85$\\
    &BaBar~\cite{BaBar:2009wpw,BaBar:2012bdw}&$337$ &$0.09-9$\\ 
    &BESIII detector~\cite{BESIII:2015equ} &$60$ &$0.36-0.81$ \\
\hline\hline
\end{tabular*}}
\end{table*}
%%%%%%%%%%%%%%%%%%%%%%%%%%%%%%%%%%%%%%%%%%%%%%%%%%%%%%%

%%%%%%%%%%%%%%%%%%%%%%%%%%%%%%%%%%%%%%%%%%%%%%%%%%%%%%%
 \begin{figure*}
    \captionsetup[subfigure]{labelformat=empty}
    \centering
    \subfloat[{\bf (a)} $|F_\pi(s)|$]{\includegraphics[width=\textwidth]{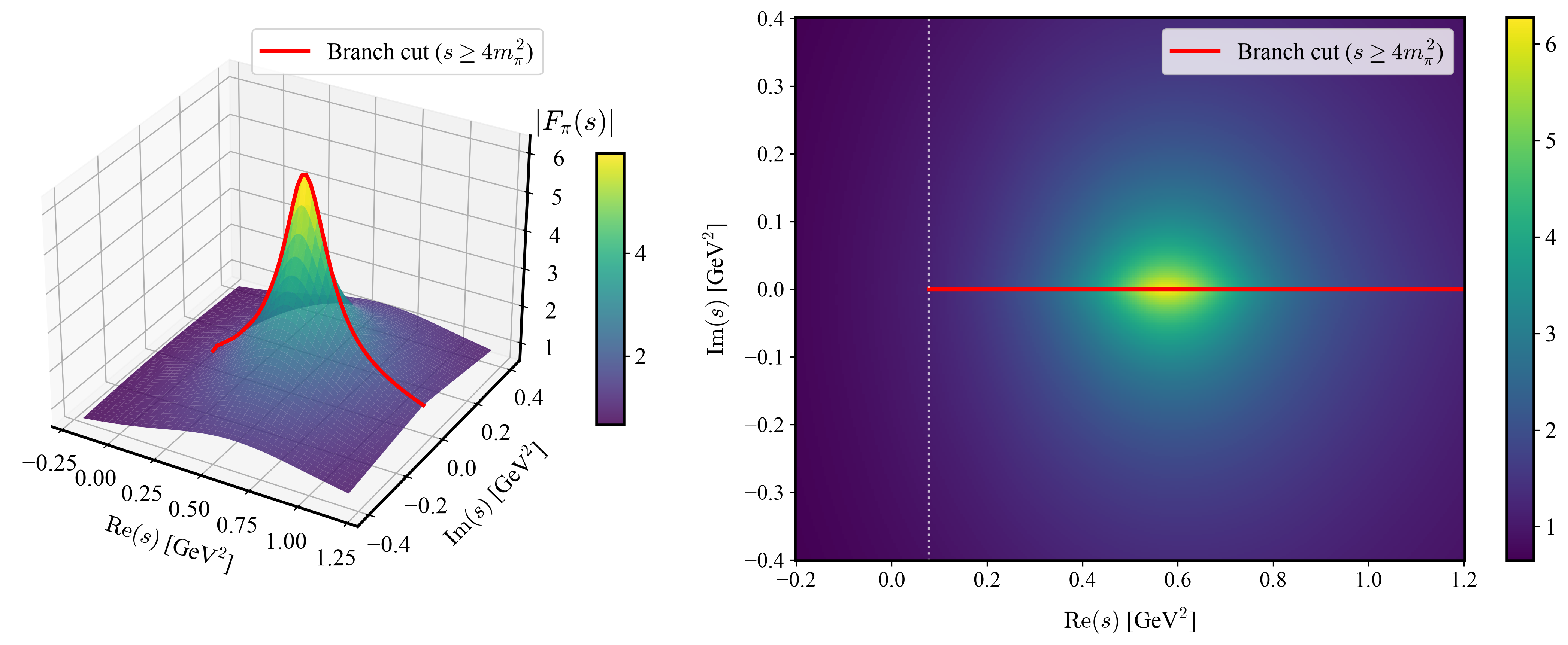}\label{fig:complex_plane_3d}}\\
    \subfloat[{\bf (b)} $\text{Arg}\left(F_\pi(s)\right)$]{\includegraphics[width=\textwidth]{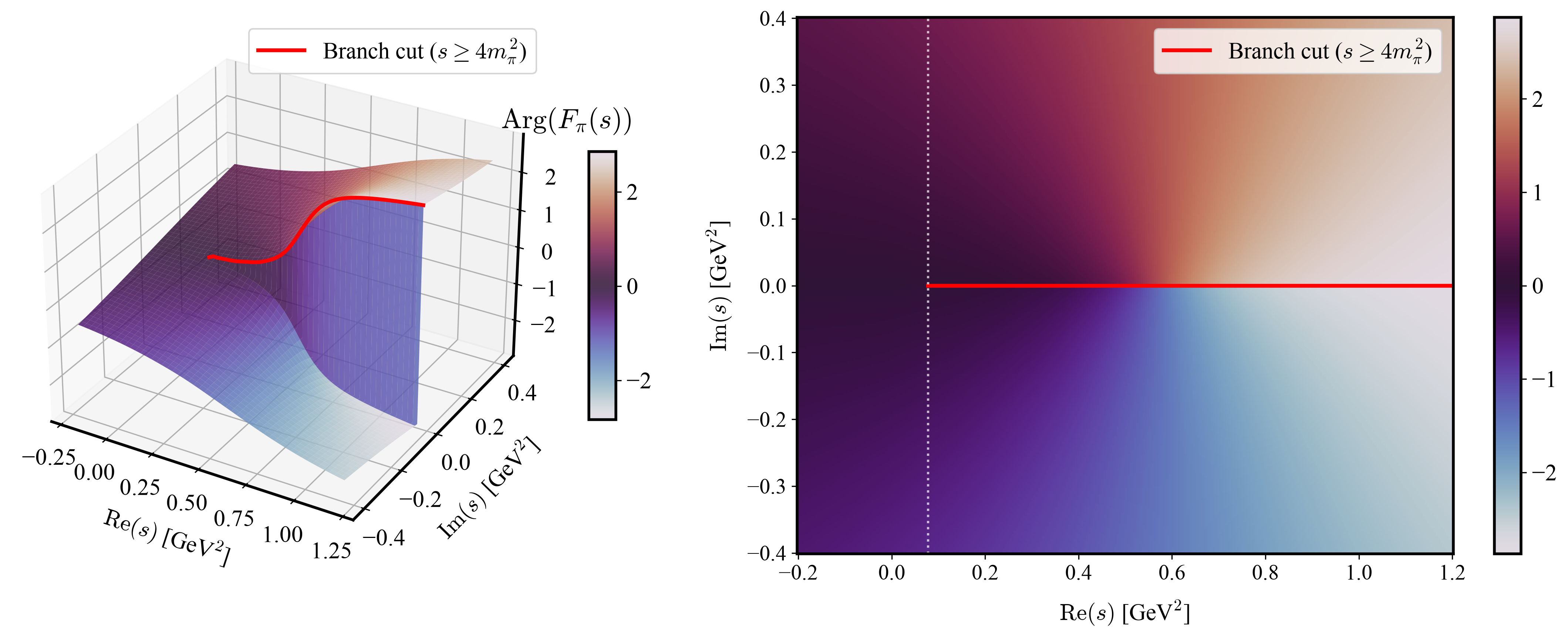}\label{fig:complex_plane_phase_3d}}
    \caption{{\bf (a)} The pion form factor analytically continued on the complex $\mathcal S$ plane; {\bf (b)} its argument validating the Schwarz relation. The peak is at the $\rho(770)$ resonance. The discontinuity across the branch cut is clearly visible in the argument. \label{fig:3d}}
\end{figure*}
%%%%%%%%%%%%%%%%%%%%%%%%%%%%%%%%%%%%%%%%%%%%%%%%%%%%%%%

%%%%%%%%%%%%%%%%%%%%%%%%%%%%%%%%%%%%%%%%%%%%%%%%%%%%%%%
\section{Experimental data}\label{sec:expt-data}
\noindent
We train our network on both spacelike (primarily from electroproduction data) and timelike (including high-precision cross-section results from BaBar, CMD-3, and BESIII collaborations) datasets (see Table~\ref{tab:spacelike-data} for the references). Spacelike data are normally expressed in terms of $Q^2=-q^2$; $F_\pi(Q^2)$ is well-measured up to $Q^2$ values as low as $0.28$ GeV$^2$ by elastic electron-pion scattering. At large  $Q^2$, $F_\pi(Q^2)$ is determined through pion electroproduction from a nucleon target, $e^-p \to e^-\pi^+n$. The longitudinal part of the pion electroproduction cross section, $\sigma_L$, contains the pion exchange process, where a virtual photon couples to a virtual pion within the nucleon, from which these experiments extracted $F_\pi$. 

For the timelike region, we use $e^+ e^-$ scattering data: the direct energy-scan measurements from SND, CMD-2, and CMD-3, and the initial-state radiation (ISR)-based  \emph{radiative return} measurements at fixed baseline energies from BaBar, KLOE, and BESIII. For the $e^+e^-$ datasets, we use either the bare form factor, undressed of vacuum polarisation effects, or the one obtained after correcting for final-state radiation (FSR) effects, as applicable, by the following formula
 \begin{align}
 |F_\pi(s)|^2 = |F_\pi^{\text{exp}}|^2 |1-\Pi(s)|^2\left(\frac{\pi}{\pi+\alpha_{em}\mathfrak f(s)}\right),\label{eq:fsr}
 \end{align}
where 
\newcommand{\dilog}{\mathrm{Li}_2}
\begin{align*}
\mathfrak f(s) =&\ \frac{3(1+\sigma_\pi^2(s))}{2\sigma_\pi^2(s)} - 4 \log\sigma_\pi(s) + 6 \log \frac{1+\sigma_\pi(s)}{2} \nonumber \\
&\ + \frac{1+\sigma_\pi^2(s)}{\sigma_\pi(s)} F(\sigma_\pi(s))  - \frac{(1-\sigma_\pi(s))}{4\sigma_\pi^3(s)} \nonumber \\ 
&\ \times\big(3+3\sigma_\pi(s)-7\sigma_\pi^2(s)+5\sigma_\pi^3(s)\big) \log\frac{1+\sigma_\pi(s)}{1-\sigma_\pi(s)}, 
\end{align*}
\begin{align*}
    \sigma_\pi(s) =&\ \sqrt{1-\frac{4 m_\pi^2}{s}},\nonumber \\
F(x) =&\ -4\dilog(x)+4\dilog(-x)+2\log x \log\frac{1+x}{1-x} \nonumber\\
&\ + 3 \dilog\Big(\frac{1+x}{2} \Big) - 3\dilog\Big(\frac{1-x}{2} \Big) + \frac{\pi^2}{2},  \nonumber \\
\dilog(x) =&\ - \int_0^x dt \frac{\log(1-t)}{t}. \nonumber
\end{align*}
The term $|1-\Pi(s)|^2$ with the polarisation operator $\Pi(s)$ excludes the effect of leptonic and hadronic vacuum polarisation~\cite{Jegerlehner:2008rs}, so that one obtains the bare cross section.

It is well-known that there is a systematic tension between the high-precision $e^+e^- \to \pi^+\pi^-$ cross-section measurements from BaBar and CMD-3, with CMD-3 yielding a significantly higher form factor magnitude ($\vert{}F_\pi(s)\vert{}$) across the $\rho$ resonance. BaBar uses ISR at a fixed centre-of-mass energy, and CMD-3 employs a step-by-step energy scan. Older direct-scan measurements like CMD-2 agree well with BaBar and $\tau$-decay data, indicating the shift is specific to CMD-3 rather than an inherent flaw in the energy-scan methodology. This persistent discrepancy hints towards some unaccounted systematic errors in at least one framework, likely originating from sub-leading radiative photon corrections or integrated beam luminosity calibrations.

To evaluate the sensitivity of the learned representation to dataset-specific tensions, we train the network on five baseline sets. Two on the $e^+e^-$ data: 
\begin{enumerate}
    \item 
    Set A (spacelike + $e^+e^-$ data excluding CMD-3) and
    
    \item 
    Set B (spacelike + only CMD-3 data).

\end{enumerate}

The form factor that appears in the $\tau^-\to \pi^-\pi^0\nu_\tau$ decay is different from the electromagnetic form factor $F_\pi(s)$ in the $e^+e^-$ annihilation. The $\tau$ decay probes the magnitude of the weak form factor $f_\pi^-(s)$. These two functions are related as~\cite{Cirigliano:2002pv}:
\begin{align}
    |F_\pi(s)|^2 = |f_\pi^-(s)|^2 \times R_{\text{IB}}(s),\label{eq:ribmixing}
\end{align}
where $R_{\text{IB}}(s)$ compensates for the isospin-breaking:
\begin{align*}
R_{\text{IB}}(s) = \frac{1}{G_{\text{EM}}(s)} \frac{\beta_{\pi^+\pi^-}^3(s)}{\beta_{\pi^-\pi^0}^3(s)}  \left|\frac{F_V(s)}{f_+(s)}\right|^2.    
\end{align*}
Here, $\beta_{\pi\pi}^3(s)$ accounts for the phase-space difference arising from the $\pi^\pm-\pi^0$ mass splitting, $G_{\text{EM}}(s)$ isolates long-distance electromagnetic radiative corrections, and the form factor ratio $\left\vert{}F_V(s)/f_+(s)\right\vert{}^2$ captures intrinsic hadronic isospin-breaking effects, including $\rho^\pm-\rho^0$ mass and width differences ($\Delta m_\rho, \Delta \Gamma_\rho$) as well as $\rho-\omega$ mixing present in the neutral channel. 

\begin{enumerate}
  \setcounter{enumi}{2}
  \item[3-4.] 
  Set C (spacelike + $\tau$ decay data): We train in two ways: 
  \begin{itemize}
      \item[--] 
      \emph{Without channel switch} (Set C$_{R_{\rm IB}}$): we correct the $\tau$ data for isospin-violation explicitly by $R_{\rm IB}$ and use that for training, and 
      \item[--]
      \emph{With channel switch} (Set C$_{\rm raw}$): we use the architectural flexibility of the PINN  by making the neural network directly approximate the pure isovector form factor, $F_\pi^{I=1}(s)$, from the raw $\tau$ data. We handle the dataset differences by using the mixing parameter $\alpha_{\rho\text{-}\omega}$ in Eq.~\eqref{eq:mixing} as a conditional architectural switch in the forward pass. This way, the network handles the two data streams seamlessly, creating a joint representation, with $\alpha_{\rho\text{-}\omega} = 0$ for the $\tau$-decay channel and $\alpha_{\rho\text{-}\omega} \neq 0$ for the $e^+e^-$ annihilation channel (see Fig.~\ref{fig:schematic}).
  \end{itemize}
\end{enumerate}
We also train on all datasets combined.
\begin{enumerate}
\setcounter{enumi}{4}
  \item 
  Set D (spacelike + $e^+e^-$ data including CMD-3 + raw $\tau$ decay data with channel switch): We combine all available datasets to evaluate the overall model consistency. Similar to Set C$_{\rm raw}$, the $\tau$ decay data is incorporated without explicit $R_{\rm IB}$ corrections, relying on the conditional mixing parameter $\alpha_{\rho\text{-}\omega}$ to handle the channel transition directly within the network architecture.
\end{enumerate}

%%%%%%%%%%%%%%%%%%%%%%%%%%%%%%%%%%%%%%%%%%%%%%%%%%%%%%%
\section{Results}\label{sec:results}
\subsection{Global analytic structure and the zero-free landscape}
\noindent
We first recover the global analytic structure of the extracted pion electromagnetic form factor, $F_{\pi}(s)$, across the complex $s$ plane. Fig.~\ref{fig:complex_plane_3d} shows its absolute magnitude across the first Riemann sheet. The multiparticle branch cut lies along the positive real axis, starting from the two-pion threshold across the peak on the cut near $\text{Re}(s) \approx 0.6$ GeV$^2$, reproducing the profile of the $\rho(770)$ vector resonance.\footnote{Because the pole sits beneath the branch cut on the unphysical second Riemann sheet, the observed profile on the first sheet remains smooth and finite.} Constrained by $\mathcal L_{\rm analyticity}$, the PINN yields a stable reconstruction without introducing unphysical poles or numerical artefacts away from the real axis.

The phase topology mapped in Fig.~\ref{fig:complex_plane_phase_3d} confirms that the network output adheres to the Schwarz reflection principle. Approaching the cut from the upper half-plane yields a positive phase, whereas approaching from the lower half-plane produces a mirror-image negative value. Below the elastic threshold, the phase vanishes identically on the real axis. Beyond the threshold, the branch cut opens, splitting the phase and revealing a sharp, step-like cliff across the real axis. This is enforced by design, as we only calculate the form factor in the upper half of the complex plane; the form factor in the lower half is the complex conjugate of the upper half. This behaviour is explained by Watson's final-state interaction theorem. In the low-energy elastic region, the phase is locked to the scattering phase shift of two interacting pions, rising smoothly with energy, passing through $\pi/2$ around the $\rho(770)$ resonance and continuing toward $\pi$. The phase monotonicity loss ($\mathcal{L}_{\text{monotonicity}}$) ensures this stable angular trajectory in both elastic and inelastic domains.

%%%%%%%%%%%%%%%%%%%%%%%%%%%%%%%%%%%%%%%%%%%%%%%%%%%%%%%
\begin{figure}[!htb]
    \centering
    \includegraphics[width=\columnwidth]{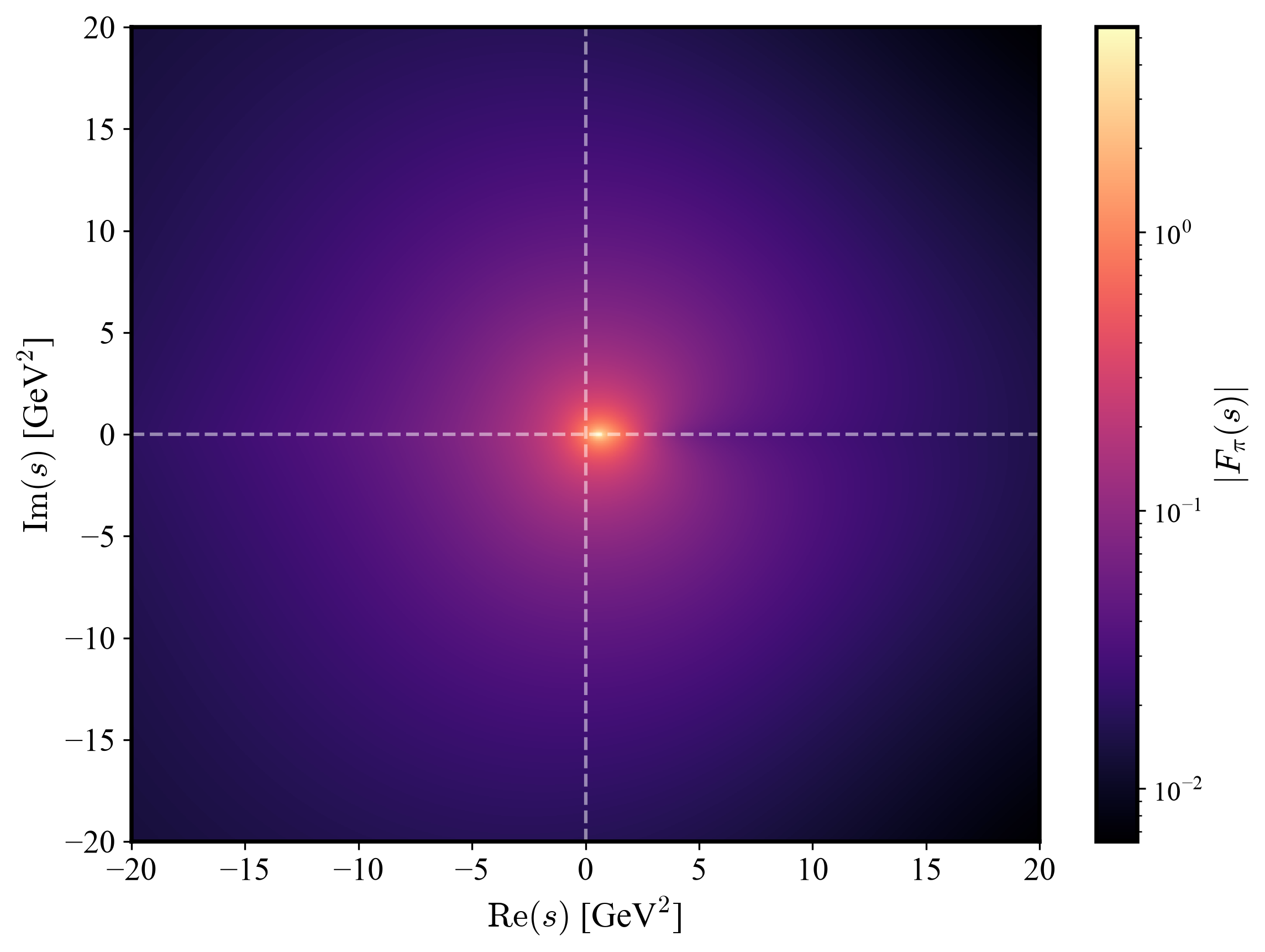}
    \caption{\label{fig:no_zeros} Heat map of $|F_\pi(s)|$ confirming that the predicted form factor contains no zeros in the interior of the $\mathcal S$ plane.}
%\end{figure}
%%%%%%%%%%%%%%%%%%%%%%%%%%%%%%%%%%%%%%%%%%%%%%%%%%%%%%%
\vspace{0.75cm}
%%%%%%%%%%%%%%%%%%%%%%%%%%%%%%%%%%%%%%%%%%%%%%%%%%%%%%%
%\begin{figure}[!t]
    \centering
    \includegraphics[width=\columnwidth]{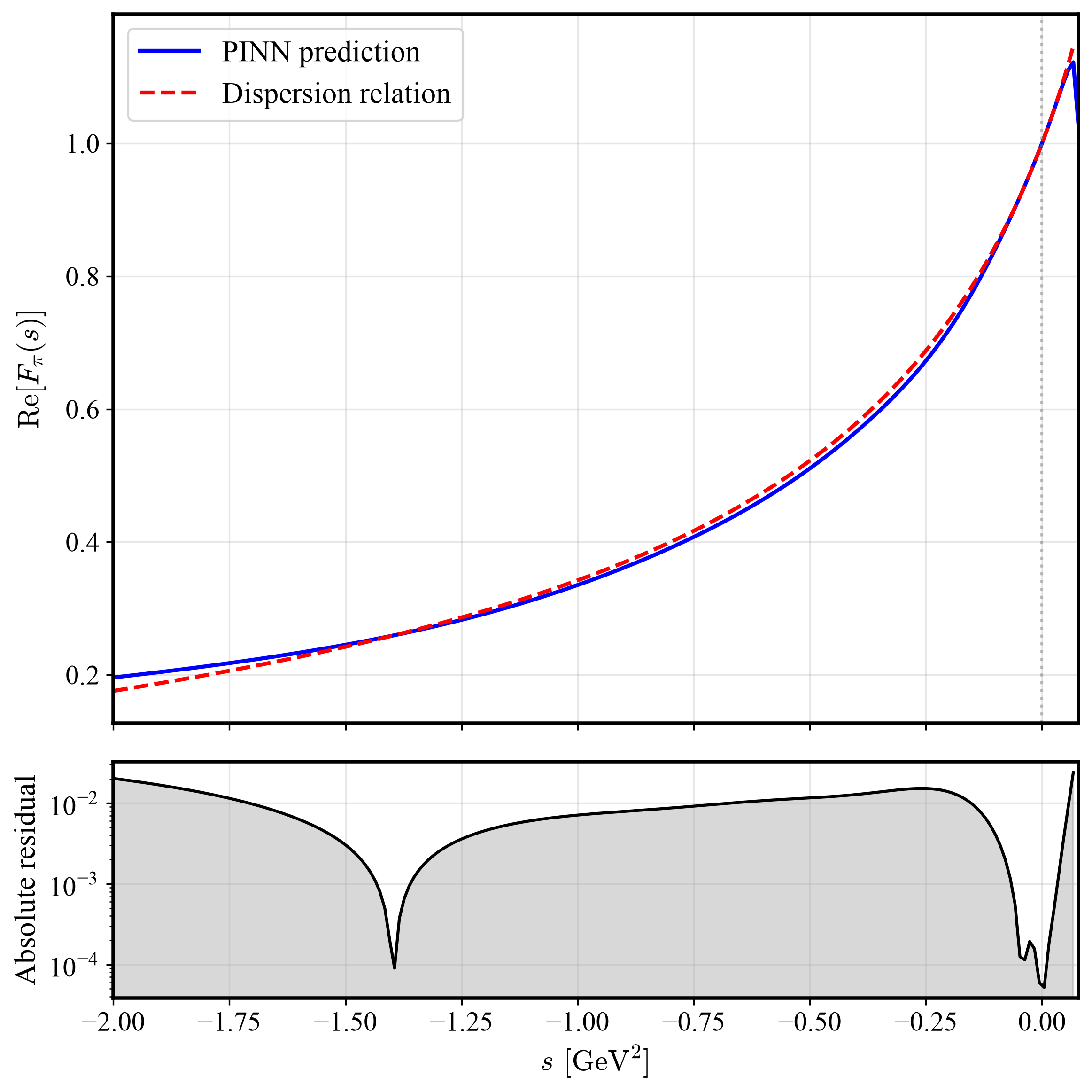}
    \caption{\label{fig:dispersion} Convergence of the dispersion relation loss.}
\end{figure}
%%%%%%%%%%%%%%%%%%%%%%%%%%%%%%%%%%%%%%%%%%%%%%%%%%%%%%%
%%%%%%%%%%%%%%%%%%%%%%%%%%%%%%%%%%%%%%%%%%%%%%%%%%%%%%%

Fig.~\ref{fig:no_zeros} illustrates the logarithmic magnitude landscape over a wide kinematic range. Since the network has not been trained on any zero-free constraint, the absence of unphysical destructive interference patterns provides an independent, data-driven support for the zero-free hypothesis. We see a continuous drop in the deep-spacelike and complex-UV regions, aligning with pQCD scaling expectations ($1/s$).

We also test the network's prediction against the dispersion relation [Eq.~\eqref{eq:dispersion}]. Fig.~\ref{fig:dispersion} illustrates the real component, $\text{Re}[F_\pi(s)]$ up to the two-pion threshold, $s \in [-2.0, 4m_\pi^2]$. We see that the absolute residual, $|\text{Re}[F_\pi(s)]_{\text{PINN}} - \text{Re}[F_\pi(s)]_{\text{disp.}}|$, remains tightly bounded. This adds further evidence that the network has constructed a holistic amplitude on the first Riemann sheet where the local real and imaginary predictions are holomorphic.

% %%%%%%%%%%%%%%%%%%%%%%%%%%%%%%%%%%%%%%%%%%%%%%%%%%%%%%%
\subsection{Form factor across spacelike and timelike regions}
\noindent
Fig.~\ref{fig:global_st} illustrates the predicted pion form factor $|F_\pi(s)|^2$ across the spacelike and timelike regions alongside the experimental datasets. Using the $e^+e^-$ annihilation data for the timelike region (Set A), the network produces a curve that satisfies the charge normalisation and smoothly passes through the low-energy spacelike points, reproducing the $\rho(770)$ vector resonance. (Unless specifically noted otherwise, all results presented in Sec.~\ref{sec:results} are computed using dataset Set A.) The smoothness stems from the dispersion relation that acts as an analytical anchor; the shape of the curve in the timelike region dictates the spacelike behaviour, forcing the network to maintain a holomorphic and physically plausible transition between the two regions.
% %%%%%%%%%%%%%%%%%%%%%%%%%%%%%%%%%%%%%%%%%%%%%%%%%%%%%%%
\begin{figure}[!t]
    \centering
    \includegraphics[width=\columnwidth]{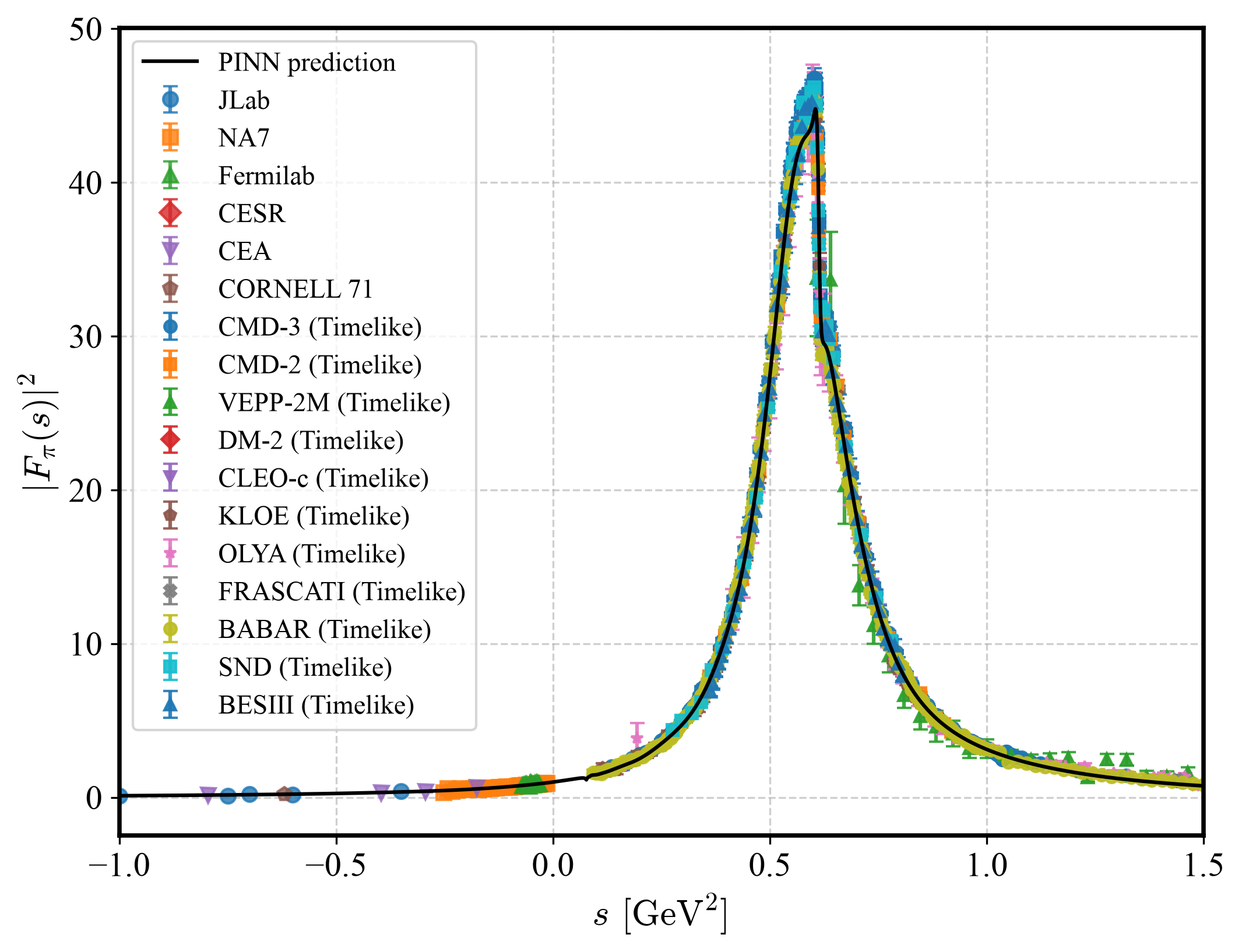}
    \caption{\label{fig:global_st}  The predicted pion electromagnetic form factor squared $|F_\pi(s)|^2$ (black curve) with its associated $1\sigma$ uncertainty band (imperceptible at this scale as $\sigma \sim 10^{-2}$), plotted alongside experimental data across both spacelike ($s < 0$) and timelike ($s > 4m_\pi^2$) kinematic regions. The plot highlights the continuous transition through the low-energy region, the primary $\rho(770)$ resonance peak, and the narrow $\rho-\omega$ interference structure around $s \approx 0.6\text{ GeV}^2$.}
\end{figure}
% %%%%%%%%%%%%%%%%%%%%%%%%%%%%%%%%%%%%%%%%%%%%%%%%%%%%%%%
% %%%%%%%%%%%%%%%%%%%%%%%%%%%%%%%%%%%%%%%%%%%%%%%%%%%%%%%
\begin{figure}[!]
    \centering
    \includegraphics[width=\columnwidth]{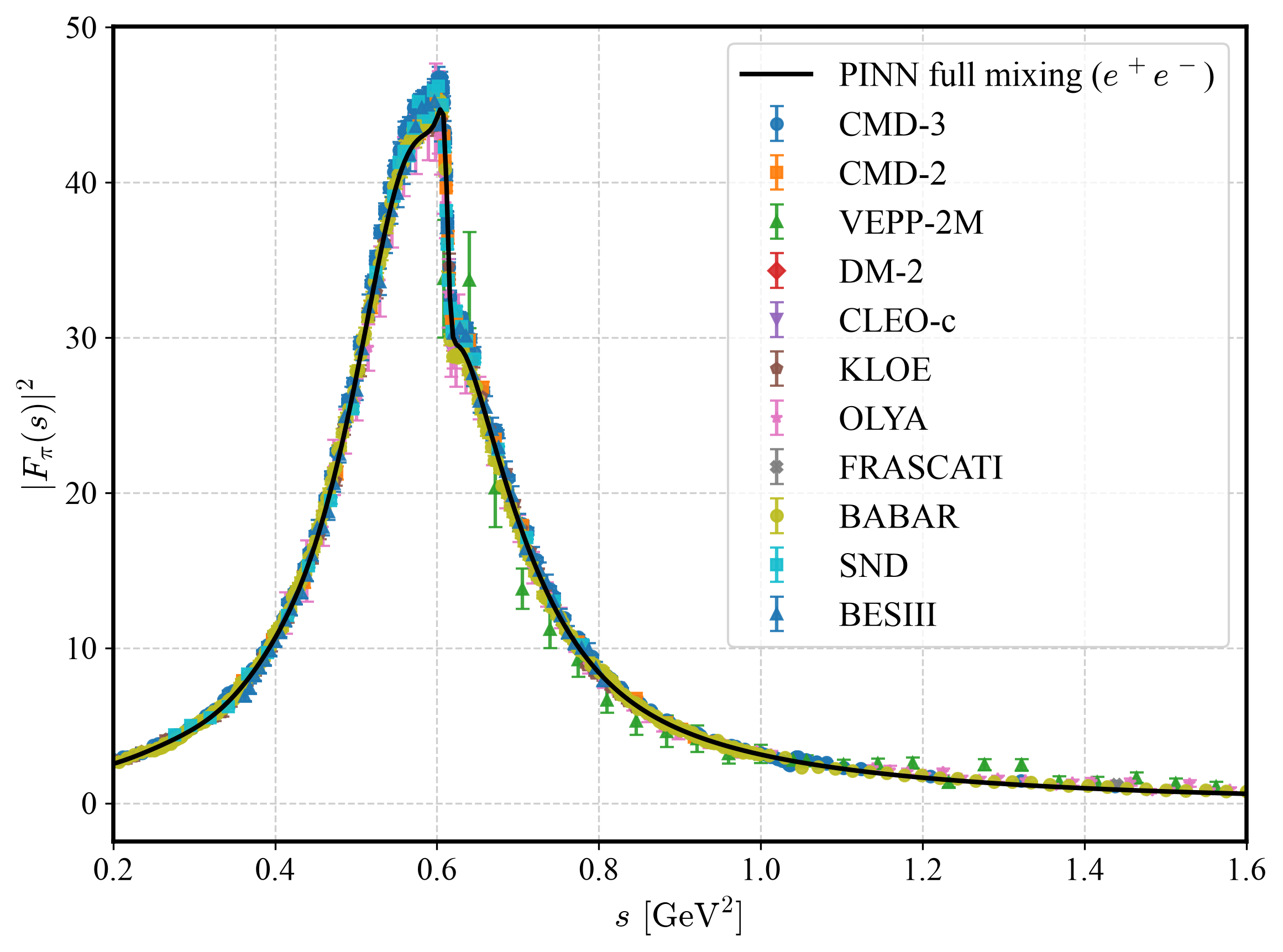}
    \caption{PINN prediction for the timelike region with the $e^+e^-$ datasets showing the $\rho$ resonance along with the \emph{dip-bump} pattern from the $\rho$--$\omega$ interference. \label{fig:timelike}}
\end{figure}
%%%%%%%%%%%%%%%%%%%%%%%%%%%%%%%%%%%%%%%%%%%%%%%%%%%%%%%

The network natively isolates the strong-interaction physics (the isovector $I=1$ state) and, at the same time,  captures the interference pattern from $\rho\text{--}\omega$ mixing. As shown in Fig.~\ref{fig:global_st}, adjusting the $\alpha_{\rho-\omega}$ switch from Eq.~\eqref{eq:mixing} allows the architecture to seamlessly incorporate the isoscalar ($I=0$) $\rho\text{--}\omega$ mixing mechanism without requiring any changes to the underlying neural network core. This mixing leaves a highly localised \emph{dip-bump} ripple directly on the steep slope of the dominant $\rho$ peak.

Fig.~\ref{fig:timelike} shows this explicitly for the electron-positron annihilation data [$e^+e^- \rightarrow \pi^+\pi^-(\gamma)$]. With the switch active ($\alpha_{\rho-\omega} \neq 0$), the network tracks the abrupt phase swing and the small (less than a per cent) interference ripple with high precision, demonstrating that it has learned a universal, underlying representation of the form factor rather than memorising individual datasets. Setting $\alpha_{\rho-\omega} = 0$ instead makes the network correctly ignore all interference, tracking the pure vector resonance smoothly over the data points; this is explored further in Sec.~\ref{sec:ablation}.

%%%%%%%%%%%%%%%%%%%%%%%%%%%%%%%%%%%%%%%%%%%%%%%%%%%%%%%
\begin{figure}[!t]
    \centering
    \includegraphics[width=\columnwidth]{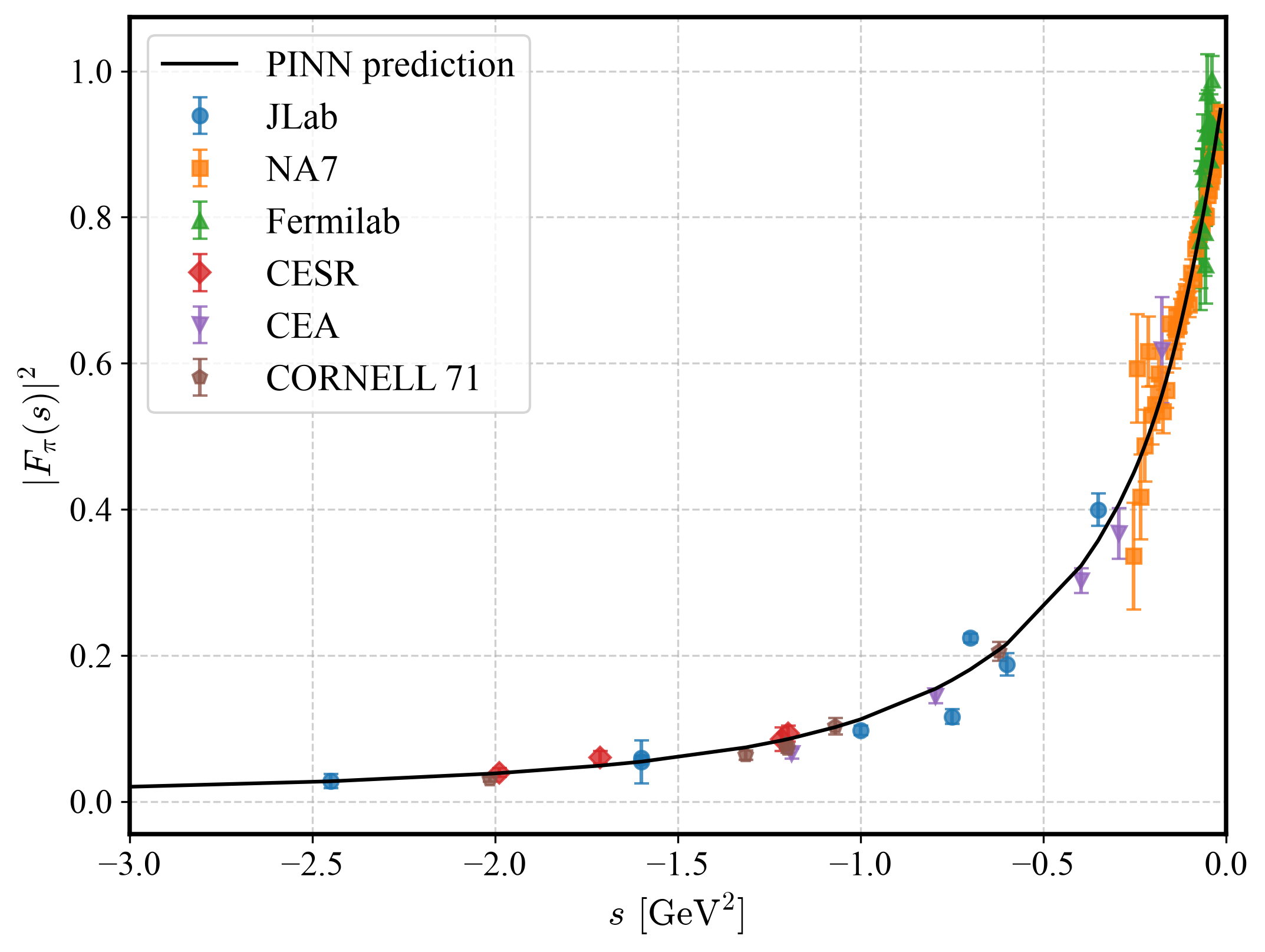}
    \caption{PINN prediction for the spacelike region, overlaid on the experimental data.\label{fig:spacelike}}
\end{figure}
To complete the physical bridge, Fig.~\ref{fig:spacelike} illustrates the spacelike region of $F_\pi$ alongside the experimental electroproduction data. The network produces a stable, monotonic damping that effectively matches the experimental points. The absence of unphysical oscillations or sudden upturns is consistent with the short-distance limits of quantum field theory
% %%%%%%%%%%%%%%%%%%%%%%%%%%%%%%%%%%%%%%%%%%%%%%%%%%%%%%%
% %%%%%%%%%%%%%%%%%%%%%%%%%%%%%%%%%%%%%%%%%%%%%%%%%%%%%%%
\subsection{The spacelike asymptotics}
\noindent
We analyse the asymptotic scaling of the charged pion electromagnetic form factor in Fig.~\ref{fig:asymptotic_scaling}, plotted as $Q^2 |F_\pi(Q^2)|$ versus $Q^2$. This range illustrates the transition between non-perturbative and perturbative QCD regimes, a region targeted by experiments such as JLab E12-19-006. We benchmark the neural network's prediction against four theoretical bounds. 
\begin{itemize}
    \item[--]  Curve A represents the full, non-perturbative prediction derived from the Dyson-Schwinger equation (DSE) framework~\cite{Chang:2013nia}, which accounts for the dressed-quark mass function and dynamical chiral symmetry breaking (DCSB) across all energy scales. It transitions smoothly from the low-$Q^2$ non-perturbative regime into the high-$Q^2$ asymptotic regime, serving as the theoretical benchmark for this analysis.

    \item[--] Curve B shows the empirical monopole form ($F_\pi(Q^2)=1/(1+Q^2/m_\rho^2)$), representing the classic VMD extension into high energies. While this parametrisation matches low-energy data, it acts as an upper bound that overestimates the non-perturbative QCD behaviour at higher $Q^2$. 
\end{itemize}
The remaining curves map the short-distance, pQCD limits using the leading-order, leading-twist formula, $Q^2 F_\pi(Q^2)\approx 8\pi \alpha_s(Q^2)f_\pi^2\omega_\varphi^2$, with $\omega_\varphi=\int_0^1 dx\, \varphi_\pi(x)/(3x)$. 
\begin{itemize}
    \item[--] Curve C is the pQCD prediction obtained using a simplified, semi-analytical pion distribution amplitude (DA), $\phi_\pi(x;Q^2 = 4 \,\text{GeV}^2)\approx x^\varrho (1-x)^\varrho \Gamma[2(\varrho+1)]/\Gamma[(\varrho+1)]^2$, with $\varrho = 0.3$. Instead of computing the full non-perturbative machinery across all scales, this curve isolates how a realistic, flat DA profile, structured by DCSB at a standard hadronic initialisation scale of $4$ GeV$^2$, modulates the valence-quark gluon exchange kernel. Because it relies strictly on a hard-scattering factorisation framework, Curve C is initialised exclusively in the deep-spacelike domain ($Q^2 \gtrsim 7$ GeV$^2$) where the strong coupling constant $\alpha_s(Q^2)$ is sufficiently small to ensure perturbative convergence. 
    
    \item[--]
    Curve D shows the leading-order textbook formula for asymptotic pQCD using both the asymptotic DA, $\phi_{as}(x)=6 x (1-x)$ (the one we use in Appendix~\ref{sec:nlo}), and a standard running $\alpha_s(Q^2)$. This limit illustrates the expected behaviour if local, non-perturbative physics and dynamical mass generation vanished instantly outside the low-energy regime.
\end{itemize}
The PINN prediction (solid black curve) successfully bridges the non-perturbative and perturbative domains up to $Q^2 = 20\text{ GeV}^2$. At low-to-intermediate momentum transfers ($Q^2 \lesssim 4\text{ GeV}^2$), the network smoothly interpolates existing spacelike measurements while staying well below the empirical monopole bound (Curve B), which lacks the QCD-driven logarithmic suppression at higher momentum transfer.
\begin{figure}[!t]
    \centering
    \includegraphics[width=\columnwidth]{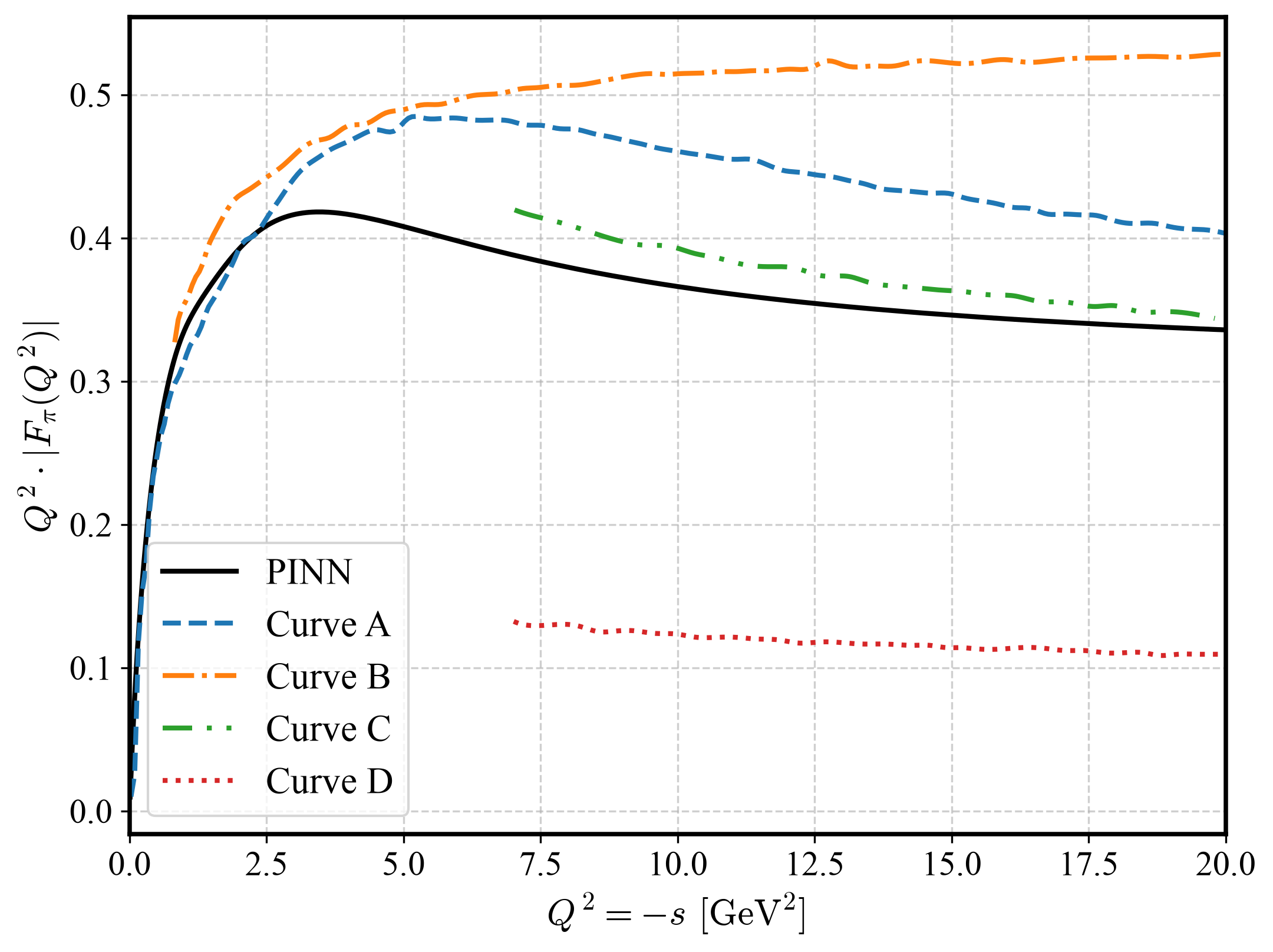}
    \caption{Asymptotic scaling of the pion form factor multiplied by the squared momentum transfer, $Q^2 |F_\pi(Q^2)|$, in the spacelike domain up to $Q^2 = -s = 20\text{ GeV}^2$. The solid black curve shows the PINN framework prediction. For comparison, Curve A (blue dashed) represents a QCD theoretical model bridging large and short distance scales; Curve B (orange dash-dotted) corresponds to a standard monopole parametrisation; and Curves C (green dash-dotted) and D (red dotted) illustrate high-energy short-distance quark-gluon approaches valid for $Q^2 \gtrsim 7\text{ GeV}^2$.\label{fig:asymptotic_scaling}}   
\end{figure}
%%%%%%%%%%%%%%%%%%%%%%%%%%%%%%%%%%%%%%%%%%%%%%%%%%%%%%%
In the high-$Q^2$ regime ($Q^2 \gtrsim 7\text{ GeV}^2$), the PINN prediction closely tracks the continuum DSE benchmark (Curve A) and approaches the broad-scale hard-scattering trajectory (Curve C). This close agreement stems from the fact that both the PINN (via its loss constraints) and the DSE/Curve C formulations preserve the non-perturbative dressing effects and the broad pion distribution amplitude generated by DCSB. Conversely, the PINN curve remains well above the asymptotic textbook limit (Curve D). Curve D assumes an immediate collapse to the asymptotic distribution amplitude $\phi_{as}(x)$, underestimating the persistent non-perturbative mass generation that survives at intermediate scales ($Q^2 \sim 10\text{--}20\text{ GeV}^2$). By enforcing asymptotic $1/s$ power-law falloff and Cauchy-Riemann analyticity without imposing a rigid parameterisation, the network organically favours the realistic, non-perturbative continuum QCD trajectory over rigid empirical extensions or premature asymptotic assumptions.

%%%%%%%%%%%%%%%%%%%%%%%%%%%%%%%%%%%%%%%%%%%%%%%%%%%%%%%
\begin{figure}[!t]
    \centering
    \includegraphics[width=\columnwidth]{"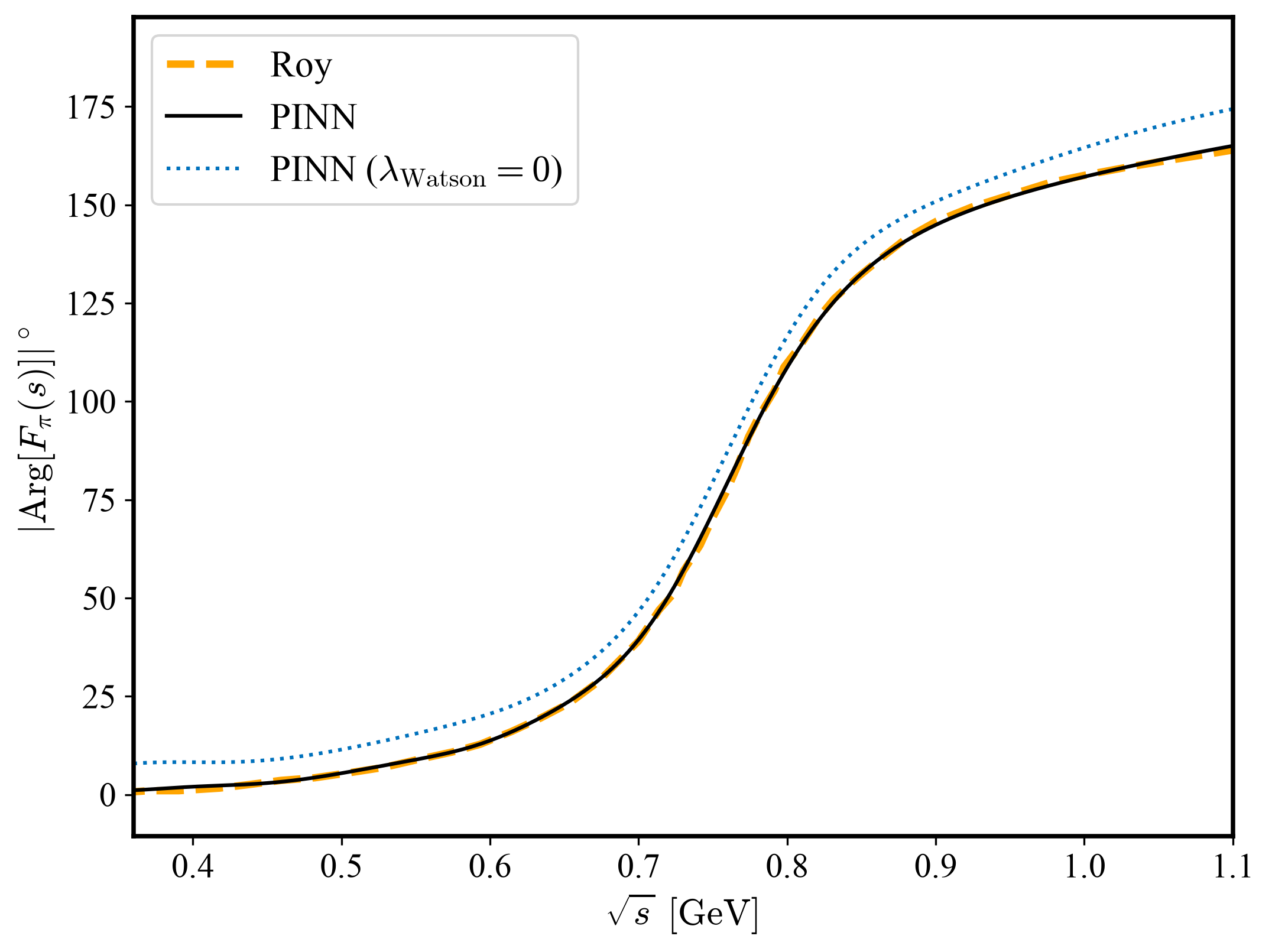"}
    \caption{The extracted phase of the form factor (black-solid) compared with the Roy equation data (orange-dashed) from Ref.~\cite{Colangelo:2018mtw}. The blue dotted curve shows the PINN prediction with Watson loss disabled ($\lambda_{\text{Watson}} = 0$), illustrating the systematic phase drift that results once the loss is removed. \label{fig:phase}}   
\end{figure}
%%%%%%%%%%%%%%%%%%%%%%%%%%%%%%%%%%%%%%%%%%%%%%%%%%%%%%%

%%%%%%%%%%%%%%%%%%%%%%%%%%%%%%%%%%%%%%%%%%%%%%%%%%%%%%%%%%%%%%%%%%%%%%

\subsection{The complex phase and Watson's theorem}\label{sec:res_phase}
\noindent
We evaluate the PINN's reconstruction of the complex phase of the pion electromagnetic form factor in Fig.~\ref{fig:phase}, to benchmark the prediction directly against the rigorous solutions of the Roy equations. The extracted phase tracks the Roy equation solution very closely across the full window, successfully capturing the localised derivative ($d\delta/ds$) that defines the physical decay width of the $\rho$ meson. The absence of any systematic offset between the two curves reflects the explicit Watson's theorem constraint ($\mathcal{L}_{\text{Watson}}$, Eq.~\ref{eq:watson-loss}) built into the training loss, which directly locks the phase of the isovector form factor to the elastic $\pi\pi$ scattering phase shift $\delta_1^1(s)$ in this kinematic region. To isolate the role of this constraint directly, Fig.~\ref{fig:phase} also shows the phase obtained when Watson's loss is switched off during training ($\lambda_{\text{Watson}} = 0$). The resulting curve (blue dotted) visibly departs from both the full PINN prediction and the Roy-equation solution, with the deviation growing with energy rather than remaining a static offset. This is the same effect quantified in the ablation study (Table~\ref{tab:result_abalation}). Analyticity and dispersion alone constrain the global holomorphic structure of $F_\pi(s)$, but they do not by themselves fix the functional form of the low-energy phase; that information must come from Watson's theorem. Without it, the network still produces a smooth, monotonic phase (since $\mathcal{L}_{\text{monotonicity}}$ is still active), but one that is systematically displaced from the physical $\pi\pi$ scattering phase shift.

Unlike a purely data-driven fit, the full PINN is architecturally constrained to satisfy elastic unitarity, rather than merely interpolating sparse experimental phase inputs. This agreement is nontrivial as standard Roy-equation solutions are anchored by sensitive low-energy boundary conditions such as the experimental $S$-wave scattering length ($a_0^0$) to which the PINN has no direct access. Instead, these global physics loss terms collectively constrain the low-energy phase; Watson’s theorem sets the target phase alignment in the elastic regime, while analyticity and dispersion enforce its smooth continuation across the entire kinematic domain.

Since the full PINN's phase reconstruction in Fig.~\ref{fig:phase} tracks the Roy-equation baseline with no discernible deviation (in contrast to the $\lambda_{\text{Watson}} = 0$ curve), we exploit the network's direct access to the second Riemann sheet to extract the $\rho(770)$ mass and width. A resonance is defined, process-independently, as a pole of the $\pi\pi$ scattering amplitude on the second (unphysical) Riemann sheet~\cite{Eden:1966dnq, BADALYAN198231}. We extract $m_\rho$ and $\Gamma_\rho$ by locating this pole directly, rather than relying on the conventional Breit-Wigner-motivated extraction from the real-axis phase (the 90° phase-crossing point and its local slope), which implicitly neglects any non-resonant background phase. Sheet II is reached from the network's Sheet-I output $F_\pi^I(s)$ via the elastic-unitarity continuation,
\[F_\pi^{II}(s) = F_\pi^I(s)\,e^{-2i\delta_1^1(s)},\]
valid below the inelastic threshold where Watson's theorem [Eq.~\ref{eq:watson_relation}] fixes the phase of $F_\pi^I(s)$ to the $P$-wave $\pi\pi$ phase shift $\delta_1^1(s)$. This is the standard technique used to extract light-meson poles from dispersive phase-shift solutions~\cite{Oller:2024lrk, Caprini:2005zr}. Writing the inverse amplitude as $D(s) \equiv 1/F_\pi^{II}(s) = S_1^1(s)/F_\pi^I(s)$, where $S_1^1(s)=e^{2i\delta_1^1(s)}$ is the elastic $\pi\pi$ $S$-matrix element, makes explicit that a zero of $D(s)$ is equivalent to a zero of $S_1^1(s)$ itself, i.e., a genuine $\pi\pi$ resonance pole, provided $F_\pi^I(s)$ contributes no spurious zeros of its own. This is supported by the zero-free analytic structure established in Fig.~\ref{fig:no_zeros}. Searching for the zero of $D(s)$ rather than the pole of $F_\pi^{II}(s)$ directly also converts an ill-posed numerical divergence into a well-behaved root-finding problem. 

A two-dimensional Newton-Raphson search over $s=s_R+is_I$, initialised near the $\rho$ peak, converges to
\[s_{\text{pole}} = (0.5756 - 0.1036\,i)\ \text{GeV}^2, \quad |D(s_{\text{pole}})|^2 \sim 2\times10^{-17},\]
confirming that the root is resolved to machine precision rather than sitting at a shallow local minimum. Using the standard parameterisation $\sqrt{s_{\text{pole}}} = m_\rho^{\text{pole}} - i \Gamma_\rho^{\text{pole}}/2$, we extract the conventional pole mass and width,
\begin{align*}
   & m_\rho^{\text{pole}} = \mathrm{Re}\sqrt{s_{\text{pole}}} = 761.72\pm 1.04\ \text{MeV}, \\
    &\Gamma_\rho^{\text{pole}} = -2\,\mathrm{Im}\sqrt{s_{\text{pole}}} = 135.99 \pm 1.20 \ \text{MeV}.
\end{align*}
As a cross-check, we also extract the resonance parameters from the conventional real-axis phase-crossing definition, $\delta_1^1(m_\rho^2)=\pi/2$, with the width inferred from the local phase slope, $\Gamma_\rho = \left[m_\rho\,(d\delta_1^1/ds)|_{s=m_\rho^2}\right]^{-1}$, giving $m_\rho = 773.36\pm 0.38\ \text{MeV}$ and $\Gamma_\rho \approx 149.59\pm 2.42\ \text{MeV}$.
This agrees with the pole extraction to within $\sim 1.5$\% in mass, while the width differs by about $10$\%, a statistically significant spread given the sub-MeV/few-MeV uncertainties on each extraction. This is expected as a slowly-varying non-resonant background phase barely shifts the steep $90^\circ$ crossing point, but contributes directly, and without suppression, to the local slope used to infer $\Gamma_\rho$, a bias the background-free pole extraction does not inherit. Both extractions are in the same ballpark as the PDG $T$-matrix pole, $\sqrt{s_{\text{pole}}}=(761\text{-}765)-i\,(71\text{-}74)\ \text{MeV}$~\cite{ParticleDataGroup:2024cfk}. Our pole mass agrees to within $1\%$, while our extracted pole width sits approximately $4\text{-}8\%$ below the PDG world average. This minor width deficit is a natural target for future multi-channel extensions of the framework. 

%%%%%%%%%%%%%%%%%%%%%%%%%%%%%%%%%%%%%%%%%%%%%%%%%%%%%
\subsection{Uncertainty quantification}\label{sec:stat_cali}
\noindent
To quantify the reliability of the predicted observables, such as the pion charge radius or the two-pion hadronic contribution to the muon anomalous magnetic moment, we decouple statistical data-driven fluctuations from systematic model choices by decomposing the total uncertainty budget into independent statistical ($\sigma_{\text{stat}}$) and calibration ($\sigma_{\text{cali}}$) components, quoting observables as $\hat{\mathcal O} \pm \sigma_{\text{stat}} \pm \sigma_{\text{cali}}$. The two are estimated from independent ensembles and combined in quadrature.

To estimate the statistical uncertainty ($\sigma_{\text{stat}}$), we employ a Monte Carlo bootstrapping procedure on the experimental dataset. We first generate pseudo-datasets by sampling individual measurements from Gaussian distributions defined by their reported central values and standard errors. For high-statistics datasets providing published covariance matrices, namely KLOE, BaBar, Belle, and CLEO, data points are drawn jointly from the corresponding multivariate Gaussian distribution to preserve bin-to-bin correlations; for all remaining datasets lacking published covariances, measurements are sampled independently. We construct $N = 500$ such pseudo-datasets and train an independent network on each realisation, where the bootstrap seed simultaneously governs the data sampling, network weight initialisations, and minibatch sequencing, so that $\sigma_{\text{stat}}$ reflects both data resampling and ordinary training variation. Final central values and their associated statistical uncertainties ($\sigma_{\text{stat}}$) for all observables are evaluated as the ensemble mean and standard deviation across these bootstrap realisations, respectively.

The loss weighting hyperparameters $\lambda_i$ carry no physical units and do not correspond to measurable observables or physical couplings; they are akin to calibration parameters in an experiment. As long as the network satisfies the corresponding constraint via $\mathcal L_i$, the physical observables will ideally satisfy $\partial\mathcal O/\partial\lambda_i=0$. In practice, the $\lambda_i$ act as numerical \emph{hyperparameters} that scale gradient step sizes during backpropagation, preventing any single constraint from dominating the optimisation landscape, and the residual dependence of $\mathcal O$ on $\lambda_i$ away from this ideal is precisely what we quantify below as $\sigma_{\text{cali}}$.

Because the relative weights ($\lambda_i$) represent modelling choices with no unique theoretical prescription, we systematically sample the ten-dimensional hyperparameter space to quantify the resulting variation in our observables. Instead of varying parameters one at a time, we vary each hyperparameter $\lambda$ over a factor of two, spanning the interval $[\lambda/2, 2\lambda]$, using a Latin Hypercube Sampling scheme. To guarantee uniform, space-filling coverage across the parameter domain that random sampling cannot achieve, the hyperparameter space is sampled at $N = 64$ points generated via a centred-discrepancy-optimised Latin hypercube. For each configuration, a neural network is trained using an identical adaptive learning-rate schedule and convergence criterion. The calibration uncertainty on each observable is then defined as the standard deviation across these $N = 64$ converged network configurations. Because each calibration trial represents a fully converged fit to the unperturbed central experimental data, the statistical and calibration ensembles are computationally independent and probe orthogonal sources of uncertainty. Consequently, we report the statistical and calibration uncertainties separately throughout, combining them in quadrature ($\sigma^2_{\text{tot}} = \sigma_{\text{stat}}^2 + \sigma_{\text{cali}}^2$) when determining the total uncertainty.

%%%%%%%%%%%%%%%%%%%%%%%%%%%%%%%%%%%%%%%%%%%%%%%%%%%%%
\subsection{Charge radius of the pion}
\noindent
The Taylor expansion of $F_\pi(s)$ at low momentum transfer ($s \to 0$) is defined in Eq.~\eqref{eq:charge_radius}. The physical consistency of our PINN framework is further benchmarked by extracting the pion squared charge radius, $\langle r_\pi^2 \rangle$, which is obtained from the first derivative of the form factor at zero momentum transfer: 
\begin{align}
    \langle r_\pi^2 \rangle = 3! \kappa^2 \left.\frac{dF_\pi(s)}{ds}\right|_{s=0},    
\end{align} 
where $\kappa = 0.197327$ GeV fm is the natural unit conversion factor from GeV$^{-1}$ to fm. As summarised in Table~\ref{tab:result_compare}, our results for the pion form-factor moments are in strong agreement with established theoretical and experimental values in the literature. For the squared charge radius $\langle r_\pi^2 \rangle$, our result of $0.435 \pm 0.008_{\text{stat}} \pm 0.007_{\text{cali}}$ fm$^2$ is consistent with the PDG world average~\cite{ParticleDataGroup:2024cfk}. Furthermore, it remains fully compatible with the lattice QCD prediction~\cite{Gao:2021xsm} at $0.67\sigma$, while reducing the statistical uncertainty relative to the lattice extraction by more than a factor of two. These precision gains are supported by the global nature of our framework, which constrains the derivative at $s=0$ using the full spacelike and timelike spectrum through dispersive integrals rather than local polynomial approximations.

For the higher-order curvature terms, our prediction for the fourth-order moment $\langle r_\pi^4 \rangle = 0.748 \pm 0.057_{\text{stat}} \pm 0.048_{\text{cali}}$ fm$^4$ shows excellent agreement ($0.39\sigma$) with two-loop ChPT~\cite{Bijnens:1998fm} and sits just above the range of the dispersive bounds obtained in Ref.~\cite{Ananthanarayan:2011xt}. Notably, our network constrains the uncertainty of $\langle r_\pi^4 \rangle$ to about two-thirds of that from two-loop ChPT.

Finally, our sixth-order moment $\langle r_\pi^6 \rangle = 4.01 \pm 0.95_{\text{stat}} \pm 0.90_{\text{cali}}$ fm$^6$ suggests a slightly enhanced low-energy curvature compared to earlier estimates, remaining consistent at the $0.8\sigma$ level with both the dispersive bounds of Ref.~\cite{Ananthanarayan:2011xt} and the estimate of Ref.~\cite{Truong:1998yx}. Such minor shifts are physically expected given that higher-order derivatives at $s=0$ act as sensitive probes of the complex-plane geometry, receiving non-local contributions from the timelike resonance spectrum and the high-$s$ asymptotic tail, both of which are dynamically coupled in our network via exact S-matrix constraints.

%%%%%%%%%%%%%%%%%%%%%%%%%%%%%%%%%%%%%%%%%%%%%%%%%%%%%%%
\begin{table}
\caption{\label{tab:result_compare}Comparison of the extracted pion charge radius squared $\langle r_\pi^2 \rangle$, fourth-order radius $\langle r_\pi^4 \rangle$, and sixth-order radius $\langle r_\pi^6 \rangle$ from our PINN framework against existing theoretical, lattice QCD, ChPT, and empirical estimates from the literature. All values are given in units of $\text{fm}^n$ (where $n = 2, 4, 6$). Errors for our work are presented in the form $\hat{\mathcal O} \pm \sigma_{\text{stat}} \pm \sigma_{\text{cali}}$ as defined in Sec.~\ref{sec:stat_cali}}
{\renewcommand{\arraystretch}{1.5}
\begin{tabular*}{\columnwidth}{@{\extracolsep{\fill}}l  rr}
\hline\hline
Observable & Previous estimation (in fm$^n$) & Our result (in fm$^n$)\\\hline
\multirow{2}{*}{$\langle r_\pi^2 \rangle$}
    &Lattice~\cite{Gao:2021xsm}: $0.42 \pm 0.02$  
        &\multirow{2}{*}{ $0.435 \pm 0.008\pm 0.007$
        } \\
    &PDG~\cite{ParticleDataGroup:2024cfk}: 
    $0.434\pm 0.005$&\\ \hline
\multirow{2}{*}{$\langle r_\pi^4 \rangle$}
    &$2$-loop ChPT~\cite{Bijnens:1998fm}: $0.7\pm 0.11$             
        &\multirow{2}{*}{$0.748 \pm 0.057\pm 0.048$
        } \\
    &Ref.~\cite{Ananthanarayan:2011xt}: $0.68- 0.72$&\\ \hline
\multirow{2}{*}{$\langle r_\pi^6 \rangle$}
    &Ref.~\cite{Ananthanarayan:2011xt}: $2.95-3.11$
        &\multirow{2}{*}{$4.01\pm0.95\pm 0.90$
        }\\
    &Ref.~\cite{Truong:1998yx}: $2.89 \pm 0.12$&\\
\hline\hline
\end{tabular*}}
\end{table}
%%%%%%%%%%%%%%%%%%%%%%%%%%%%%%%%%%%%%%%%%%%%%%%%%%%%%%%

%%%%%%%%%%%%%%%%%%%%%%%%%%%%%%%%%%%%%%%%%%%%%%%%%%%%%%%
\subsection{Implications for the muon anomalous magnetic moment}
\noindent
The hadronic contribution to the anomalous magnetic moment of the muon can be written as~\cite{Colangelo:2018mtw, Czarnecki:2001pv}:
\begin{equation}
a_\mu^{\pi \pi} = \left(\frac{\alpha_{em} m_\mu}{3 \pi}\right)^2 \int_{4m_\pi^2}^\infty ds \frac{K(s)}{s^2} R_{\pi \pi}(s),
\end{equation}
where $\alpha_{em} = e^2/4\pi$, is the fine structure constant, and $K(s)$ is the kernel function dominating at low energies:
\begin{align}
K(s) =& \frac{3s}{m_\mu^2}\Bigg[\frac{x^2}{2}(2-x^2)+\frac{(1+x^2)(1+x)^2}{x^2}\nonumber\\
&\ \times\left(\log(1+x)-x+\frac{x^2}{2}\right)+\frac{1+x}{1-x}x^2 \log x\Bigg],
\end{align}
with
\begin{align*}
x =& \frac{1-\sigma_\mu(s)}{1+\sigma_\mu(s)},~ ~\sigma_\mu(s) =\sqrt{1-\frac{4 m_\mu^2}{s}},    
\end{align*}
and $R_{\pi \pi}$ is related to the specific case of two-pion production ($e^+ e^- \to \pi^+ \pi^-$), and is directly proportional to the square of the pion form factor.
\begin{align*}
R_{\pi\pi}(s) = \frac{1}{4}\left(1-\frac{4 m_\pi^2}{s}\right)^{3/2}|F_\pi(s)|^2.
\end{align*}
In the presence of FSR, the above equation is modified to~\cite{Colangelo:2018mtw}:
\begin{align}
R^{\text{FSR}}_{\pi\pi}(s) =\left[1+\frac{\alpha_{em}}{\pi}\mathfrak f(s)\right]R_{\pi\pi}(s)    \label{eq:rad_cor}
\end{align}
where $\mathfrak f(s)$ is defined in Eq.~\eqref{eq:fsr}. In Table~\ref{tab:amupipi_compare}, we present our results for $a_\mu^{\pi\pi}$ integrated over three characteristic energy domains alongside prominent estimations in the literature. In the low-energy region below $0.63\,\text{GeV}$, our result ($130.66 \pm 0.67_{\text{stat}} \pm 1.20_{\text{cali}}$) is consistent with the dispersive evaluation of Colangelo et al.~\cite{Colangelo:2018mtw} within $0.14\sigma$ (evaluating total uncertainties in quadrature), while differing from Ananthanarayan et al.~\cite{Ananthanarayan:2016mns} by $1.67\sigma$. For the primary $\rho$-resonance region $[2m_\pi, 1.0\,\text{GeV}]$, our value of $496.37 \pm 2.08_{\text{stat}} \pm 1.70_{\text{cali}}$ is consistent with Colangelo et al.~\cite{Colangelo:2018mtw}, differing by $0.37\sigma$. Finally, extending the integration across the entire spectrum $[2m_\pi, \infty]$ yields $a_\mu^{\pi\pi} \times 10^{10} = 506.48 \pm 2.02_{\text{stat}} \pm 1.70_{\text{cali}}$, compatible with the benchmark evaluation of the Muon $g-2$ Theory Initiative, Ref.~\cite{Aoyama:2020ynm}, within $0.34\sigma$, and $1.45\sigma$ lower than Ref.~\cite{Davier:2017zfy}. 

Our physics-informed framework achieves a refined total uncertainty ($\pm 2.64$ in quadrature) compared to traditional evaluations (from $\pm 3.29$ to $\pm3.8$). Standard data-driven integrations directly propagate point-to-point experimental noise and local dataset discrepancies through the $K(s)/s^2$ integration kernel. In contrast, the PINN provides a globally consistent analytic framework by enforcing Cauchy-Riemann analyticity, Watson's final-state interaction theorem, and $1/s$ pQCD asymptotic falloff within the conformal $z$-plane. It suppresses unphysical local fluctuations and resolves sharp $\rho$-$\omega$ interference features without relying on rigid functional parameterisations. The resulting representation of $F_\pi(s)$ reproduces established low-energy moments and $a_\mu^{\pi\pi}$ integrals while establishing a robust, physics-constrained baseline for hadronic vacuum polarisation.

%%%%%%%%%%%%%%%%%%%%%%%%%%%%%%%%%%%%%%%%%%%%%%%%%%%%%%%
\begin{table}
\caption{\label{tab:amupipi_compare} Comparison of the two-pion hadronic vacuum polarisation contribution to the muon anomalous magnetic moment, $a_\mu^{\pi\pi} \times 10^{10}$, integrated over various centre-of-mass energy intervals ($\sqrt{s}$ in GeV). Our PINN estimations are benchmarked against existing dispersive and data-driven determinations from the literature.  Errors for our work are presented in the form $\hat{\mathcal O} \pm \sigma_{\text{stat}} \pm \sigma_{\text{cali}}$ as defined in Sec.~\ref{sec:stat_cali}.}
{\renewcommand{\arraystretch}{1.5}
\begin{tabular*}{\columnwidth}{@{\extracolsep{\fill}}l  rr}
\hline\hline
Energy range & \multirow{2}{*}{Previous estimations} &\multirow{2}{*}{Our estimation} \\
(GeV) & & \\ \hline
\multirow{2}{*}{$[2m_\pi,0.63]$}
    &$133.258 \pm 0.723$~\cite{Ananthanarayan:2016mns}  
        &\multirow{2}{*}{$130.66\pm 0.67\pm1.20$
        } \\
    &
    $130.47\pm 0.64$~\cite{Colangelo:2018mtw}&\\ \hline
$[2m_\pi,1.0]$
    &$495.0\pm 2.58$~\cite{Colangelo:2018mtw}             &
    $496.37\pm2.08\pm1.70$\\
    \hline
\multirow{2}{*}{$[2m_\pi, \infty]$}
    &$513.2\pm3.8$~\cite{Davier:2017zfy}
        &\multirow{2}{*}{$506.48\pm 2.02\pm1.70$} \\
    & $507.9 \pm 3.29$~\cite{Aoyama:2020ynm} &\\
    %https://arxiv.org/pdf/2006.04822 (page 37 and 40)
    %Eq.~3.49 of Ref.~\cite{Aoyama:2020ynm}
\hline\hline
\end{tabular*}}
\end{table}
%%%%%%%%%%%%%%%%%%%%%%%%%%%%%%%%%%%%%%%%%%%%%%%%%%%%%%%

%%%%%%%%%%%%%%%%%%%%%%%%%%%%%%%%%%%%%%%%%%%%%%%%%%%%%%%
\begin{table*}[t]
\caption{\label{tab:result_abalation} Performance comparison of pion form factor moments $\langle r_\pi^{2n}\rangle$, integrated HVP contributions $a_{\mu}^{\pi\pi} \times 10^{10}$, and P-wave phase-shift deviations $|\Delta\delta_1^1(s)|$ from the Roy equation at $s_0 = (0.8\text{ GeV})^2$ and $s_1 = (1.15\text{ GeV})^2$ across dataset variations (Sets A-D) and physics-loss ablation studies on Set A.}
{\renewcommand{\arraystretch}{1.3}
\begin{tabular*}{\textwidth}{@{\extracolsep{\fill}}l
rrrrrrrr}
\hline\hline
\multirow{2}{*}{Configuration} & 
\multirow{2}{*}{$\langle r_\pi^2 \rangle$ ($\text{fm}^2$)} & 
\multirow{2}{*}{$\langle r_\pi^4 \rangle$ ($\text{fm}^4$)} & 
\multirow{2}{*}{$\langle r_\pi^6 \rangle$ ($\text{fm}^6$)} & 
\multicolumn{3}{c}{$a_\mu^{\pi\pi} \times 10^{10}$} & 
\multicolumn{2}{c}{$|\Delta\delta_1^1(s)|$ in degrees} \\ 
\cline{5-7} \cline{8-9}
& & & & $[2m_\pi,0.63]$ & $[2m_\pi,1.0]$ & $[2m_\pi,\infty]$ & $s=s_0$ & $s=s_1$ \\ 
\hline
\multicolumn{9}{l}{\textbf{Data ablation (baseline sets)}} \\ 
\hline
Set A (no CMD-3) & $\mathbf{0.435}$ & $\mathbf{0.748}$ & $\mathbf{4.01}$ & $\mathbf{130.66}$ & $\mathbf{496.37}$ & $\mathbf{506.48}$ & $\mathbf{0.25}$ & $\mathbf{1.72}$ \\ 
Set B (CMD-3 only) & $0.447$ & $0.738$ & $3.335$ & $137.71$ & $520.08$ & $532.27$ & $0.08$ & $5.08$ \\ 
Set C$_{\rm raw}$ ($\tau$-decay) & $0.421$ & $0.665$ & $2.709$ & $131.36$ & $502.34$ & $512.79$ & $1.15$ & $6.65$ \\ 
Set C$_{\rm R_{IB}}$ ($\tau$-decay) & $0.437$ & $0.817$ & $5.458$ & $127.40$ & $492.51$ & $502.83$ & $0.19$ & $6.28$ \\ 
Set D (all combined) & $0.421$ & $0.655$ & $2.515$ & $132.69$ & $504.76$ & $515.30$ & $0.42$ & $3.46$ \\ 
$\lambda_{\text{spacelike}} = 0$ (Set A) & $0.436$ & $0.786$ & $4.486$ & $132.67$& $496.12$ & $505.68$ & $0.19$ & $0.60$ \\ 
$\lambda_{\text{timelike}} = 0$ (Set A) & $0.392$ & $2.075$ & $31.560$ & $17.72$ & $20.18$ & $20.83$ & $55.99$ & $116.20$ \\ \hline
\multicolumn{9}{l}{\textbf{Physics-loss ablation (trained on Set A)}} \\ \hline
Set A baseline & $\mathbf{0.435}$ & $\mathbf{0.748}$ & $\mathbf{4.01}$ & $\mathbf{130.66}$ & $\mathbf{496.37}$ & $\mathbf{506.48}$ & $\mathbf{0.25}$ & $\mathbf{1.72}$  \\ 
$\lambda_{\text{analyticity}} = 0$ & $0.431$ & $0.700$ & $2.956$ & $132.76$ & $496.07$ & $506.48$ & $0.90$ & $2.47$ \\ 
$\lambda_{\text{dispersion}} = 0$ & $0.448$ & $0.836$ & $5.464$ & $130.65$ & $498.28$ & $508.28$ & $0.36$ & $2.91$ \\ 
$\lambda_{\text{moment}} = 0$ & $0.436$ & $0.705$ & $12.701$ & $130.77$ & $497.37$ & $507.33$ & $0.19$ &$1.90$\\ 
$\lambda_{\text{positivity}} = 0$ & $0.436$ & $0.758$ & $4.053$ & $130.04$ & $497.36$ & $507.54$ & $0.34$ & $2.72$ \\ 
$\lambda_{\text{Watson}} = 0$ & $0.546$ & $1.337$ & $11.188$ & $135.01$ & $502.78$ & $512.73$ & $7.70$ & $11.45$ \\ 
$\lambda_{\text{monotonicity}} = 0$ & $0.438$ & $0.759$ & $4.119$ & $129.80$ & $496.68$ & $506.67$ & $0.77$ & $2.20$ \\ 
$\lambda_{\text{pQCD}} = 0$ & $0.435$ & $0.745$ & $3.986$ & $130.82$ & $495.01$ & $505.11$ &$0.28$ & $3.75$ \\ 
$\lambda_{\text{t-asymptotic}} = 0$ & $0.431$ & $0.729$ & $3.746$ & $130.49$ & $495.46$ & $505.76$ & $0.42$ & $5.19$ \\ 
%$\lambda_{\text{mixing}} = 0$ & $0.462$ & $0.922$ & $6.842$ & $129.59$ & $499.40$ & $509.29$ & $1.04$ & $0.26$ \\ 
$\lambda_{phys} = 0$ & $0.445$ & $-0.640$ & $-57.210$ & $131.74$ & $498.50$ & $509.27$ & $43.60$ & $57.20$ \\ 
\hline
Pad\'e approximation & $0.453$ & $0.834$ & $4.758$ & $130.14$ & $492.52$ & $502.30$ & $0.48$ & $3.33$ \\
\hline\hline
\end{tabular*}} 
\end{table*}
%%%%%%%%%%%%%%%%%%%%%%%%%%%%%%%%%%%%%%%%%%%%%%%%%%%%%%%

%%%%%%%%%%%%%%%%%%%%%%%%%%%%%%%%%%%%%%%%%%%%%%%%%%%%%%%
 \begin{figure*}
    \captionsetup[subfigure]{labelformat=empty}
    \centering
    \subfloat[{\bf (a)} Set C$_{\rm raw}$]{\includegraphics[width=0.49\textwidth]{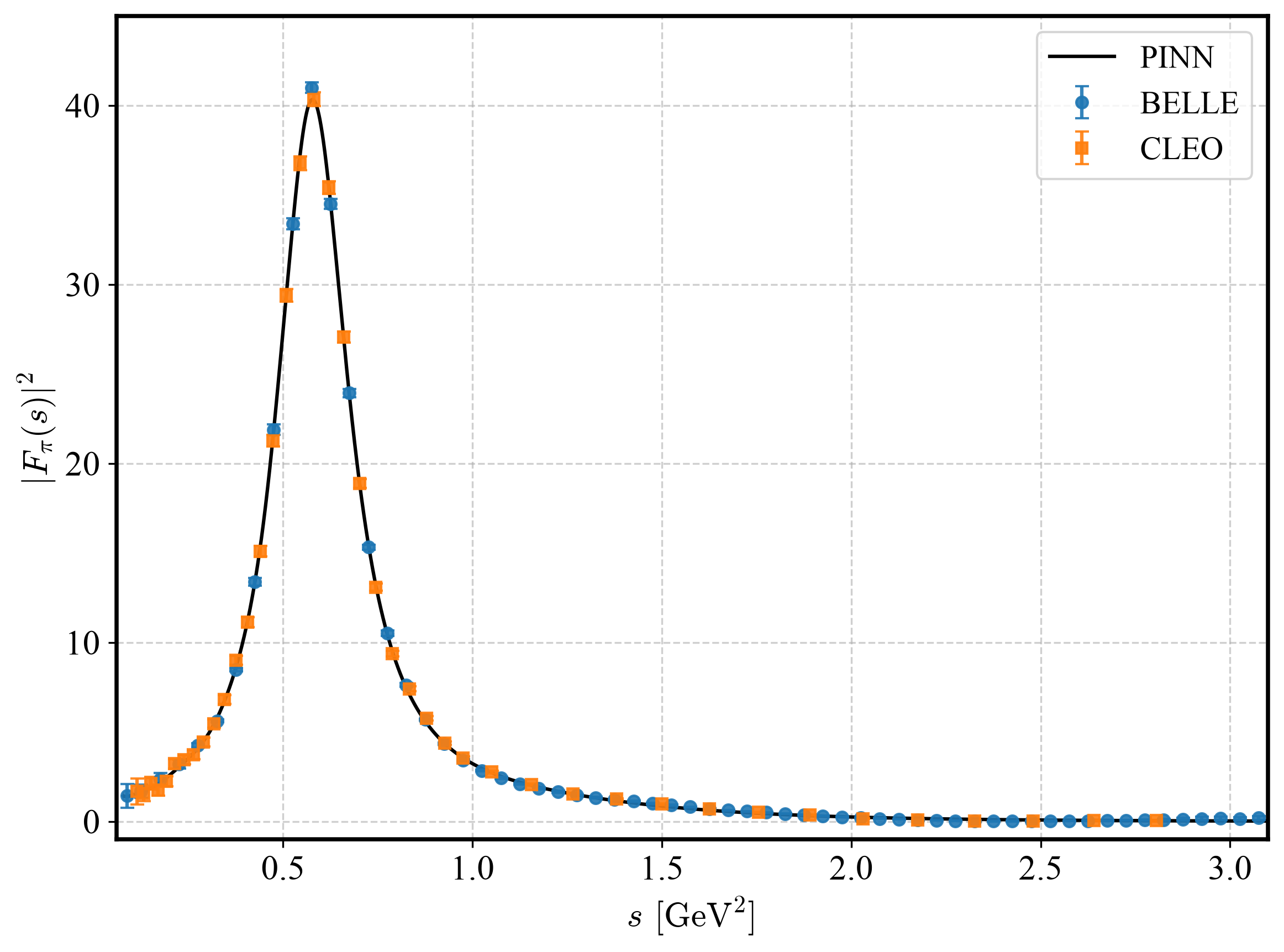}\label{fig:raw_tau}}\hfill
    \subfloat[{\bf (b)} Set C$_{R_{\rm IB}}$]
    {\includegraphics[width=0.49\textwidth]{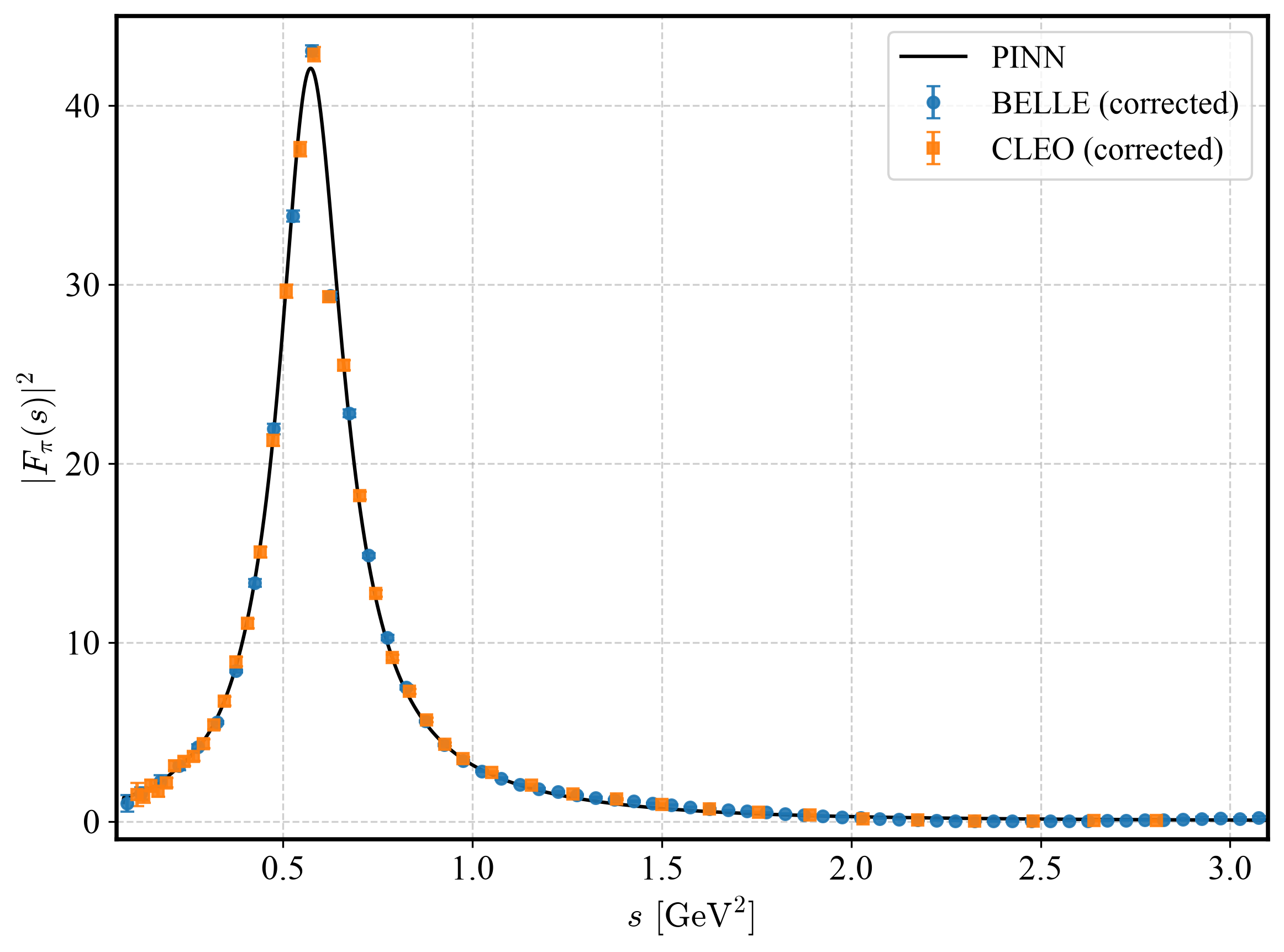}\label{fig:epem_like}}
    \caption{Reconstruction of $|F_{\pi}(s)|^2$ using $\tau$-decay data in the timelike region under two distinct treatment schemes: (a) Set $C_{\rm raw}$, where raw $\tau$-decay data are directly fitted while isolating the pure $I=1$ isovector form factor $F_{\pi}^{I=1}(s)$ by setting the $\rho-\omega$ mixing parameter to zero ($\alpha_{\rho-\omega} = 0$); and (b) Set $C_{R_{\rm IB}}$, where raw $\tau$ data are explicitly pre-corrected for isospin-breaking effects using the point-by-point $R_{IB}(s)$ factor before training. Both panels display the resulting PINN predictions overlaid against experimental data points from BELLE and CLEO. \label{fig:timeliketau}}
\end{figure*}
%%%%%%%%%%%%%%%%%%%%%%%%%%%%%%%%%%%%%%%%%%%%%%%%%%%%%%%

%%%%%%%%%%%%%%%%%%%%%%%%%%%%%%%%%%%%%%%%%%%%%%%%%%%%%%%
\subsection{Hyperparameter loss weighting and structural ablation}\label{sec:ablation}
\noindent
We probe the sensitivity of the loss functionals through ablation ($\lambda_i = 0$ vs. $\lambda_i \neq 0$). By selectively setting individual weights to zero, we isolate and test the structural necessity of each theoretical principle (such as Watson’s theorem, pQCD asymptotics, or dispersion relations) by observing their effect on various physical quantities (see Table~\ref{tab:result_abalation}). Physical observables such as $a_\mu^{\pi\pi}$ do not directly enter the loss functional but instead emerge a posteriori as outcomes evaluated via dispersion integrals over $F_\pi(s)$; nonetheless, as the results below show, they remain highly sensitive to which constraints and datasets are active during training. \medskip

\noindent{\bf Significance of first-principles learning:} Our results indicate that data-driven constraints alone guarantee only localised numerical interpolation, whereas the global uniqueness and structural integrity of the pion form factor rely on the physics losses. The numbers in Table~\ref{tab:result_abalation} reveal a clear hierarchy among the losses based on their role in shaping $F_\pi(s)$. Watson's elastic unitarity theorem emerges as the single most vital physics loss in the framework. Removing this constraint ($\lambda_{\text{Watson}} = 0$) destabilises the network: $\langle r_\pi^2 \rangle$, $\langle r_\pi^4 \rangle$, and $\langle r_\pi^6 \rangle$ inflate by about $25\%$, $80\%$, and $179\%$, respectively, and the phase shifts show marked deviations of $7.70^\circ$ at $s_0$ and $11.45^\circ$ at $s_1$. Watson's theorem ties the phase of $F_\pi(s)$ directly to elastic $\pi\pi$ scattering phase shifts below the inelastic threshold; without it, the network loses its directional phase alignment, causing distortions in the imaginary spectrum, and spoils the low-energy slope. 

On the other hand, disabling the Cauchy-Riemann analyticity constraint ($\lambda_{\text{analyticity}} = 0$) or the explicit dispersion relation ($\lambda_{\text{dispersion}} = 0$) causes modest shifts in the HVP integral and low-energy moments, but introduces growing phase-shift errors at higher energies ($\vert{}\Delta\delta_1^1(s_1)\vert{} \approx 2.5^\circ\text{--}2.9^\circ$). Intuitively, global analyticity and Cauchy integrals enforce smooth connectivity across the complex plane; removing them allows local gradient noise to unbind the real and imaginary parts of the form factor above the $\rho(770)$ peak. Similar phase deviations are seen by setting $\lambda_{\text{monotonicity}} = 0$, which requires the P-wave phase shift to increase monotonically with energy ($\partial \delta_1^1 / \partial s > 0$) in the elastic domain. Disabling monotonicity enhances the phase error at higher energies ($\sim2.20^\circ$ at $s_1$). Monotonicity prevents numerical backtracking and unphysical phase oscillations, particularly near resonance thresholds where phase shifts change rapidly.

The moment constraint ($\lambda_{\text{moment}} = 0$) plays a different, more localised role. Omitting this curvature sum rule causes $\langle r_\pi^6\rangle$ to explode to about $12.7$ fm$^6$ (more than triple the baseline value), while leaving $a_\mu^{\pi\pi}$ virtually untouched. This happens because neural networks normally suffer from spectral bias, a tendency to smooth out high-frequency curvature near $s = 0$. The moment loss acts as an essential low-energy anchor penalising unphysical local flattening.

The network output shows only a mild sensitivity towards spectral positivity ($\lambda_{\text{positivity}} = 0$), as the dense experimental data in the elastic resonance region naturally force $\text{Im}\,F_\pi(s) \geqslant 0$. The loss mainly serves as an unphysical-sign regulator that prevents the spectral function from becoming negative in sparse kinematic regions or high-energy tails. Similarly, the asymptotic constraints ($\lambda_{\text{pQCD}} = 0$ and $\lambda_{t\text{-asymptotic}} = 0$) only act as high-energy boundary controls; disabling them allows the network's high-energy tail to drift freely, degrading the scattering phase at $s_1$ up to about $5^\circ$ without affecting the other quantities significantly.

If we switch off all physics constraints ($\lambda_{phys} = 0$), we get a purely data-driven model. However, it suffers a complete physical breakdown even though it easily fits discrete timelike data points by introducing high-frequency numeric oscillations. While these local wrinkles average out in broad spectral integrals like $a_\mu^{\pi\pi}$, taking sequential derivatives magnifies them catastrophically, forcing $\langle r_\pi^4 \rangle$ and $\langle r_\pi^6 \rangle$ to collapse to unphysical negative numbers and destroying the phase alignment completely. Incorporating physics losses is therefore indispensable: they act as non-local regularisers that guarantee holomorphy, enforce unitarity, and yield reliable physical derivatives that unconstrained interpolations cannot achieve.\medskip

\noindent{\bf The tension in the data:} We test the roles of different datasets similarly through ablations. Removing spacelike data ($\lambda_{\text{spacelike}} = 0$) causes minor degradation in the low-energy moments, while removing timelike data ($\lambda_{\text{timelike}} = 0$) causes a catastrophic collapse in $a_\mu^{\pi\pi}$ and a massive phase shift ($\sim116^\circ$), reflecting the dominant role of the timelike dataset near the $\rho(770)$ resonance in fixing the form factor. 

Training on various data subsets exposes the well-known tension across experiments. Training exclusively on CMD-3 data along with the spacelike data (Set B) pushes $a_\mu^{\pi\pi}[2m_\pi, \infty]$ to $ 532.27 \times 10^{-10}$ from our baseline value (Set A: $506.48 \times 10^{-10}$). This correlates with the higher cross-section normalisation reported by CMD-3 near the $\rho(770)$ resonance peak, which also induces a noticeable phase shift ($5.08^\circ$ at $s_1 = (1.15\text{ GeV})^2$) relative to the Roy equation solution. This indicates that the CMD-3 data creates a subtle structural tension with elastic unitarity and dispersion constraints at higher energies, forcing the model to distort its phase slope to accommodate the excess spectral weight. 

We next evaluate the network's sensitivity to $\tau$-decay data via Set C (Fig.~\ref{fig:timeliketau}). In Set $C_{\text{raw}}$, the PINN isolates the pure $I=1$ form factor $F_{\pi}^{I=1}(s)$ natively by setting $\alpha_{\rho-\omega} = 0$ (using the channel switch), bypassing external pre-corrections. Retaining this unsuppressed $I=1$ normalisation yields $a_{\mu}^{\pi\pi}[2m_{\pi},\infty] = 512.79 \times 10^{-10}$, demonstrating that embedding channel-aware parameterisations within the PINN provides a self-consistent alternative to model-dependent point-by-point factors. Conversely, in Set $C_{R_{\text{IB}}}$, the point-by-point $R_{\text{IB}}(s)$ pre-corrections [Eq.~\ref{eq:ribmixing}] disrupt local derivative structures and distort the resonance curvature. Enforcing S-matrix analyticity and dispersion relations causes the physics loss to smooth through these gradient mismatches, slightly underfitting the peak near $s \approx 0.6\text{ GeV}^2$ (Fig.~\ref{fig:epem_like}) and pulling $a_{\mu}^{\pi\pi}$ down to $502.83 \times 10^{-10}$. Due to reduced high-mass precision in $\tau$ data, both Set C configurations exhibit larger phase shift mismatches ($|\Delta\delta_1^1(s_1)| = 6.65^\circ$ for Set $C_{\text{raw}}$ and $6.28^\circ$ for Set $C_{R_{\text{IB}}}$) compared to Set A ($1.72^\circ$), underscoring the reliance on high-energy $e^+e^-$ data for phase consistency above $1\text{ GeV}^2$.

In Set D, which synthesises all $e^+e^-$ and raw $\tau$ data streams via the conditional channel switch, the PINN acts as an analytical mediator. The network integrates the conflicting CMD-3 and $e^+e^-$/$\tau$ normalizations to yield a balanced HVP contribution of $a_{\mu}^{\pi\pi}[2m_{\pi},\infty] = 515.30 \times 10^{-10}$. This upward shift relative to Set A ($506.48 \times 10^{-10}$) reflects how raw $\tau$ normalisation and CMD-3 data jointly pull the global fit toward higher spectral density. Despite the added input tension, Set D maintains strong physical self-consistency, achieving a high-energy phase shift deviation of $|\Delta\delta_{1}^{1}(s_{1})| = 3.46^\circ$. This is an improvement over standalone $\tau$ sets ($> 6^\circ$), demonstrating how $e^+e^-$ high-energy precision stabilises S-matrix analyticity while accommodating raw $\tau$ dynamics natively.

%%%%%%%%%%%%%%%%%%%%%%%%%%%%%%%%%%%%%%%%%%%%%%%%%%%%%%%
\begin{figure*}
    \captionsetup[subfigure]{labelformat=empty}
    \centering
    \subfloat[{\bf (a)} $\left|\,|F^{\rm PINN}_\pi| - |F^{\text{Pad\'e}}_\pi|\,\right|$]{\includegraphics[width=\textwidth]{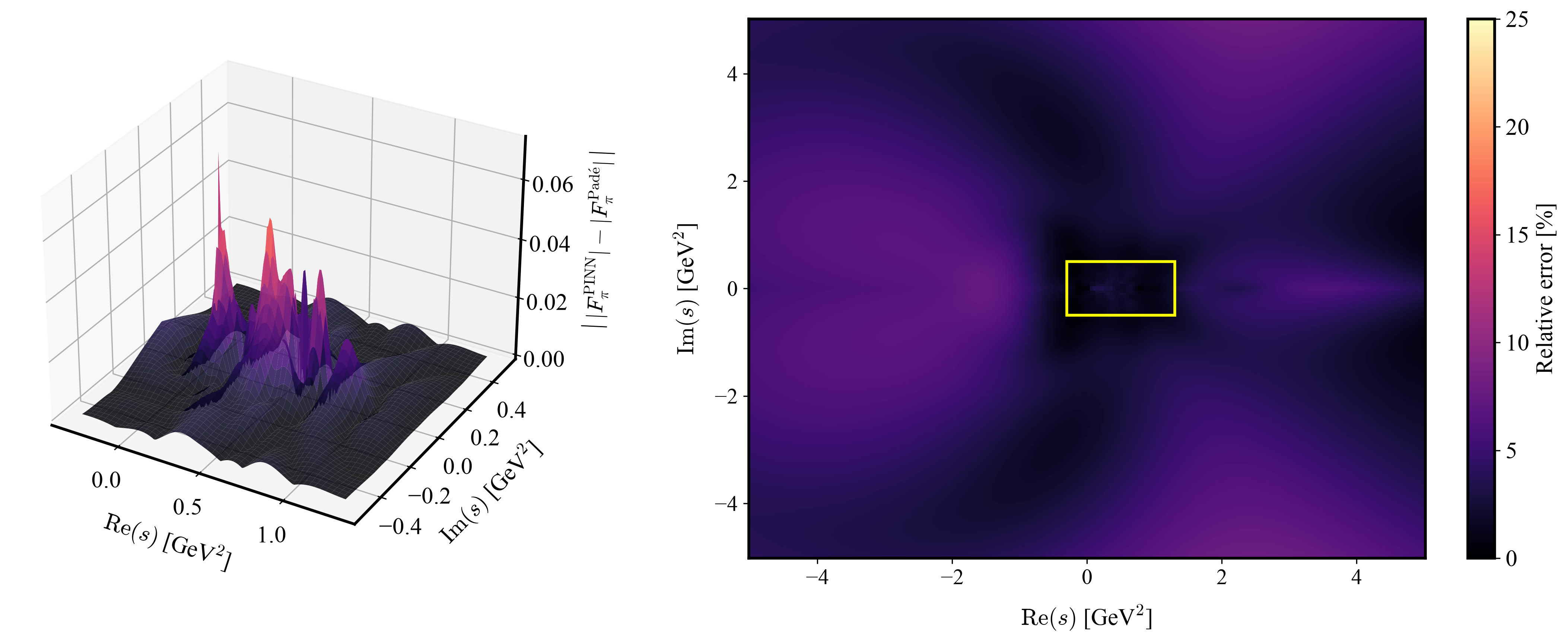}\label{fig:pade_complex_plane_3d1}}\\

        \subfloat[{\bf (b)} $\left|\,\text{Arg}(F^{\rm PINN}_\pi) - \text{Arg}(F^{\text{ Pad\'e}}_\pi)\,\right|$ ]{\includegraphics[width=\textwidth]{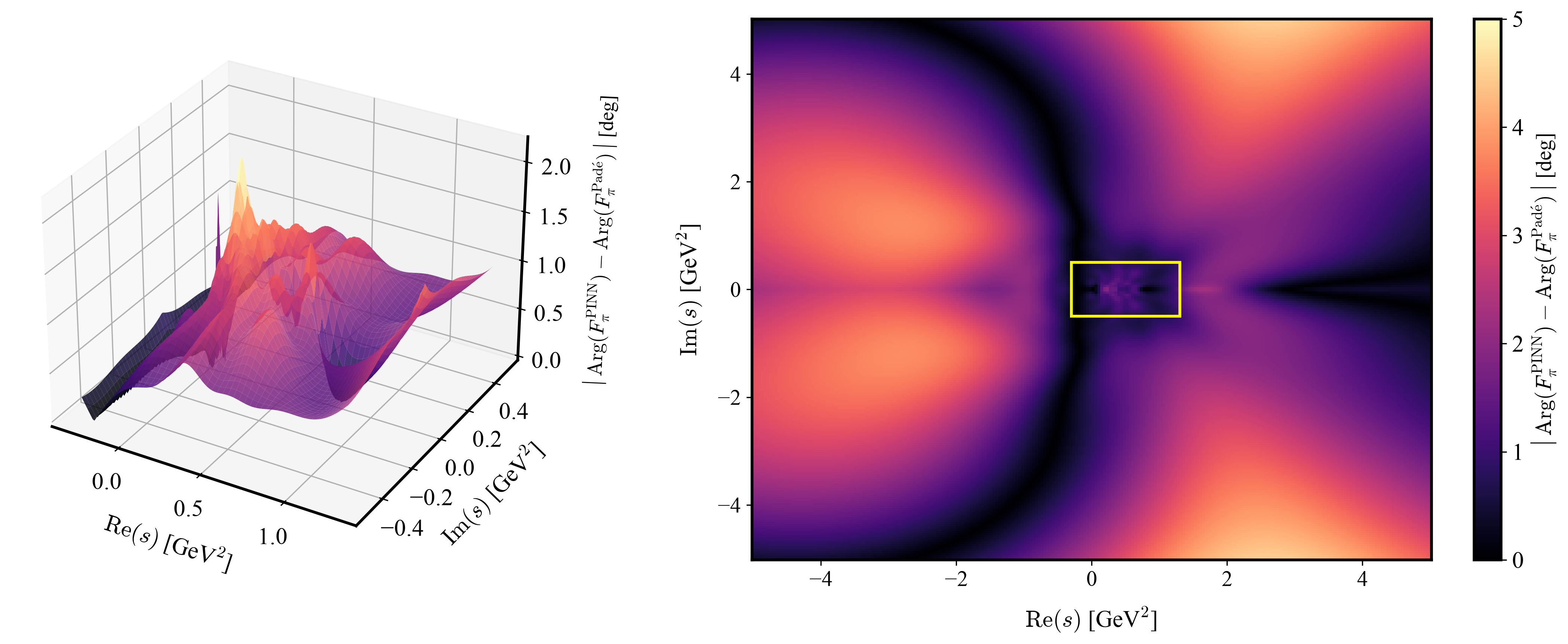}\label{fig:pade_complex_plane_3d2}}
    \caption{Discrepancy between the full PINN prediction and its $[4,4]$ Pad\'e approximation across the complex $s$-plane. (a) Absolute difference in magnitude, $\big||F_\pi^{\rm PINN}(s)| - |F_\pi^{\text{Pad\'e}}(s)|\big|$: the 3D surface (left) zooms into the region marked by the yellow box in the 2D heat map (right), which shows the relative error over the full sampled domain. (b) Corresponding phase difference, $\big|\mathrm{Arg}(F_\pi^{\rm PINN}) - \mathrm{Arg}(F_\pi^{\text{ Pad\'e}})\big|$ in degrees, with the same left/right correspondence. }
\end{figure*}
%%%%%%%%%%%%%%%%%%%%%%%%%%%%%%%%%%%%%%%%%%%%%%%%%%%%%%%
\begin{figure}[!t]
    \centering
    \includegraphics[width=\columnwidth]{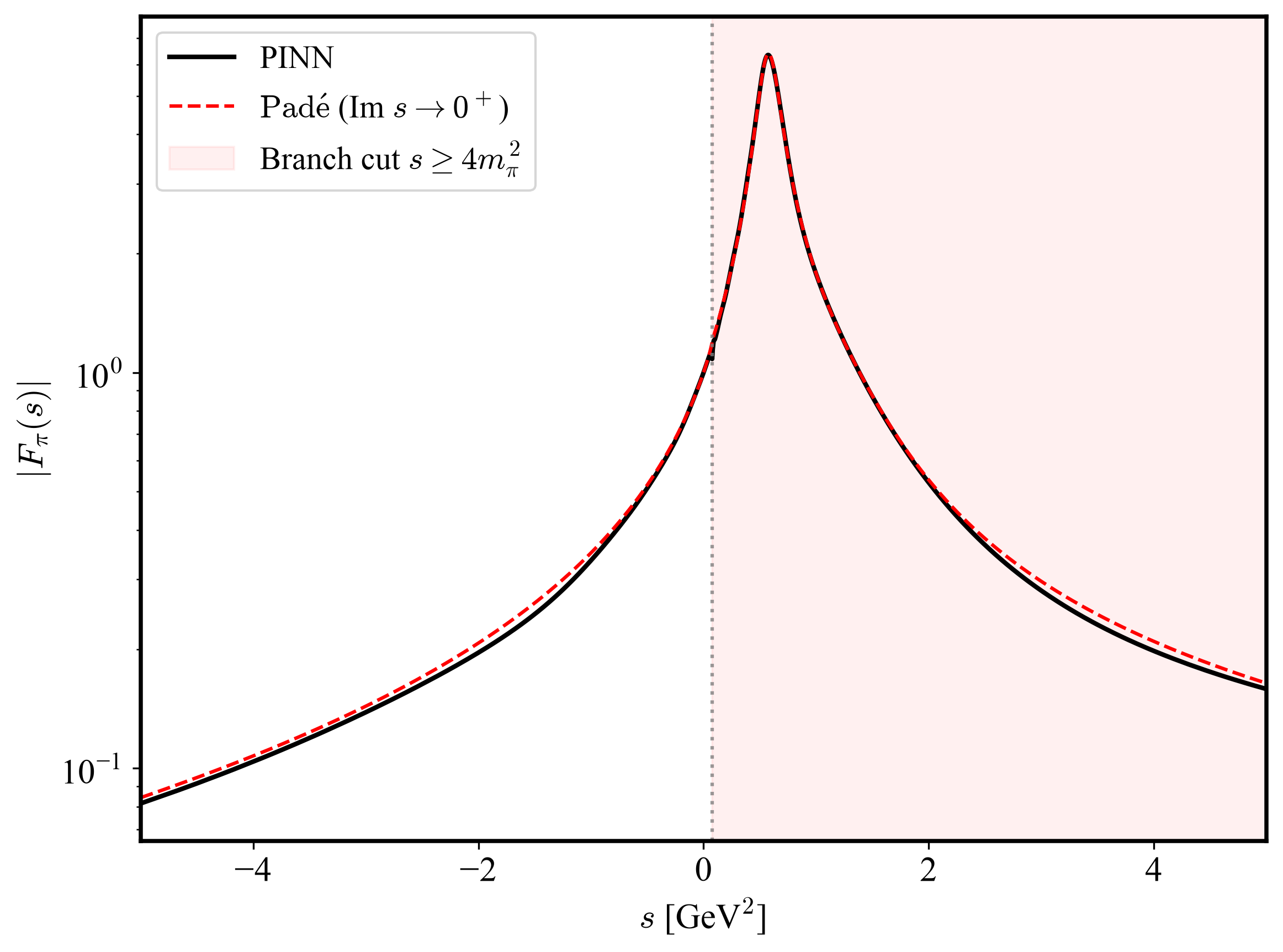}
    \caption{Comparison of $|F_\pi(s)|$ from the full PINN (solid black) and the [4/4] Pad\'e approximant (red dashed) along the physical axis, across the spacelike and timelike regions. The two curves are visually indistinguishable through the $\rho(770)$ peak. \label{fig:paderealline}}   
\end{figure}
%%%%%%%%%%%%%%%%%%%%%%%%%%%%%%%%%%%%%%%%%%%%%%%%%%%%%%%

%%%%%%%%%%%%%%%%%%%%%%%%%%%%%%%%%%%%%%%%%%%%%%%%%%%%%%%
\section{Closed-form approximation of the form factor}\label{sec:pade}
\noindent
Although the PINN gives a numerically stable, model-independent form factor, its output exists only as a trained network, not as an expression that can be quoted or used in other calculations. We find a compact conformal Pad\'e fraction in $z=z(s)$ that closely follows the PINN prediction near the real axis and the $\rho(770)$ pole, useful for phenomenological calculations and further model building,
\begin{equation}\label{eq:pade1}
F_\pi(s) \;=\;
\frac{1 + a_1 z + a_2 z^2 + a_3 z^3 + a_4 z^4}
     {\left(1 - \dfrac{z}{z_\rho}\right)
      \left(1 - \dfrac{z}{z^*_\rho}\right)
      \left(1 + b_1 z + b_2 z^2\right)},
\end{equation}
with best-fit coefficients
\begin{align*}
a_1 &= -3.527647, & a_2 &= 4.755645, \\ 
a_3 &= -2.937709, & a_4 &= 0.707660, \\
b_1 &= -1.554048, & b_2 &= 0.758975, \\ 
z_\rho &= 0.790353  -0.728033\,i.&&
\end{align*}
The complex-conjugate pole pair $z_\rho, \, z_\rho^*$ is factored out explicitly as $(1-z/z_\rho)(1-z/z_\rho^*)$ in the denominator. This structural choice exactly enforces the Schwarz reflection property of the network output (any pole at $z_\rho$ is automatically accompanied by its mirror pole at $z_\rho^*$). The remaining quadratic $(1+b_1z+b_2z^2)$ supplies the additional rational structure needed to match the network beyond the leading resonance behaviour. The fit converges to all four denominator roots lying outside the unit disk $(|z|> 1)$. This means the approximant is automatically holomorphic throughout $|z|\le 1$, consistent with the analyticity enforced by $\mathcal L_{\text{analyticity}}$ during training, without this having been enforced by hand.

We show in Figs.~\ref{fig:pade_complex_plane_3d1},~\ref{fig:pade_complex_plane_3d2} and \ref{fig:paderealline}, how well Eq.~\eqref{eq:pade1} reproduces the full network prediction. Each panel shows the discrepancy between the full PINN prediction and its Pad\'e approximation over the complex $s$-plane as both a 3D surface (left) and a 2D heat map (right). Fig.~\ref{fig:pade_complex_plane_3d1} shows the absolute difference in magnitude, $\big||F_\pi^{\rm PINN}(s)| - |F_\pi^{\text{Pad\'e}}(s)|\big|$, with the heat map colour-scaled to the relative error. Fig.~\ref{fig:pade_complex_plane_3d2} shows the corresponding difference in phase, $\big|\mathrm{Arg}(F_\pi^{\rm PINN}) - \mathrm{Arg}(F_\pi^{\text{Pad\'e}})\big|$, in degrees. Both quantities remain small near the real axis and around the $\rho(770)$ pole, which is the region of interest for most physical calculations; the discrepancy grows to several per cent in magnitude and a few degrees in phase further into the complex plane.

We next show a direct comparison of the form factor along the physical kinematic line in Fig.~\ref{fig:paderealline}. The form factor $|F_\pi(s)|$ from the PINN (solid black) and from the Pad\'e approximant (red dashed) are overlaid on a logarithmic scale across both the spacelike ($s<0$) and timelike ($s>4m_\pi^2$, shaded pink) regions. The two curves are visually indistinguishable through the $\rho(770)$ peak, with a small, growing separation appearing only at the largest timelike $s$ shown, confirming that the closed-form expression faithfully reproduces the network's behaviour where it matters most for the extracted observables.

Because the second sheet is reached from the first via $z_{II}(s) = 1/z_I(s)$, the fitted pole $z_\rho$ maps directly onto the $\rho(770)$ pole position by evaluating the inverse conformal map at $z_{II, \rho} = 1/z_{I,\rho}$,
\begin{align*}
    s_\rho =&\ 4m_\pi^2\left[1 - \left(\frac{1+z_\rho^{-1}}{1-z_\rho^{-1}}\right)^2\right]=\ 0.5737 - 0.1065\,i~\text{GeV}^2.
\end{align*}
This gives us,
\begin{align*}
    \sqrt{s_\rho} =& \left(760.7 - \tfrac{i}{2}140.1\right)~\text{MeV}\\
    \Rightarrow\; m_\rho =&\ 760.7~\text{MeV},\quad \Gamma_\rho = 140.1~\text{MeV}.
\end{align*}
This is an independent extraction, obtained purely from the algebraic pole of the closed-form fit rather than from the Newton–Raphson root search on the full network output (Sec.~\ref{sec:res_phase}). It agrees with the direct extraction to within $\sim$ 0.1\% in mass and $\sim$ 3\% in width. The residual difference reflects the finite order of the Pad\'e truncation rather than any inconsistency in the underlying network, and confirms that a low-order rational function already captures the resonance analytic structure faithfully.

To systematically select the appropriate order for this closed-form surrogate, we refit conformal Pad\'e approximants of orders $[1,1]$ through $[6,6]$ directly to the PINN output and track the mean relative deviation on the timelike cut. The deviation drops rapidly from 124\% at $[1,1]$ to 7.1\% at $[2,2]$ and 3.4\% at $[4,4]$, after which it saturates (3.4\% at $[5,5]$, 3.2\% at $[6,6]$); the spacelike axis and complex plane exhibit a matching plateau (2.8–3.3\% and 2.8–3.1\%, respectively). The extracted resonance parameters likewise stabilise from $[4,4]$ onward. We therefore select the $[4/4]$ order as the minimal approximant that fully saturates the achievable accuracy, balancing mathematical compactness against numerical fidelity.

Evaluating the physical observables directly from the $[4/4]$ Pad\'e approximant yields $a_\mu^{\pi\pi}$ values in close agreement with the continuous PINN baseline (the last line of Table~\ref{tab:result_abalation}), though the higher moments $\langle r_\pi^n\rangle$ and the phase deviation at $s_1$ show somewhat larger residual shifts. This indicates that while the PINN solves the ill-posed inverse problem by mapping noisy, multi-channel experimental data into a smooth, physically constrained representation across the real energy axis, the Pad\'e approximant acts as an analytic tool that compresses this continuous representation into a rational form, granting direct access to unphysical-sheet resonance poles and low-energy derivative expansions. The slight shift in the low-energy moments ($\langle r_{\pi}^{n}\rangle$) reflects the minimal-order truncation redistributing far-UV asymptotic tails, confirming that the Pad\'e fit serves as an analytic bridge from the PINN's continuous representation to exact S-matrix properties, rather than as a solver in its own right.

%%%%%%%%%%%%%%%%%%%%%%%%%%%%%%%%%%%%%%%%%%%%%%%%%%%%%%%
\section{Discussion and future directions}\label{sec:discussion}
\noindent
We have presented a Physics-Informed Neural Network embedded in a conformal $z$-plane that extracts the pion electromagnetic form factor $F_\pi(s)$ across the full spacelike and timelike spectrum directly from first principles, rather than from a chosen functional template. The conformal mapping resolves a structural pathology we identify in standard neural-network training on unbounded kinematic variables. Embedding charge normalisation, analyticity, dispersion relations, and Watson's theorem into the loss functional then yields a form factor free of complex zeros without this being separately imposed, along with precise extractions of the pion charge radius, the $\rho(770)$ pole, and $a_\mu^{\pi\pi}$ (Secs.~\ref{sec:nn}--\ref{sec:results}).

A few points are worth discussing further. First, our default result excludes CMD-3: the network's global holomorphy and dispersion constraints fit CMD-3 less comfortably than BaBar and $\tau$-decay data, evidence bearing on the ongoing tension between these measurements, though the tension is not resolved by this framework alone. Second, the extracted pole width sits $4$--$8\%$ below the PDG world average, a small but consistent deficit likely tied to the elastic-unitarity approximation used to reach the second sheet; resolving it will need an explicit treatment of inelastic channels. Third, the zero-free structure and the CMD-3/BaBar preference both emerge without being directly targeted by any loss term, which is the kind of result most useful for adjudicating between datasets, but it should be read as evidence from one particular architecture and constraint set, not as an independent physical proof.

Extending this framework to multi-channel form factors (such as $\pi\pi \rightarrow K\bar{K}$ and $\pi\pi \rightarrow \omega\pi^0$) would let it capture inelastic effects beyond $1\,\text{GeV}^2$ directly, addressing the width deficit noted above. A unified treatment of momentum-dependent $\rho$--$\omega$ mixing and $\pi^\pm$--$\pi^0$ mass splitting within the network would further reconcile the $\tau$-decay and $e^+e^-$ datasets without relying on external isospin-breaking corrections. Finally, adapting the conformal PINN scheme to the pion transition form factor $F_{\pi^0\gamma^*\gamma^*}$ would provide a model-independent baseline for the dominant remaining theoretical uncertainty in Hadronic Light-by-Light scattering for $(g-2)_\mu$.

%%%%%%%%%%%%%%%%%%%%%%%%%%%%%%%%%%%%%%%%%%%%%%%%%%%%%%%
\section*{Supplementary material}
\noindent
The code, model weights \& plotting scripts are all available at our \href{https://github.com/mitra-subhadip/PINNingPion}{GitHub} repository.
%%%%%%%%%%%%%%%%%%%%%%%%%%%%%%%%%%%%%%%%%%%%%%%%%%%%%%%%%%%%%
\begin{acknowledgments}
\noindent
We thank B. Ananthanarayan for insightful discussions and valuable suggestions during the development of this work.
\end{acknowledgments}
%%%%%%%%%%%%%%%%%%%%%%%%%%%%%%%%%%%%%%%%%%%%%%%%%%%%%%%
\appendix
%%%%%%%%%%%%%%%%%%%%%%%%%%%%%%%%%%%%%%%%%%%%%%%%%%%%%%%
\section{Proofs of theorems}\label{sec:thoremproofs}
\setcounter{theorem}{0}
\noindent
We prove the NTK rank theorem from Section~\ref{sec:nn}.
\begin{theorem}[Full rank of the empirical NTK]
Let $z_1,\dots,z_N$ be distinct points with $|z_i|\leqslant 1$, and let $f(z;\theta)$ be a feedforward neural network with a real analytic, non-polynomial activation function $\sigma$ and a total of $\mathcal N$ hidden neurons across all layers. If $\mathcal N\geqslant N$, then for a generic choice of parameters $\theta$, the parameter Jacobian $J_z\in\mathbb R^{N\times P}$ has row rank $N$, and the empirical NTK $\Theta_z=J_zJ_z^T/P$ is positive definite: $\lambda_{\min}(\Theta_z)>0$, where $P$ is the total number of parameters, satisfying $P\geqslant\mathcal N$ since every neuron carries at least one parameter.
\end{theorem}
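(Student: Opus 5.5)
The plan is to reduce the statement to exhibiting a \emph{single} parameter point $\theta_\ast$ at which $J_z$ has row rank $N$, and then to let real analyticity propagate this to generic $\theta$. Since $\sigma$ is real analytic, the network output $f(z_i;\theta)$, and therefore every entry $J_{ij}=\partial_{\theta_j}f(z_i;\theta)$, is a real-analytic function of $\theta\in\mathbb R^P$. Fix any $N$ columns of $J_z$. The determinant $\Delta(\theta)$ of the resulting $N\times N$ minor is then real analytic in $\theta$.

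A real-analytic function on the connected set $\mathbb R^P$ either vanishes identically or has a zero set that is closed, nowhere dense and of Lebesgue measure zero. So if $\Delta(\theta_\ast)\neq0$ for one $\theta_\ast$, then $\operatorname{rank}J_z=N$ on an open dense full-measure set of parameters. On that set $\Theta_z=J_zJ_z^T/P$ is a Gram matrix of $N$ linearly independent rows, so it is positive definite and $\lambda_{\min}(\Theta_z)>0$. Everything therefore rests on the construction of $\theta_\ast$.

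For the witness, I first reduce the input to one real variable. Write each $z_i$ as the vector $(\operatorname{Re}z_i,\operatorname{Im}z_i)\in\mathbb R^2$, together with the fixed channel coordinate. Because the $z_i$ are distinct, a generic direction $w$ makes the scalars $t_i=w\cdot z_i$ pairwise distinct. The Schwarz folding $|\operatorname{Im}z|$ needs a word here: if it identifies two conjugate points, it also identifies their rows up to the sign flip of the imaginary output. I would therefore state the result for the distinct folded inputs, or treat the two real output components as separate rows.

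Next I invoke the one-dimensional lemma. For analytic non-polynomial $\sigma$ and distinct $t_1,\dots,t_N$, the matrix $M_{ik}=\sigma(a_kt_i+b_k)$ is nonsingular for generic $(a_k,b_k)$. To prove it, suppose instead that some row combination $\sum_i c_i\sigma(at_i+b)$ vanishes for all $(a,b)$. Differentiating $n$ times in $a$ at $a=0$ gives $\sigma^{(n)}(b)\sum_i c_it_i^n=0$. Non-polynomiality supplies, for each $n$, some $b$ with $\sigma^{(n)}(b)\neq0$, so $\sum_i c_it_i^n=0$ for all $n<N$. The Vandermonde matrix of the distinct $t_i$ then forces $c=0$. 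Hence the $N$ functions $t\mapsto\sigma(at+b)$, sampled at the $t_i$, span $\mathbb R^N$, and a generic choice of $N$ pairs $(a_k,b_k)$ gives a nonsingular $M$.

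The hard step is embedding this into a deep network whose $\mathcal N\geqslant N$ neurons are spread across layers rather than sitting in one wide layer. My plan is to assign to each selected neuron $k$ a column of $J_z$ of the form $\partial f/\partial(\text{outgoing weight of }k)=h_k(z_i)\,D$. Here $h_k$ is the neuron's activation and $D$ is a downstream factor, which I choose to be the same nonzero constant for every $k$.

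To get this, I set the parameters so that every neuron in layers $\ell\geqslant2$ reads only from one designated predecessor, through small weights around a bias $c$ with $\sigma'(c)\neq0$. The chain then behaves as an affine-composed $\sigma$ of a first-layer pre-activation, and each selected neuron computes some $\tilde\sigma_k(a_kt+b_k)$. Each $\tilde\sigma_k$ is a composition of $\sigma$ with affine maps, so it is again analytic and non-polynomial. The Vandermonde argument above, run with derivatives $\tilde\sigma_k^{(n)}$, should still go through.

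What I expect to be delicate is the mixture of different $\tilde\sigma_k$. The derivative sequences then differ by $k$, and the clean factorisation $\sigma^{(n)}(b)\sum_ic_it_i^n$ is lost. If this fails, my fallback is to perturb from the single-wide-layer case, using analyticity again. That is, I would show the minor is not identically zero by a limit in which the depth-induced compositions tend to a common $\sigma$ up to rescaling.
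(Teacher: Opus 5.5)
Most of your proposal is sound, but the deep-network step has a genuine gap.

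\textbf{What is correct.} Your reduction to a single witness $\theta_\ast$ through analytic minors is right, as are the Gram-matrix step and your derivative-at-$a=0$/Vandermonde proof of the one-layer lemma. These match the paper's opening and base case; the paper simply cites Pinkus for the interpolation fact. Your projection $t_i=w\cdot z_i$ also treats the real multi-dimensional input more carefully than the paper's $a_kz+b_k$, and your Schwarz-folding caveat is fair.

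\textbf{The gap.} The deep step is what the theorem actually needs. $\mathcal N\geqslant N$ counts neurons across all layers, so typically $n_1<N$ (four layers of $128$ against $N$ up to $512$).
\begin{itemize}
\item In your chain construction the only parameters you vary are the first-layer pairs $(a_k,b_k)$. Every deeper neuron has its weight frozen small and its bias near $c$.
\item When $n_1<N$, several selected neurons must hang off the same first-layer root and share its $(a_k,b_k)$. Your lemma gets independence only by varying $(a,b)$, so it cannot separate them.
\item To leading order in $\epsilon$, a chain neuron over root $r$ is affine in $\sigma(a_rt+b_r)$; for a layer-two neuron it is $\sigma(c)+\epsilon\,\sigma'(c)\,\sigma(a_rt+b_r)$. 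Two siblings with the same weight and bias are even identical.
\item Hence any limit of suitably rescaled columns lies in $\operatorname{span}\{\mathbf{1},(\sigma(a_rt_i+b_r))_i: r\leqslant n_1\}$ and has rank at most $n_1+1$. Your fallback limit, in which the compositions tend to a common $\sigma$, is exactly this degenerate point. Continuity of the minor there certifies nothing beyond rank $n_1+1$.
\item The common factor $D$ is not available as stated. For an intermediate neuron, $\partial f/\partial W^{(\ell+1)}_{jk}=h_k(z_i)\,\sigma'(g_j(z_i))\,\partial f/\partial h_j(z_i)$. This depends on $i$ through the successor's pre-activation $g_j$ and carries a layer-dependent power of $\epsilon$. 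You can rescale columns, but you cannot strip off a row factor shared by all of them.
\end{itemize}

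\textbf{The mixture worry is not the obstruction.} Different $\tilde\sigma_k$ are harmless if you choose columns one at a time. Take $c\neq0$ orthogonal to the span built so far. If every achievable column of the next neuron lay in that span, then $\sum_i c_i\tilde\sigma_k(at_i+b)\equiv0$, which your lemma forbids for non-polynomial $\tilde\sigma_k$. This works provided that neuron has private free affine parameters.

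\textbf{The missing idea is where those private parameters live.} As in the paper's induction, a neuron in layer $\ell+1$ must contribute through its \emph{own} incoming weights and bias. The previous-layer outputs $h^{(\ell)}(z_i)$, which are generically distinct, then play the role of your nodes $t_i$. Because your lemma only differentiates at $a=0$ (at a bias where $\sigma^{(n)}\neq0$ for all $n<N$), this even survives in the small-weight regime, provided each neuron's small weight is varied separately rather than frozen.

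\textbf{Protecting the rank already obtained.} The paper handles this by deforming slightly away from a configuration in which layer $\ell+1$ transmits $h^{(\ell)}$, and invoking lower semicontinuity of rank. Its argument is itself only sketched, but it puts the free parameters in the right place. If you want an exact common downstream factor instead, route the differentiated outgoing weights into a readout path of neurons disconnected from the active ones at $\theta_\ast$. Their pre-activations, and hence $\sigma'$ along the path, are then constant. The cost is one reserved neuron per layer beyond the first, which the other columns must make up.
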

\begin{proof}
It is enough to show that there exists at least one parameter choice for which $J_z$ has row rank $N$. Since the entries of $J_z$ are analytic functions of the parameters, the vanishing of every $N\times N$ minor defines an analytic variety. If one such minor is not identically zero, then its zero set has an empty interior and measure zero. Hence, full row rank holds generically. Since $\Theta_z=J_zJ_z^T/P$ is positive definite exactly when $J_z$ has full row rank, positive-definiteness of the NTK then also holds generically.

Let us first consider a network with a single hidden layer of $\mathcal N$ neurons, $f(z;\theta)=\sum_{k=1}^{\mathcal N} v_k\sigma(a_kz+b_k)$. If we fix $\{a_k,b_k\}_{k=1}^{\mathcal N}$ at generic values, so that the $\mathcal N$ numbers $a_kz_i+b_k$ are pairwise distinct across neurons, the classical interpolation result for non-polynomial activations~\cite{Pinkus:1999} tells us that, once $\mathcal N\geqslant N$, the outer weights $\{v_k\}$ alone can be chosen so that the feature matrix $\Phi_{ik}=\sigma(a_kz_i+b_k)$ has rank $N$. Since $\partial_{v_k}f(z_i;\theta)=\Phi_{ik}$, the Jacobian $J_z$ contains $\Phi$ as a submatrix, so $J_z$ already has row rank $N$ at this parameter choice, establishing the theorem for a network with a single hidden layer.

To extend this to a network of any depth, the key observation is that adding more layers cannot reduce the freedom available to the network. Once one layer of neurons already gives the network some rank, adding a further layer that does nothing to the output leaves the rank untouched. Since its own neurons are also free to add further independent directions on top, the rank only increases. We now make this precise.

Let us express a network with $L$ hidden layers as $h^{(0)}(z)=z$ and $h^{(l)}(z)=\sigma(W^{(l)}h^{(l-1)}(z)+b^{(l)})$ for $l=1,\dots,L$, with output $f(z;\theta)=\sum_{k=1}^{\mathcal N_L} v_k\, h^{(L)}_k(z)$, where $\mathcal N_L$ is the width of the last hidden layer. Let $\mathcal N_{\leqslant l}$ denote the number of neurons in the first $l$ layers, and let $J_z^{(\leqslant l)}$ be the Jacobian restricted to the parameters of these layers. Suppose that, at some parameter choice, $J_z^{(\leqslant l)}$ has rank $r=\min(\mathcal N_{\leqslant l},N)$. If we now attach layer $l+1$ and tune its weights close to the values that make $h^{(l+1)}(z)$ simply reproduce $h^{(l)}(z)$, the rank already achieved is preserved, since the network at this point is only a small, generic deformation of the one we started with, and rank cannot drop under such a deformation. If $r$ is still less than $N$, we can then apply the single-layer argument above once more, this time to layer $l+1$'s own weights and to the inputs $h^{(l)}(z_i)$, which are generically distinct; this supplies further independent columns, raising the rank up to $\min(\mathcal N_{\leqslant l+1},N)$. Starting from $l=1$ and repeating this step up to $l=L$, we find that a generic choice of $\theta$ gives $J_z$ rank $\min(\mathcal N,N)$, where $\mathcal N=\mathcal N_{\leqslant L}$ is the total neuron count of the network. Once $\mathcal N\geqslant N$, this is exactly the row rank $N$ we set out to prove.
\end{proof}

\noindent
The condition $\mathcal N\geqslant N$ above only makes use of each neuron's outer weight, and is therefore a conservative one: for the sine activation we used, a single neuron in fact carries more than one useful direction. We illustrate this with a simple example.

Let us take a single neuron, $f(z;\theta)=v\sin(az+b)$, and three distinct points $z_1,z_2,z_3$. Its three parameters contribute the columns $\partial_v f=\sin(az_i+b)$, $\partial_b f=v\cos(az_i+b)$, and $\partial_a f=v\,z_i\cos(az_i+b)$. Note that $a$ cannot be zero here, since then $\partial_v f$ and $\partial_b f$ both become constant across $i$, making the resulting $3\times3$ determinant vanish identically. If instead we take $a$ small but non-zero and $b=0$, expanding the determinant in powers of $a$ gives, to leading order,
\begin{equation}
D(a,0) = -\frac{v}{3}\,a^3 \det\!\begin{pmatrix} z_1 & 1 & z_1^3\\ z_2 & 1 & z_2^3\\ z_3 & 1 & z_3^3 \end{pmatrix} + \mathcal O(a^5).
\end{equation}
The determinant on the right is not identically zero, so it can only vanish for special, symmetric choices of the three points. For any other choice, $D(a,0)\neq0$, and the single neuron already supplies rank $3$ from its three parameters. The same mechanism should extend to larger $N$. Our network used $\mathcal N=512$ neurons spread over four hidden layers, comfortably above the size of any collocation batch $N$ used in training, so the conservative bound already covers our case.

\begin{theorem}[Boundedness of Hessian in $\mathcal{Z}$-space]
For $f(z;\theta) = \sum_{k=1}^{K} v_k \sigma(a_k z+b_k)$ with $|z|\leqslant 1$, assume that the activation satisfies,
\begin{equation}
    |\sigma(u)|\leqslant C_0,\ 
    |\sigma'(u)|\leqslant C_1,\ 
    |\sigma''(u)|\leqslant C_2,
\end{equation}
for all $u\in\mathbb{R}$. Assume further that along the optimisation trajectory, $|v_k|\leqslant V$, and that the data labels are bounded $|y_i|\leqslant Y$. Then the Hessian of the MSE loss is uniformly bounded $\|H\|_2 \leq C_H < \infty$, where $C_H$ is independent of the original physical coordinate $s$.
\end{theorem}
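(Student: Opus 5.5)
The plan is to bound every entry of the Hessian in Eq.~\eqref{eq:hessian} by a constant that depends only on $C_0,C_1,C_2,V,Y,K$ and the bound $|z|\leqslant 1$, and then pass from entrywise bounds to the spectral norm. I split $H$ as in Eq.~\eqref{eq:hessian} into the Gauss--Newton part $J^TJ/N$ and the residual-curvature part $\frac1N\sum_i r_i\,\partial^2_\theta f(z_i;\theta)$, and bound each separately.

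\textbf{Step 1 (first derivatives and residuals).} For each neuron I write out the parameter derivatives explicitly:
\begin{align*}
\partial_{v_k}f&=\sigma(a_kz+b_k),\\
\partial_{b_k}f&=v_k\sigma'(a_kz+b_k),\\
\partial_{a_k}f&=v_kz\,\sigma'(a_kz+b_k).
\end{align*}
With $|z|\leqslant1$ and $|v_k|\leqslant V$, these give
\begin{equation*}
|\partial_{\theta_j}f|\leqslant \max(C_0,VC_1)\equiv G .
\end{equation*}
The key observation is that the inner weight $a_k$ never appears as a multiplicative prefactor. It enters only inside the argument of $\sigma,\sigma',\sigma''$, which are uniformly bounded, so no bound on $a_k$ or $b_k$ is needed. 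The residuals are likewise controlled:
\begin{equation*}
|r_i|\leqslant |f|+|y_i|\leqslant KVC_0+Y\equiv R .
\end{equation*}

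\textbf{Step 2 (second derivatives).} Mixed derivatives between different neurons vanish, so $\partial^2_\theta f$ is block diagonal with $3\times3$ blocks. The block entries are:
\begin{align*}
\partial_{v_k}^2f&=0, & \partial_{v_k}\partial_{b_k}f&=\sigma', & \partial_{v_k}\partial_{a_k}f&=z\sigma',\\
\partial_{b_k}^2f&=v_k\sigma'', & \partial_{a_k}\partial_{b_k}f&=v_kz\sigma'', & \partial_{a_k}^2f&=v_kz^2\sigma''.
\end{align*}
Each is bounded by $\max(C_1,VC_2)\equiv B$. Again this uses only $|z|\leqslant1$ and never the physical coordinate $s$. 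This is exactly where the conformal map enters: in $\mathcal S$-space the factors $z,z^2$ would be replaced by $s,s^2$, which are unbounded.

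\textbf{Step 3 (assembling the norm bound).} For the Gauss--Newton term I use
\begin{equation*}
\|J^TJ\|_2=\|J\|_2^2\leqslant\|J\|_F^2\leqslant NPG^2 ,
\end{equation*}
so $\|J^TJ/N\|_2\leqslant PG^2$. For the curvature term, block diagonality means $\|\partial^2_\theta f\|_2$ is the maximum over blocks of the norm of a $3\times3$ matrix with entries bounded by $B$, hence at most $3B$. Averaging over $i$ and using $|r_i|\leqslant R$ gives a bound of $3RB$. Combining the two,
\begin{equation*}
\|H\|_2\leqslant C_H\equiv 3K\max(C_0,VC_1)^2+3(KVC_0+Y)\max(C_1,VC_2),
\end{equation*}
with $P=3K$. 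This constant depends only on $K,V,Y$ and the $C_i$, and in particular is independent of $s$ and of $N$.

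\textbf{Main obstacle.} The delicate point is not the algebra but the domain of the activation bounds. The hypotheses $|\sigma^{(j)}(u)|\leqslant C_j$ are stated for real $u$, whereas $z$ is complex. For a sine activation, a complex argument would make $\sigma$ grow exponentially in $\operatorname{Im}(a_kz)$. I plan to handle this the way the architecture actually does: the inputs are the real coordinates $\operatorname{Re}z$ and $\operatorname{Im}z$, both lying in $[-1,1]$. I would therefore read $a_kz$ as $a_k^{(1)}\operatorname{Re}z+a_k^{(2)}\operatorname{Im}z$ with real weights, which keeps the pre-activation real. This only adds one more bounded inner-weight column per neuron and changes the constants, giving $P=4K$ and factors $4$ in place of $3$, but not the structure of the argument.
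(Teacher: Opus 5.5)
Your argument is correct and is essentially the paper's proof: it uses the same split of $H$ into the Gauss--Newton and residual-curvature parts, the same first-derivative bounds $C_0$ and $VC_1$, second-derivative bounds depending only on $V,C_1,C_2$, and the residual bound $KVC_0+Y$, assembled into $\|H\|_2\leqslant M_1^2+R_{\max}M_2$. Your explicit constants and block-diagonal estimate tighten points the paper leaves implicit. In particular, reading the pre-activation as the real quantity $a_k^{(1)}\operatorname{Re}z+a_k^{(2)}\operatorname{Im}z$ makes the real-line bounds on $\sigma^{(j)}$ actually applicable, whereas the paper applies them to the complex argument $a_kz+b_k$ without comment and loosely invokes boundedness of all parameters, when only $|v_k|\leqslant V$ is needed.
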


\begin{proof}
Following Eq.~\eqref{eq:hessian}, the Hessian is
\begin{equation}
    H = \frac{1}{N} \sum_{i=1}^{N} \nabla_\theta f(z_i;\theta) \nabla_\theta f(z_i;\theta)^T + \frac{1}{N}
    \sum_{i=1}^{N} r_i(\theta)\nabla_\theta^2 f(z_i;\theta).
\end{equation}
We bound both terms separately. First, using $|z|\leqslant 1$ and the boundedness assumptions on the parameter derivatives, we get,
\begin{align}
    \left|\partial_{v_k} f\right| \leqslant C_0,\ 
    \left|\partial_{a_k} f\right| \leqslant VC_1,\ \text{and}\ 
    \left|\partial_{b_k} f\right| \leqslant VC_1.
\end{align}
Therefore there exists a constant $M_1<\infty$ such that
\begin{equation}
    \|\nabla_\theta f(z_i;\theta)\|_2 \leqslant M_1,
\end{equation}
for every data point $i$. Now, because $|\sigma''|\leqslant C_2$ and the input and the parameters are bounded, every second parameter derivative of $f$ is bounded by a constant depending only on $V, C_1, C_2$. Hence, there exists $M_2<\infty$ such that
\begin{equation}
    \|\nabla_\theta^2 f(z_i;\theta)\|_2 \leqslant M_2.
\end{equation}
It remains to bound the residuals. Since
\begin{equation}
    |f(z_i;\theta)| \leqslant \sum_{k=1}^{K} |v_k|\,|\sigma(a_k z_i+b_k)| \leqslant KVC_0,
\end{equation}
and $|y_i|\leqslant Y$, we have
\begin{equation}
    |r_i(\theta)| \leqslant KVC_0+Y\equiv R_{\max}.
\end{equation}
Therefore,
\begin{align}
    \|H\|_2 \leqslant&\ \frac{1}{N} \sum_{i=1}^{N} \|\nabla_\theta f(z_i;\theta)\|_2^2 + \frac{1}{N}
    \sum_{i=1}^{N} |r_i(\theta)| \|\nabla_\theta^2 f(z_i;\theta)\|_2\nonumber\\
    \leqslant&\ M_1^2 + R_{\max}M_2 \equiv C_H < \infty.
\end{align}
The bound depends on the network-size and weight bounds, but not on the original physical coordinate $s$.
\end{proof}

\section{The NLO pQCD constraint for the asymptotic loss functional}\label{sec:nlo}
\noindent
To anchor the PINN in the deep Euclidean region ($Q^2 = -s \rightarrow \infty$), we enforce the NLO pQCD prediction for the pion electromagnetic form factor. Following Melić et al.~\cite{Melic:1998qr,*Melic:1999mx}, the NLO form factor can be written as:
\begin{align}
    F_{\pi}(Q^2, \mu_R^2, \mu_F^2) =&\ F_{\pi}^{(0)}(Q^2, \mu_R^2, \mu_F^2) + F_{\pi}^{(1a)}(Q^2, \mu_R^2, \mu_F^2)\nonumber\\
    &\ + F_{\pi}^{(1b)}(Q^2, \mu_R^2, \mu_F^2),    
\end{align}
where $\mu_F$ is the factorisation scale, $F_{\pi}^{(0)}$ is the leading-order (LO) form factor, $F_{\pi}^{(1a)}$, the NLO correction to the hard-scattering amplitude, and $F_{\pi}^{(1b)}$, the NLO evolutional correction to the pion DA. To extract a closed-form analytic constraint for our loss functional, we choose the asymptotic DA~\cite{Melic:1998qr,*Melic:1999mx} to model the pion, $\phi_{as}(x) = 6x(1-x)$, where $x$ is the longitudinal momentum fraction carried by the quark. For the asymptotic DA, the $\mu_F$ dependence disappears and the NLO evolutional correction $F_{\pi}^{(1b)}$ becomes negligible ($\sim 1\%$). We can thus safely omit it when establishing a deep-spacelike boundary. Under this asymptotic assumption, the LO and the net NLO contributions simplify to:
\begin{align}
Q^2 F_{\pi}^{(0)}(Q^2, \mu_R^2) =&\ 8\pi f_{\pi}^2 \alpha_S(\mu_R^2),\\
Q^2 F_{\pi}^{(1)}(Q^2, \mu_R^2) \approx&\ 8 f_{\pi}^2 \alpha_S^2(\mu_R^2) \left[ \frac{\beta_0}{4} \ln\left(\frac{\mu_R^2}{Q^2}\right) + 6.83 - \frac{n_f}{12} \right],
\end{align}
where $n_f$ denotes the number of active flavours. Meli\'c et al. set $n_f=3$ to estimate their final results. However, since our spacelike data extends to $Q^2$ values above the charm and bottom thresholds, we adopt the five-flavour scheme, setting $n_f = 5$ and $\beta_0 = 11 - 2n_f/3 = 23/3$, and evaluate $\alpha_S(\mu_R^2)$ using the corresponding five-flavour renormalisation group evolution. Because the $\mu_F$ dependence vanishes for the asymptotic DA, this scheme change is implemented directly with no additional corrections.

Summing the LO and NLO hard-scattering contributions, we obtain the scale-dependent NLO prediction:
\begin{align}
F_{\pi}(Q^2, \mu_R^2) \approx \frac{8\pi f_{\pi}^2 \alpha_S(\mu_R^2)}{Q^2}& \Bigg[ 1 + \frac{\alpha_S(\mu_R^2)}{\pi} \Big( \frac{\beta_0}{4} \ln\left(\frac{\mu_R^2}{Q^2}\right) \nonumber\\
&\ + 6.41 \Big) \Bigg].
\end{align}
As pointed out by Meli\'c et al., setting $\mu_R^2 = Q^2$ is physically unsuited because the essential virtualities of the particles in the parton subprocess are considerably smaller than the overall momentum transfer $Q^2$. Following the Brodsky-Lepage-Mackenzie procedure adopted by Meli\'c et al., setting $\mu_R^2$ to the characteristic virtualities of the parton subprocess yields the physical renormalisation scale $\mu_R^2=\mu_{\rm BLM}^2 \approx Q^2/21$ for the asymptotic distribution amplitude. 

%%%%%%%%%%%%%%%%%%%%%%%%%%%%%%%%%%%%%%%%%%%%%%%%%%%%%%%%%%%%%%%%%%%%%%%%%%%%
\bibliography{reference}
%%%%%%%%%%%%%%%%%%%%%%%%%%%%%%%%%%%%%%%%%%%%%%%%%%%%%%%%%%%%%%%%%%%%%%%%%%%%
\end{document}